\documentclass[11pt]{article}
\usepackage[utf8]{inputenc}
\usepackage[T1]{fontenc}
\usepackage{lmodern}
\usepackage{amsmath,amssymb,amsfonts,amsthm,mathtools,bm}
\usepackage{bbm}
\AtBeginDocument{%
  \begingroup\usefont{U}{bbm}{m}{n}\endgroup
  \DeclareFontShape{U}{bbm}{m}{n}{%
    <5> bbm5 <6> bbm6 <7> bbm7 <8> bbm8 <9> bbm9 <10> bbm10 <12> bbm12
    <5.5> bbm5 <10.95> bbm10 <14.4> bbm12
    <17.28><20.74><24.88> bbm17}{}}
\usepackage{enumitem}
\usepackage{array}
\usepackage{booktabs}
\usepackage{longtable}
\usepackage{enumitem}
\usepackage{booktabs}
\usepackage{algorithm}
\usepackage{algpseudocode}
\usepackage{natbib}
\usepackage[colorlinks=true,linkcolor=blue,citecolor=blue,urlcolor=blue]{hyperref}
\usepackage{xcolor}
\usepackage{graphicx}
\usepackage{float}
\graphicspath{{./}}
\usepackage{adjustbox}
\usepackage{multirow}
\usepackage[margin=1.0in]{geometry}
\newtheorem{theorem}{Theorem}
\newtheorem{lemma}{Lemma}
\newtheorem{proposition}{Proposition}
\newtheorem{corollary}{Corollary}
\newtheorem{assumption}{Assumption}

\newtheorem{remark}{Remark}
\newtheorem{example}{Example}

\newcommand{\E}{\mathbb E}

\newcommand{\ind}[1]{\mathbbm 1\{#1\}}

\newcommand{\Cov}{\mathbb C\mathrm{ov}}

\newcommand{\calK}{\mathcal K}

\newcommand{\calU}{\mathcal U}
\newcommand{\calX}{\mathcal X}

\newcommand{\Prob}{\mathbb{P}}
\newcommand{\Pn}{\mathbb{P}_n}

\newcommand{\one}{\mathbbm{1}}
\newcommand{\indep}{\perp\!\!\!\perp}
\newcommand{\X}{\bm{X}}
\newcommand{\calO}{\mathcal{O}}
\newcommand{\zbar}{\bar{z}}

\usepackage{setspace}
\usepackage{bibspacing}
\newcommand{\blind}{0}

\begin{document}

\def\spacingset#1{\renewcommand{\baselinestretch}%
{#1}\small\normalsize} \spacingset{1}


\if0\blind
{
  \title{Sharp bounds for population and principal effects under latent sequential ignorability} 
  \author{Yuki Ohnishi\\
    Department of Statistics,  
    University of Florida\\
    yuki.ohnishi@ufl.edu}
  \maketitle
} \fi

\if1\blind
{
  \bigskip
  \bigskip
  \bigskip
  \begin{center}
    {\LARGE\bf }
\end{center}
  \medskip
} \fi


\begin{abstract}
Treatment decisions are often sequential and depend on information absent from the recorded history. Latent sequential ignorability models such assignment by assuming that baseline covariates and principal strata, defined by joint potential intermediate responses, account for confounding of later treatment. For two treatment periods with a binary intermediate response, we derive sharp population and principal effect bounds under a specified association between the potential responses and show that the two targets can require different optimizations to obtain sharp bounds. For population effects, optimization over latent treatment probabilities reduces to their feasible endpoints. Stronger selection between response types can be offset by making their outcome distributions more similar, preserving both the observed distribution and population marginal outcome distributions. This replacement can change effects within principal strata, whose sharp bounds may instead require interior treatment probabilities. Restrictions on differences in conditional effects can tighten the bounds without equating outcome levels. For binary outcomes, we give an exact finite algorithm that includes these interior assignments. We also develop inference under smoothness and estimation conditions. We illustrate the bounds in a study of gym attendance after an incentive program.
\end{abstract}

\noindent%
{\it Keywords: Causal inference; Identification regions; Latent confounding; Principal stratification; Sensitivity analysis; Sequential treatment}

\spacingset{1.1} 

\section{Introduction}\label{sec:intro}\label{sec:contributions}\label{sec:related}

Treatment is often assigned sequentially, with later decisions guided by responses to earlier interventions and information that may be absent from the recorded history. For example, a physician's decision to continue medication may depend on adverse effects experienced by the patient but not captured in the study records, while a follow-up marketing offer may depend on private assessments of a customer's purchase intentions. Such information can also predict later health or purchasing outcomes, so adjustment for recorded history alone need not remove confounding \citep{robins1986,robins2000}. 

We ask what can be learned about causal effects when the hidden information has a scientifically interpretable structure. Specifically, latent sequential ignorability (LSI) assumes that baseline covariates, initial treatment and a latent response profile suffice to account for confounding of later treatment \citep{ricciardi2020}. The profile consists of the potential intermediate responses under the two initial treatments and defines a principal stratum \citep{frangakis2002}. Initial treatment remains exchangeable given baseline covariates. The data then identify each response margin, but generally not the association between potential responses. Even with that association specified, unknown treatment probabilities and outcome distributions within strata can prevent point identification of population and principal effects. \citet{ricciardi2020} develop Bayesian inference using additional parametric models. We examine what LSI and the observed data imply without parametric component outcome models.

We study two binary treatment periods with a binary intermediate response. For a specified response coupling, Section~\ref{sec:identification} characterizes the latent outcome means compatible with the full observed outcome distributions and one common later-treatment mechanism. This yields sharp population and principal effect bounds while allowing latent outcome means to differ. Our main result is a contraction that moves latent assignment to a feasible endpoint while preserving the observed law and population marginal outcome distributions. Population bounds therefore require only two assignment choices within each response cell and covariate value, with outcome distributions still optimized. Outcome distributions within principal strata can change, and sharp principal effect bounds may require interior assignments, even when the optimizing assignment is unique.

Restrictions on effect heterogeneity between response types can further sharpen these bounds without equating outcome levels, while preserving the population reduction. For binary outcomes, we give an exact finite search that includes the interior assignments needed for principal bounds. Section~\ref{sec:estimation} develops one-step inference under stated smoothness, stability and estimation conditions for prespecified finite sensitivity grids. The supplement gives a finite search over unspecified response couplings for binary population contrasts without effect restrictions (Section~\ref{supp:unspecified-computation}) and characterizes outcome compatibility when selection remains within a response profile (Section~\ref{app:residual}). Simulations assess estimation and coverage, and a gym attendance study illustrates the bounds.

Our component bounds use established trimming arguments \citep{lee2009,dornguo2023,dornguokallus2025}. The response coupling follows \citet{tong2025}, and principal weighting builds on \citet{ding2017} and \citet{jiang2022}. Longitudinal sensitivity bounds \citep{tan2025} and models combining assignment and outcome restrictions \citep{zhangxutan2026,zhangxutandist2025} address related problems. Here response margins and coupling fix latent prevalence, both outcome arms share one later-treatment mechanism, and restrictions link their conditional effects. Section~\ref{supp:matched-comparisons} and Table~\ref{tab:contribution-comparison} give matched comparisons. Complementary approaches study longitudinal treatment noncompliance \citep{ohnishi2025}, dynamic regimes \citep{park2025}, negative controls \citep{luo2024} and heterogeneous latent means \citep{nguyen2024,wuantonelli2024}. Other work studies longitudinal bias sensitivity \citep{robinsrotnitzky2000,brumback2004,yanglok2018} and distributional sensitivity for a single treatment \citep{mastenpoirier2025}.

\section{Sequential decisions and response profiles}\label{sec:setup}
\subsection{Data, interventions and estimands}
We observe independent copies of $\calO=(\X,Z_1,S,Z_2,Y)$, with baseline covariates $\X$, binary treatments $Z_1,Z_2$, a binary response $S$ measured between them, and a final outcome $Y$. Potential responses are $S(j)$ and potential outcomes are $Y(j,a)$, where $j,a\in\{0,1\}$ index interventions on $Z_1$ and $Z_2$, respectively. The gym study in Section~\ref{sec:application} examines whether attendance persists after an incentive program. There, $Z_1$ is intended assignment to the program or placebo, $S$ records week-1 attendance, $Z_2$ records week-2 attendance, and $Y$ records week-5 attendance.

The intermediate response profile $U=(S(0),S(1))\in\calU=\{0,1\}^2$ defines the principal strata \citep{frangakis2002}. For example, $U=(1,1)$ comprises members who would attend in week 1 under either assignment. These latent types need not be observed by the decision-maker. We impose no monotonicity.

The estimands of interest are the population sequence mean and the principal sequence mean:
\begin{equation}\label{eq:targets28}
 \psi(j,a)=\E\{Y(j,a)\},\qquad
 \psi_u(j,a)=\E\{Y(j,a)\mid U=u\}.
\end{equation}
Their differences answer distinct questions. The contrast $\psi(j,1)-\psi(j,0)$ concerns the population under initial assignment $j$. The principal effect $\tau_u=\psi_u(j,1)-\psi_u(j,0)$ concerns a response type defined before the later treatment. All interventions must specify their treatment versions, and the analysis population must be held fixed across comparisons.

\subsection{Assignment assumptions}\label{sec:assumptions-assignment}
Our analysis uses the following assumptions.
\begin{assumption}[Consistency]\label{ass:sutva}
The recorded response and outcome satisfy $S=S(Z_1)$ and $Y=Y(Z_1,Z_2)$, with well-defined treatment versions and no interference.
\end{assumption}
\begin{assumption}[Latent sequential ignorability]\label{ass:lsi}
Let $\mathcal F=\{U,Y(j,a):(j,a)\in\{0,1\}^2\}$ collect the potential responses and outcomes.
\begin{enumerate}[label=(\alph*),nosep]
\item $Z_1\indep\mathcal F\mid\X$;
\item $Z_2\indep\{Y(j,a):(j,a)\in\{0,1\}^2\}\mid\X,Z_1,U$.
\end{enumerate}
\end{assumption}
Part (a) concerns the initial treatment. Part (b) permits later treatment to depend on the response profile, but assumes no remaining treatment selection within it. This is the assignment model of \citet{ricciardi2020}, abbreviated LSI. Ordinary sequential ignorability (SI) instead requires
\begin{equation}\label{eq:si}
 Z_1\indep\mathcal F\mid\X,\qquad
 Z_2\indep\{Y(j,a):(j,a)\in\{0,1\}^2\}\mid\X,Z_1,S.
 \tag{SI}
\end{equation} Neither condition generally implies the other, because adding an unobserved potential response can change conditional independence. Section~\ref{app:relations} gives the implications and counterexamples. Section~\ref{sec:residual-selection} examines departures from part (b).

\begin{assumption}[Positivity]\label{ass:positivity}
For some $\epsilon>0$, $\epsilon\le\Prob(Z_1=j\mid\X)\le1-\epsilon$ and $\epsilon\le\Prob(Z_2=1\mid\X,Z_1=j,U=u)\le1-\epsilon$ whenever $e_u(\X)=\Prob(U=u\mid\X)>0$.
\end{assumption}
The latent positivity margin $\epsilon$ restricts the assignment law and is distinct from numerical truncation of fitted probabilities.

\subsection{When a response profile accounts for selection}\label{sec:response-sufficiency}
A sufficient structural mechanism illustrates LSI. Suppose a latent disposition $L$ determines the potential responses through $S(j)=s_j(\X,L)$. Later treatment and outcomes satisfy $Z_2=\ind{E_A\le h_{Z_1,U}(\X)}$ and $Y(j,a)=g_{ja}(\X,L,E_Y)$. Here $s_j$ and $g_{ja}$ are general measurable response and outcome functions, and $h_{j,u}$ is a measurable treatment probability function taking values in $[\epsilon,1-\epsilon]$. The variable $E_A$ is uniform on $(0,1)$ and independent of $(L,E_Y)$ given $\X$, and $Z_1$ is independent of $(L,E_Y,E_A)$ given $\X$. This construction satisfies LSI because assignment depends on $L$ only through $U$, even when outcomes depend on variation in $L$ within $U$.

An unmeasured trait $H$ that varies within $(\X,U)$ and affects both $h_{j,U}(\X,H)$ and $g_{ja}(\X,L,H,E_Y)$ can confound later treatment within a stratum. In the gym study, members with the same covariates and early-response profile may differ in motivation or work schedules that affect both week-2 and week-5 attendance. Initial randomization does not remove this residual confounding, so response sufficiency remains a substantive restriction.

Sensitivity parameters distinguish assumptions that the observed data cannot resolve. The coupling parameter $\theta$ in Assumption~\ref{ass:theta} specifies the association between $S(0)$ and $S(1)$, whose joint distribution is unobserved. The margin $\epsilon$ restricts latent treatment probabilities. The effect budget $\gamma$, introduced in Section~\ref{sec:effect-restrictions}, bounds differences in conditional mean effects between response types under LSI. Separately, the allowance $\eta$ in Supplementary Section~\ref{app:residual} bounds conditional mean departures from LSI within a profile, which an effect restriction does not address.

\subsection{Observed information and response coupling}\label{sec:nuisance}\label{sec:coupling}
We now define the observed conditional probabilities and outcome distributions, together with their latent counterparts. Write $\pi_j(\X)=\Prob(Z_1=j\mid\X)$, $p_j(\X)=\Prob(S=1\mid Z_1=j,\X)$ and $p_j^s(\X)=p_j(\X)^s\{1-p_j(\X)\}^{1-s}$.
For a profile $u=(u_0,u_1)$, $u_j$ is its response under initial treatment $j$. The profiles compatible with the recorded cell $(j,s)$ form $\calU(j,s)=\{u:u_j=s\}$. Their within-cell probabilities and latent treatment probabilities are
\begin{equation}\label{eq:weights}
\begin{aligned}
 w_u(j,s,\X)&=\Prob(U=u\mid\X,Z_1=j,S=s)=\frac{e_u(\X)}{p_j^s(\X)},\\
 h_u(j,\X)&=\Prob(Z_2=1\mid\X,Z_1=j,U=u).
\end{aligned}
\end{equation}
The ratio follows from consistency and initial treatment ignorability. Averaging these latent treatment probabilities over the compatible profiles gives the observed later-treatment probability,
\[
q_1(j,s,\X)=\Prob(Z_2=1\mid\X,Z_1=j,S=s)
=\sum_{u\in\calU(j,s)}w_u(j,s,\X)h_u(j,\X),
\]
with $q_0=1-q_1$. Let $F_{ja,s,\X}$ denote the conditional distribution of $Y$ given $(\X,Z_1=j,S=s,Z_2=a)$, with mean $\mu_{(j,a)}(s,\X)=\E(Y\mid\X,Z_1=j,S=s,Z_2=a)$. The latent mean is $m_{(j,a)}(u,\X)=\E\{Y(j,a)\mid U=u,\X\}$.

In cell $(j,s)$, only the alternative response $V=S(1-j)$ is unobserved. For $v\in\{0,1\}$, let $u^{(v)}$ be the profile with response $s$ under treatment $j$ and $v$ under treatment $1-j$. Then $w=w_{u^{(1)}}(j,s,\X)=\Prob\{V=1\mid S(j)=s,\X\}$ is the proportion with alternative response one before later-treatment selection. Assumption~\ref{ass:lsi}(a) identifies the two response margins, but not the joint distribution of $S(0)$ and $S(1)$. Specifying their association supplies the missing information needed to determine the principal scores $e_u$ and weights $w_u$.
The response margins imply the Fr\'echet bounds
\[
 \max\{0,p_0(\X)+p_1(\X)-1\}\le e_{11}(\X)\le\min\{p_0(\X),p_1(\X)\},
\]
which define the feasible coupling interval at each covariate value.
We first derive bounds conditional on a specified response coupling, treating its association parameter as a sensitivity input. Without this specification, the identified set is the union over all couplings compatible with the response margins.
\begin{assumption}[Specified response coupling]\label{ass:theta}
The margins lie in $(0,1)$ and the conditional odds ratio
\[
 \theta(\X)=\frac{e_{11}(\X)e_{00}(\X)}{e_{10}(\X)e_{01}(\X)}
\]
is an analyst-specified function taking values in $[0,\infty]$.
\end{assumption}
This margin-free parametrization follows \citet{tong2025}. Sensitivity in $\theta$ varies the unobserved association between the potential responses, separately from later-treatment selection. The value $\theta=1$ gives conditional independence of $S(0)$ and $S(1)$ given $\X$. The upper coupling endpoint, attained at $\theta=\infty$, gives the strongest positive association allowed by the margins and implies monotonicity $S(1)\ge S(0)$ when $p_1\ge p_0$ almost surely. The lower coupling endpoint, attained at $\theta=0$, gives the strongest negative association allowed by the margins. Varying $\theta$ therefore assesses dependence assumptions while preserving the identified margins.

The response margins and specified odds ratio uniquely determine the principal scores $e_u$. Lemma~\ref{lem:plackett} in Supplementary Section~\ref{supp:plackett} records this result and its smoothness properties, with proof in Section~\ref{supp:proof-plackett}.

\paragraph{Point-identification benchmark.}
The following additional restriction provides a point-identification benchmark. It is not imposed in our sharp-bound analysis.
\begin{assumption}[Sequential principal ignorability]\label{ass:spi}
For every $(j,a)\in\{0,1\}^2$ and $s\in\{0,1\}$, $ m_{(j,a)}(u,\X)=m_{(j,a)}(u',\X)$
almost surely for $u,u'\in\calU(j,s)$ with $e_u(\X)e_{u'}(\X)>0$.
\end{assumption}
\noindent
Conditional on baseline covariates and $S(j)$, sequential principal ignorability (SPI) excludes additional information about the mean of $Y(j,a)$ from the alternative response $S(1-j)$, although that response may still predict later treatment. Only conditional means must agree. LSI instead restricts later-treatment selection within a response profile. Our bound analysis retains LSI while allowing SPI to fail.

\section{Compatible population and principal effects}\label{sec:identification}\label{sec:bounds}

\subsection{One representation for population and principal effects}\label{sec:unified-identification}
We introduce a common mixture representation for population and principal effects that separates treatment selection from effect differences between response types. Throughout, the response coupling, covariate value and cell $(Z_1,S)=(j,s)$ are fixed. The alternative response $V=S(1-j)$ has within-cell probability $w=\Prob\{V=1\mid S(j)=s,\X\}$. We consider mixed cells, where $0<w<1$, and suppress covariate conditioning when unambiguous.

The latent treatment probability for type $v\in\{0,1\}$ is $h_v=\Prob(Z_2=1\mid Z_1=j,S=s,V=v)$, and the observed probability is $q=q_1(j,s,\X)=wh_1+(1-w)h_0$. With $h=h_1$, the treatment probability for type $V=0$ is therefore $h_0=(q-wh)/(1-w)$. For $a\in\{0,1\}$, $F_a$ denotes the observed distribution of $Y$ given $(Z_1=j,S=s,Z_2=a)$ at the fixed covariate value. Under LSI, $F_a$ mixes the two response types' potential-outcome distributions. Its component share $t_a(h)=\Prob(V=1\mid\X,Z_1=j,S=s,Z_2=a)$ is the proportion of type $V=1$ in treatment group $a$. Bayes' rule gives
\[
t_1(h)=\frac{wh}{q},\qquad t_0(h)=\frac{w(1-h)}{1-q}.
\]
The observed mean is $\mu_a=\int y\,dF_a(y)$, and the latent mean for type $v$ is $m_a(v)=\E\{Y(j,a)\mid V=v,S(j)=s,\X\}$, with gap $\delta_a=m_a(1)-m_a(0)$. Under LSI, $\mu_a=t_am_a(1)+(1-t_a)m_a(0)$ uses the selected share $t_a$, whereas the population cell mean $M_a=wm_a(1)+(1-w)m_a(0)$ uses the share $w$ before selection. These means satisfy
\begin{equation}\label{eq:unified-means}
M_a=\mu_a-(t_a-w)\delta_a,\qquad
m_a(v)=M_a+(v-w)\delta_a.
\end{equation}
In the first equality, $t_a-w$ measures the change in response-type composition due to treatment selection. Its product with $\delta_a$ is the difference between the observed and population cell means. The second equality recovers each type's conditional mean by adding $(1-w)\delta_a$ for type $V=1$ or subtracting $w\delta_a$ for type $V=0$. The corresponding effect decomposition is
\begin{equation}\label{eq:unified-effect}
\begin{split}
m_1(v)-m_0(v)
={}\mu_1-\mu_0-\{(t_1-w)\delta_1-(t_0-w)\delta_0\}
+(v-w)(\delta_1-\delta_0).
\end{split}
\end{equation}
The bracketed term in \eqref{eq:unified-effect} is the difference between the observed contrast and the population cell effect. The final term is the difference between a principal effect and that population effect. Latent selection occurs when treatment probabilities differ between response types within the same observed cell. Figure~\ref{fig:mechanism} illustrates how treatment selection and effect heterogeneity can make the observed contrast, population cell effect and principal effects differ. In panel (A), selection increases the proportion of type $V=1$ from $40\%$ in the cell to $60\%$ among the treated and reduces it to $20\%$ among the untreated. Panel (B) shows that correcting this selection reduces the observed contrast from $0.42$ to the population effect $0.32$, while effect heterogeneity gives principal effects $0.20$ and $0.50$. Averaging over $v$ removes only the final term in \eqref{eq:unified-effect}. Section~\ref{sec:bias} of the Supplementary Material gives the mixture identities and the resulting g-formula bias.

Homogeneous treatment assignment across response types, i.e., $h_0=h_1=q$, gives $t_1=t_0=w$ and identifies the population contrast, although principal effects may still differ when $\delta_1\ne\delta_0$. Conversely, $\delta_1=\delta_0$ equates conditional principal effects while permitting selection on unequal outcome levels.

\begin{figure}[t]
\centering
\includegraphics[width=\textwidth]{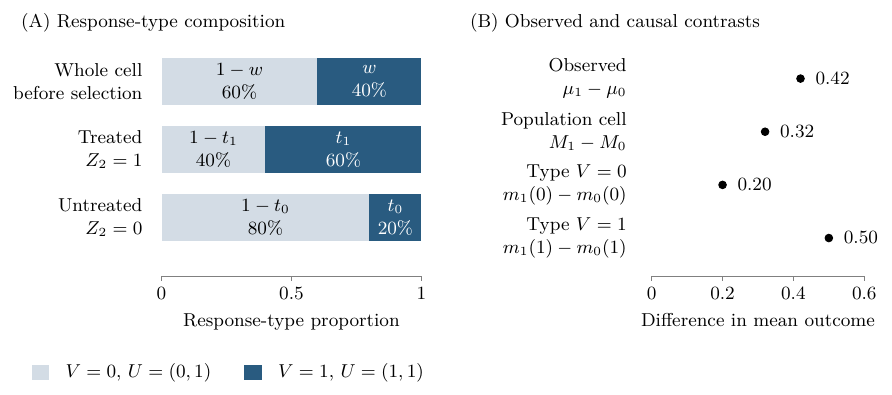}
\caption{Selection and effect heterogeneity in the cell $(Z_1,S)=(1,1)$ at a fixed covariate value.
Panel A shows the response-type proportions in the whole cell and among those receiving $Z_2=1$ or $Z_2=0$. The dark segments give the type $V=1$ shares $w$ before selection and $t_a$ within treatment group $a$; the light segments give their complements. Panel B compares the observed mean difference $\mu_1-\mu_0$, population cell effect $M_1-M_0$ and principal effects $m_1(v)-m_0(v)$ of later treatment $a=1$ versus $a=0$.
Here $V=S(0)$, $w=0.4$, $h_1=0.75$, $h_0=1/3$ and $q=0.5$. The component outcome means under $(Z_2=0,Z_2=1)$ are $(0.2,0.4)$ for $V=0$ and $(0.3,0.8)$ for $V=1$. This constructed Bernoulli law illustrates the selection and effect-heterogeneity adjustments in \eqref{eq:unified-effect}.}
\label{fig:mechanism}
\end{figure}

These identities do not make every choice of $h$ and $\delta_a$ feasible. The component outcome distributions must reproduce each observed $F_a$, and both latent treatment probabilities must satisfy positivity. For the outcome restriction, a distribution $F$ with mean $\mu$ and quantile function $F^{-1}$ has lower and upper tail integrals
\[
L_F(t)=\int_0^tF^{-1}(r)\,dr,\qquad
U_F(t)=\int_{1-t}^1F^{-1}(r)\,dr.
\]
For $0<t<1$, the smallest and largest possible means of a component comprising share $t$ of $F$ are $L_F(t)/t$ and $U_F(t)/t$. They allocate the lowest or highest fraction $t$ of the outcomes to that component, splitting mass at the cutoff when necessary. Since its mean is $\mu+(1-t)\delta$, the compatible component-mean gaps form the interval
\begin{equation}\label{eq:unified-delta-interval}
\mathcal D_F(t)=
\left[\frac{L_F(t)/t-\mu}{1-t},\,
\frac{U_F(t)/t-\mu}{1-t}\right]
=:[\ell_F(t),u_F(t)].
\end{equation}
For assignment compatibility, both $h$ and $h_0=(q-wh)/(1-w)$ must lie in $[\epsilon,1-\epsilon]$. These inequalities give the feasible interval
\begin{equation}\label{eq:unified-h-interval}
\mathcal H_\epsilon(w,q)=
\left[
\max\left\{\epsilon,\frac{q-(1-\epsilon)(1-w)}w\right\},\,
\min\left\{1-\epsilon,\frac{q-\epsilon(1-w)}w\right\}
\right].
\end{equation}
Thus $\mathcal H_\epsilon$ restricts assignment, while $\mathcal D_{F_a}\{t_a(h)\}$ restricts outcome-mean differences at the induced component shares. Both outcome arms must use the same $h$ because they arise from one treatment mechanism. The following theorem shows that these restrictions characterize all compatible latent means and yield sharp bounds for scalar linear targets, including population and principal contrasts.

\begin{theorem}[Compatible means and sharp causal contrasts]\label{thm:unified-contrast}
Suppose consistency and first-stage ignorability hold and $\calX$ is a standard Borel space. Fix a measurable response coupling reproducing the observed margins, a bounded outcome support $[y_-,y_+]$, and $\epsilon>0$ such that all observed treatment probabilities and their complements are at least $\epsilon$. Consider full-data laws inducing the observed distribution, satisfying second-stage LSI, and having latent treatment probabilities in $[\epsilon,1-\epsilon]$. All conditional statements concern cells with $p_j^s(\X)>0$; zero-probability cells are omitted.

The compatible latent mean functions are exactly \eqref{eq:unified-means}, with a measurable choice $h\in\mathcal H_\epsilon(w,q)$ in every mixed cell and separate choices $\delta_a\in\mathcal D_{F_a}\{t_a(h)\}$ for its two outcome arms. In a pure cell, where only one response type has positive probability, its sole latent mean is $\mu_a$.

For bounded measurable coefficients $c_{j,a,u}(\X)$, define
\[
T_c=\E\sum_{j,a,u}e_u(\X)c_{j,a,u}(\X)m_{(j,a)}(u,\X).
\]
In cell $(j,s)$ let $c_{av}=c_{j,a,u^{(v)}}(\X)$ and put
\[
A_a=wc_{a1}+(1-w)c_{a0},\qquad
B_a(h)=wc_{a1}\{1-t_a(h)\}-(1-w)c_{a0}t_a(h).
\]
Its lower and upper contributions are
\begin{equation}\label{eq:unified-contrast-bounds}
\begin{split}
K^-_{j,s}(\X)
&=\min_{h\in\mathcal H_\epsilon}
\sum_a\left[A_a\mu_a+
\min_{d\in\{\ell_{F_a}(t_a),u_{F_a}(t_a)\}}B_a(h)d\right],\\
K^+_{j,s}(\X)
&=\max_{h\in\mathcal H_\epsilon}
\sum_a\left[A_a\mu_a+
\max_{d\in\{\ell_{F_a}(t_a),u_{F_a}(t_a)\}}B_a(h)d\right].
\end{split}
\end{equation}
For pure cells use $K^-_{j,s}=K^+_{j,s}=\sum_a c_{av}\mu_a$ for their sole $v$. The sharp identified set for $T_c$ is
\[
\left[\E\sum_{j,s}p_j^s(\X)K^-_{j,s}(\X),\quad
\E\sum_{j,s}p_j^s(\X)K^+_{j,s}(\X)\right].
\]
Every value is attained by a compatible full law.
\end{theorem}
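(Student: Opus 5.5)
The proof has three parts: necessity, sufficiency, and optimization. We first show that every compatible full law yields latent means of the form \eqref{eq:unified-means} with the stated feasibility constraints. We then construct, for every admissible measurable choice $(h,\delta_0,\delta_1)$, a full law that attains those means. Finally, we optimize the linear functional $T_c$ cell by cell.

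For necessity, fix a compatible law and a mixed cell $(j,s)$ at covariate value $\X$. The coupling fixes $e_u$, hence $w$. Set $h=h_1(j,\X)$. Consistency and first-stage ignorability give the within-cell composition $w$, and positivity forces $h$ and $h_0=(q-wh)/(1-w)$ into $[\epsilon,1-\epsilon]$, so $h\in\mathcal H_\epsilon(w,q)$. By Assumption~\ref{ass:lsi}(b), the law of $Y(j,a)$ given $(Z_1=j,S=s,V=v,Z_2=a)$ equals its law given $(V=v,S(j)=s,\X)$. Hence $F_a$ is the mixture $t_aG_{a1}+(1-t_a)G_{a0}$, where $G_{av}$ is the type-$v$ component law and $t_a=t_a(h)$ by Bayes. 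The tail-integral characterization, i.e.\ the trimming bound of \citet{lee2009}, shows that a sub-probability $t_aG_{a1}\le F_a$ has mean in $[L_F(t_a)/t_a,U_F(t_a)/t_a]$. The identity $\mu_a=t_am_a(1)+(1-t_a)m_a(0)$ then gives $\delta_a\in\mathcal D_{F_a}(t_a)$, and the algebra preceding the theorem gives \eqref{eq:unified-means}. Pure cells are immediate, because $t_a=w\in\{0,1\}$.

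For sufficiency, take measurable $h$ and $\delta_a$. The set $\mathcal D_F(t)$ is an interval, and the component mean varies continuously along the family of sub-measures obtained by interpolating between the lower trimmed, quantile-allocated component and the upper one. We therefore take a convex combination of these two components with a measurable weight $\lambda_a(\X)$, which yields $G_{a1}$ with exactly the required mean; $G_{a0}$ is the remainder $(F_a-t_aG_{a1})/(1-t_a)$. Both have support in $[y_-,y_+]$. We then build the full law. Draw $\X$ and $Z_1$ from their observed laws. Draw $U$ from $e_u(\X)$ independently of $Z_1$. Draw $Y(j,a)$ given $(U,\X)$ from the component law of its cell $(j,u_j)$, with arbitrary dependence across $(j,a)$, say independent. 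Finally set $Z_2\sim\mathrm{Bernoulli}(h_u(j,\X))$ independently of the potential outcomes. This law satisfies LSI and positivity, and it reproduces $p_j$, $q$ and $F_a$ by construction. The standard Borel assumption supplies regular conditional distributions and a measurable selection. Measurability of the selections, such as $h$ attaining a min or max and the endpoint choice $d$, is the main technical obstacle. We would handle it by showing that the objective in \eqref{eq:unified-contrast-bounds} is jointly measurable in $(\X,h)$ and continuous or upper semicontinuous in $h$ on the compact interval $\mathcal H_\epsilon$, and then invoking a measurable maximum theorem, for example Kuratowski--Ryll-Nardzewski or the measurable selection theorem of Aliprantis--Border. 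Continuity of $L_F(t)/t$ in $t$ on $(0,1)$ holds because $t_a(h)$ stays in $(0,1)$ in mixed cells under positivity, and this continuity is what makes the argument work.

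For the optimization, the contributions of distinct cells separate, since $m_{(j,a)}(u^{(v)},\X)$ enters only through cell $(j,u_j)$. Weighting by $p_j^s$ converts $e_u$ into $w$ or $1-w$. Substituting \eqref{eq:unified-means} into $wc_{a1}m_a(1)+(1-w)c_{a0}m_a(0)$, and using $M_a=\mu_a-(t_a-w)\delta_a$, gives $A_a\mu_a+B_a(h)\delta_a$. The expression is linear in each $\delta_a$, and the two arms choose their $\delta_a$ independently given $h$, so each extremum is attained at an endpoint of $\mathcal D_{F_a}(t_a)$. The bounds on $T_c$ follow by integrating pointwise extrema, which is justified by the measurable selections above and by boundedness. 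Because the set of admissible $(h,\delta)$ is connected pointwise, mixing the extremal laws attains every intermediate value: either convexly combine the selections $\delta$ and $h$ along a path, or note that $T_c$ is continuous along the path $\lambda\mapsto$ law built from the selections at parameter $\lambda$. The intermediate value theorem then shows the identified set is the full interval.
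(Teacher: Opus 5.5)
Your proposal is correct and follows the paper's proof essentially step for step: mixture necessity via LSI and the trimming bound, the independent-kernel full-law construction with quantile-allocated submeasures, cellwise linearity in $\delta_a$, measurable selection over the compact assignment interval, and the intermediate value theorem for interior values. The one step to tighten is the interpolation. A convex combination of the two optimizers in $(h,\delta_0,\delta_1)$ need not stay feasible, because $\mathcal D_{F_a}\{t_a(h)\}$ moves nonlinearly with $h$; for a binary outcome, $u_{F_a}(t)$ is convex for $t\le\mu_a$, so the chord can leave the set. Instead, interpolate linearly in $(h,r_0,r_1)$ with $\delta_a=\ell_{F_a}(t_a)+r_a\{u_{F_a}(t_a)-\ell_{F_a}(t_a)\}$ and $r_a\in[0,1]$, as the paper does.
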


At each feasible assignment, the target is linear in the two outcome-mean gaps, so its extrema occur at their interval endpoints. Optimization over the common assignment and averaging across cells then give the sharp scalar bounds. Every value between them is attainable under the full observed outcome distributions and the maintained coupling, LSI and positivity margin. Separately optimized scalar extrema need not be jointly attainable. The proof in Section~\ref{supp:proof-unified-contrast} constructs compatible component distributions and a full-data law.

Population sequence contrasts use coefficients constant across $u$. A principal contrast in stratum $u_*$ instead uses coefficients zero outside that stratum and divides both bounds by $\E(e_{u_*})>0$. This representation covers both targets and the point-identification benchmarks in Section~\ref{sec:spi}. With unspecified coupling, the identified set is the union over measurable couplings reproducing the margins. For principal effects, the stratum mass is recomputed at each coupling.

Matching the full observed outcome distribution can sharpen bounds beyond those based on its support and mean alone. For example, in a cell with $w=q=1/2$, $\epsilon=1/4$ and $F_0=F_1=\operatorname{Unif}[0,1]$, the sharp interval for each population cell mean $M_a$ is $[3/8,5/8]$, compared with $[1/3,2/3]$ when only the support and mean are retained. The wider interval allows component means that cannot reproduce the uniform observed law (Supplementary Section~\ref{supp:sens-est}).

\subsection{The population-law contraction and its principal limitation}\label{sec:contraction}
The following theorem is the structural reduction underlying the bound calculations. Write $G_{av}$ for the law of $Y(j,a)$ in type $v$ and $\overline G_a=wG_{a1}+(1-w)G_{a0}$ for its population cell law.

\begin{theorem}[Law-preserving contraction]\label{thm:contraction}
Under Theorem~\ref{thm:unified-contrast}, impose separately at each mixed cell and covariate value a restriction $(\delta_0,\delta_1)\in\mathcal C_{j,s}(\X)$ with measurable graph and closed sections, independent of $h$, containing zero, and satisfying
$\alpha\mathcal C_{j,s}(\X)\subseteq\mathcal C_{j,s}(\X)$ for $0\le\alpha\le1$.
Every compatible collection of population cell marginal laws $(\overline G_0,\overline G_1)$ admits a compatible full-data completion with assignment at an endpoint of $\mathcal H_\epsilon(w,q)$ in every mixed cell. Consequently, restricting assignment to these endpoints leaves the attainable collection of population cell marginal laws unchanged. Population sequence and shared-assignment contrast bounds require only these two assignment choices, with the outcome laws still optimized subject to the restriction.
\end{theorem}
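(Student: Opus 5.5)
The plan is a direct mixture-contraction argument. Fix a mixed cell $(j,s)$ at a covariate value, and start from any compatible full-data law. It has latent assignment $h\in\mathcal H_\epsilon(w,q)$ and component laws $G_{a1},G_{a0}$ with $F_a=t_a(h)G_{a1}+\{1-t_a(h)\}G_{a0}$ and $\overline G_a=wG_{a1}+(1-w)G_{a0}$. The key identity, which is the distributional analogue of \eqref{eq:unified-means}, is
\[
F_a-\overline G_a=\{t_a(h)-w\}(G_{a1}-G_{a0}).
\]
Since $t_1(h)-w=w(h-q)/q$ and $t_0(h)-w=-w(h-q)/(1-q)$, both composition shifts are linear in $h-q$ with a common zero at $h=q$. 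Moving $h$ to $h'$ with $h'-q=\lambda(h-q)$ therefore multiplies both shifts by the same factor $\lambda$. This common factor is what lets the two outcome arms share one assignment.

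Step 1 chooses the endpoint. Note that $q\in\mathcal H_\epsilon(w,q)$, because $h=h_0=q$ is feasible when $\epsilon\le q\le 1-\epsilon$, and $\mathcal H_\epsilon$ is an interval. If $h\neq q$, take $h'$ to be the endpoint of $\mathcal H_\epsilon$ on the same side of $q$ as $h$, and set $\lambda=(h'-q)/(h-q)\ge1$ and $\alpha=1/\lambda\in(0,1]$. If $h=q$, take either endpoint and set $\alpha=0$.

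Step 2 defines the contracted components
\[
G'_{av}=\overline G_a+\alpha(G_{av}-\overline G_a)=\alpha G_{av}+(1-\alpha)\overline G_a .
\]
These are probability laws, being convex combinations, and they remain supported in $[y_-,y_+]$. They satisfy $wG'_{a1}+(1-w)G'_{a0}=\overline G_a$, so the population cell marginal laws are preserved. They also satisfy
\[
t_a(h')G'_{a1}+\{1-t_a(h')\}G'_{a0}=\overline G_a+\alpha\lambda\{t_a(h)-w\}(G_{a1}-G_{a0})=F_a,
\]
so the observed law is preserved. In the case $h=q$ we have $F_a=\overline G_a=G'_{av}$ directly. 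The new gaps are $\delta'_a=\alpha\delta_a$, so $(\delta'_0,\delta'_1)\in\alpha\mathcal C_{j,s}(\X)\subseteq\mathcal C_{j,s}(\X)$. In the case $\alpha=0$, the gaps are zero, which lies in $\mathcal C_{j,s}(\X)$ by assumption. Because the restriction does not depend on $h$, it remains satisfied after the move. Positivity holds because $h'\in\mathcal H_\epsilon$. Pure cells are left unchanged.

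Step 3 assembles the new law. I would define $h'$, $\alpha$ and $G'_{av}$ as measurable functions of $\X$ using the measurable graph of $\mathcal H_\epsilon$ and of the original law. I would then rebuild a full-data law satisfying LSI through the construction used in the proof of Theorem~\ref{thm:unified-contrast}. That construction draws $U$ from $e_u$, then $Z_1$, then $Z_2$ with probability $h_u$, and then $Y(j,a)$ from $G'$ independently of $Z_2$ given $(\X,Z_1,U)$. The first-stage laws are unchanged.

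Finally, population sequence contrasts and shared-assignment contrasts depend on the law only through $(\overline G_0,\overline G_1)$, averaged with $p_j^s$. Their bounds can therefore be computed over the two endpoint choices, with outcome laws still optimized. Principal-stratum laws $G_{av}$ do change under the contraction, so the reduction says nothing about principal bounds.

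I expect the main obstacle to be Step 3: carrying out the endpoint selection and the contraction measurably in $\X$, and verifying that the rebuilt law satisfies second-stage LSI jointly across both values of $j$. The subtlety is that $U$ links the cells $(0,s)$ and $(1,s')$, so the per-cell modifications must be glued into a single coupling of $\{Y(j,a)\}$ given $U$. To handle this, I would construct the conditional law of the potential outcomes given $(\X,U)$ as a product over $(j,a)$ of the modified components. This is legitimate because only the marginals $G'_{av}$ are constrained.
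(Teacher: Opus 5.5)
Your proposal is correct and follows the paper's proof essentially step for step. It makes the same-side endpoint choice, applies one common factor $\alpha=(h-q)/(h'-q)$ to both outcome arms, and uses the contraction $G'_{av}=\alpha G_{av}+(1-\alpha)\overline G_a$, which preserves both $F_a$ and $\overline G_a$ while scaling the gaps by $\alpha$; it then reassembles the law through the product construction of Theorem~\ref{thm:unified-contrast}. The only points left implicit are the trivial reverse inclusion (endpoint assignments are themselves feasible) and, for measurability, fixing a deterministic rule when $h=q$ rather than taking ``either endpoint'', as the paper does.
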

\begin{proof}
A singleton assignment interval requires no change. Otherwise, for $h\ne q$ choose its endpoint $h^*$ on the same side of $q$ as $h$. For $h=q$, choose the upper endpoint if it exceeds $q$, and the lower endpoint otherwise. Put
\[
\alpha=\frac{h-q}{h^*-q}\in[0,1],\qquad
G^*_{av}=\alpha G_{av}+(1-\alpha)\overline G_a,
\]
with $\alpha=0$ when $h=q$. The same $\alpha$ applies to both outcome arms. Since $t_a(h)-w=\alpha\{t_a(h^*)-w\}$,
\begin{equation}\label{eq:contraction-laws}
\begin{aligned}
t_a(h^*)G^*_{a1}+\{1-t_a(h^*)\}G^*_{a0}=F_a, \quad
wG^*_{a1}+(1-w)G^*_{a0}=\overline G_a.
\end{aligned}
\end{equation}
Thus both observed and population cell laws are preserved and the new gap vector is $\alpha(\delta_0,\delta_1)\in\mathcal C_{j,s}(\X)$. The endpoint rule and transformed kernels are measurable. The completion in Theorem~\ref{thm:unified-contrast} joins them with the fixed response coupling and one common assignment per cell. The reverse inclusion is immediate because endpoint assignments are feasible assignments.
\end{proof}

Under the stated restrictions and a fixed response coupling, Theorem~\ref{thm:contraction} replaces any compatible model by one using an assignment endpoint in each mixed cell while preserving the observed distribution and the population marginal law of each $Y(j,a)$. This does not assert that the true assignment probabilities lie at endpoints, but it says an endpoint model can reproduce the same population causal quantities. Population mean and contrast bounds therefore require only the two endpoint assignments per cell, with outcome distributions still optimized. The restriction need not be convex. It must be preserved when both latent outcome-mean differences are shrunk toward zero by the same factor.

Principal laws $G_{av}$ generally change in this construction. Their extrema can therefore require an interior assignment even when population extrema do not. 

\begin{example}[Failure of assignment-endpoint reduction for principal effects]\label{ex:principal-endpoint-failure}
Suppose $F_0=F_1=\operatorname{Bernoulli}(1/2)$, $w=q=1/2$ and $0<\epsilon<1/2$. For the stratum with $V=1$, the conditional upper effect is
\[
\max_{\epsilon\le h\le1-\epsilon}
\left[\min\{1,(2h)^{-1}\}
-\max\{0,1-[2(1-h)]^{-1}\}\right]=1.
\]
It is attained at $h=1/2$, whereas either assignment endpoint gives $1/\{2(1-\epsilon)\}<1$. The lower bound is $-1$. A full law attaining the upper bound has $Y(j,1)=v$, $Y(j,0)=1-v$ and assignment independent of $v$. Its population effect is zero, yet its principal effects are $1$ and $-1$. Population and principal uncertainty therefore require different optimizations even within the same compatible model.
\end{example}

Theorem~\ref{thm:contraction} preserves whole population marginal outcome distributions, so its implications extend beyond means. For example, the population distribution function of $Y(j,a)$ is
\[
H_{ja}(y)=\E\sum_s p_j^s(\X)\overline G_{a,j,s,\X}(( -\infty,y])
\]
and has the same attainable values under arbitrary feasible assignments and endpoint assignments, simultaneously across thresholds and treatment sequences. The stated outcome-mean restrictions remain in force. Therefore, the reduction to assignment endpoints also applies to nonlinear functionals of these population marginal distributions, such as quantiles, variances and differences of medians. Their bounds require optimizing the target after combining the cell laws into population distributions. For example, averaging cell medians need not give the population median.

\subsection{Population and principal bound computations}
We now derive sharp population contrast bounds and, for binary outcomes, give an exact finite assignment search for principal contrast bounds.
We denote the endpoints of \eqref{eq:unified-h-interval} by $h^-_\epsilon,h^+_\epsilon$. Equation~\eqref{eq:unified-means} gives $M_a-\mu_a=(w-t_a)\delta_a$. Linearity in $\delta_a$ gives the largest upward and downward deviations over its compatible interval,
\begin{equation}\label{eq:trimmed-deviations}
 E_a^+(t)=\max_{d\in\{\ell_{F_a}(t),u_{F_a}(t)\}}(w-t)d,\qquad
 E_a^-(t)=-\min_{d\in\{\ell_{F_a}(t),u_{F_a}(t)\}}(w-t)d.
\end{equation}
The compatible gap interval in Theorem~\ref{thm:unified-contrast} uses the full observed outcome distribution. At a fixed feasible assignment $h$, the attainable population cell means therefore satisfy
\[
M_a\in\left[\mu_a-E_a^-\{t_a(h)\},\quad
                 \mu_a+E_a^+\{t_a(h)\}\right].
\]
We write $E_F^\pm$ when only one outcome law is under consideration. Because the two arm-specific tilts $t_1(h)$ and $t_0(h)$ move away from $w$ in opposite directions as $h$ moves away from $q$, the population contrast has a simple solution.
\begin{proposition}[Sharp contrast bounds with shared assignment]\label{prop:shared-tilt}
Under Theorem~\ref{thm:unified-contrast}, define in each mixed cell
\begin{align}
J^+_\epsilon=\max_{h\in\{h^-_\epsilon,h^+_\epsilon\}}
\{E_1^+(t_1(h))+E_0^-(t_0(h))\}, \quad
J^-_\epsilon=\max_{h\in\{h^-_\epsilon,h^+_\epsilon\}}
\{E_1^-(t_1(h))+E_0^+(t_0(h))\},\label{eq:shared-tilt-formula}
\end{align}
and set both to zero in pure cells. With $\Delta_g=\E\sum_s p^s_{z_1}\{\mu_{(z_1,1)}(s,\X)-\mu_{(z_1,0)}(s,\X)\}$, the identified set for $\psi(z_1,1)-\psi(z_1,0)$ is
\begin{equation}\label{eq:shared-tilt-region}
\left[\Delta_g-\E\sum_s p^s_{z_1}J^-_\epsilon,\quad
\Delta_g+\E\sum_s p^s_{z_1}J^+_\epsilon\right].
\end{equation}
Only two evaluations per cell are required. At a common fixed coupling, separate endpoint differencing is sharp for sequences with different first-period arms, whose latent assignment mechanisms are unrestricted relative to each other.
\end{proposition}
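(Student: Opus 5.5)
We specialize Theorem~\ref{thm:unified-contrast} to the population contrast and then use Theorem~\ref{thm:contraction} to reduce the assignment optimization to the two endpoints.

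\textbf{Step 1: coefficients and cell decomposition.} Take the population contrast $\psi(z_1,1)-\psi(z_1,0)$. Its coefficients are $c_{z_1,1,u}=1$ and $c_{z_1,0,u}=-1$ for every $u$, and zero for the other first-period arm. In cell $(z_1,s)$ this gives $c_{a1}=c_{a0}=\pm1$, so $A_a=\pm1$ and the base term $\sum_a A_a\mu_a$ equals $\mu_{(z_1,1)}-\mu_{(z_1,0)}$. Moreover
\[
B_a(h)=\pm\{w(1-t_a)-(1-w)t_a\}=\pm(w-t_a).
\]
Hence the cell contribution is $M_1-M_0$, with $M_a=\mu_a+(w-t_a)\delta_a$ as in \eqref{eq:unified-means}. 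Pure cells contribute $\mu_1-\mu_0$ exactly.

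\textbf{Step 2: separate optimization of the two arms.} The constraints on $\delta_1$ and $\delta_0$ decouple once $h$ is fixed. Therefore the inner maximum of $\sum_a B_a(h)d_a$ is $E_1^+(t_1(h))+E_0^-(t_0(h))$, and the inner minimum is $-\{E_1^-(t_1(h))+E_0^+(t_0(h))\}$, by the definitions in \eqref{eq:trimmed-deviations}. So
\[
K^+_{z_1,s}=\mu_1-\mu_0+\max_{h\in\mathcal H_\epsilon}\{E_1^++E_0^-\},
\]
and similarly for $K^-_{z_1,s}$. Averaging with weights $p^s_{z_1}$ then yields \eqref{eq:shared-tilt-region}, once the maxima over $\mathcal H_\epsilon$ are replaced by maxima over $\{h^-_\epsilon,h^+_\epsilon\}$. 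The theorem already gives sharpness and attainability of every intermediate value.

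\textbf{Step 3 (main step): reduction to endpoints.} Apply Theorem~\ref{thm:contraction} with the trivial restriction $\mathcal C=\mathbb R^2$. This restriction is closed, contains zero, and is star-shaped. The population cell means $M_a$ are functionals of $\overline G_a$. The contraction shows that every attainable pair $(\overline G_0,\overline G_1)$, and hence every attainable $M_1-M_0$, is also attained with $h$ at an endpoint. So the sup over $\mathcal H_\epsilon$ equals the max over the two endpoints, and the same holds for the inf.

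Alternatively, one can argue directly. The function $E_a^+(t)$ is the support-function-like quantity $\max\{(w-t)\ell,(w-t)u\}$. Along $t_a(h)-w=\alpha\{t_a(h^*)-w\}$, the interval $\mathcal D_F(t)$ scales at least as fast, by concavity of the tail integrals. One would then need $(w-t)\mathcal D_F(t)$ to be nested and growing as $t$ moves away from $w$. This is exactly what the contraction encodes, so I would rely on the theorem rather than prove that monotonicity.

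\textbf{Step 4: sequences with different first-period arms.} For $\psi(1,a)-\psi(0,a')$, the cells with $Z_1=1$ and $Z_1=0$ involve disjoint assignment choices $h$ and disjoint gaps. The objective in Theorem~\ref{thm:unified-contrast} is a sum of cell contributions with independent feasible sets, so its extrema equal the sum of the separate extrema. Separate differencing of the endpoint-optimized single-mean bounds is therefore sharp at a common coupling.

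The anticipated obstacle is only checking that the hypotheses of the contraction apply with the unrestricted $\mathcal C$. It also requires that the max over endpoints in \eqref{eq:shared-tilt-formula} is measurable, which holds because it is the max of finitely many measurable functions.
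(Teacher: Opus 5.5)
Your proposal is correct and follows the paper's proof. The paper likewise uses the law-preserving contraction (Theorem~\ref{thm:contraction}) to place both outcome arms at a common assignment endpoint. It then optimizes the two arms' gaps separately there to obtain $J^\pm_\epsilon$, averages over cells, and treats different first-period arms as separable because their assignments are chosen independently; Theorem~\ref{thm:unified-contrast} supplies attainment. Your explicit computation $A_a=\pm1$, $B_a(h)=\pm(w-t_a)$ fills in the coefficient specialization that the paper leaves implicit.
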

The proof is in Section~\ref{supp:proof-shared-tilt}. For sequences sharing the same initial treatment, separately bounding their means and then subtracting can be conservative because both outcome arms must share one latent assignment mechanism. Theorem~\ref{thm:contraction} reduces optimization over this common assignment to the two feasible endpoints in each mixed cell.

For a binary-outcome example, omit $\X$ and take $p_0=1/4$, $p_1=1/2$, $\theta=1$. Then $w=1/4$ in both cells with $z_1=1$. In each cell let $q_1=1/2$, $\mu_{(1,1)}=3/4$ and $\mu_{(1,0)}=1/4$. As $\epsilon\downarrow0$, the sharp contrast interval has closure $[3/16,11/16]$. Separate differencing gives $[1/8,3/4]$, which is $25\%$ wider. The loss comes from assigning different optimizing propensities to the same latent stratum.

The response margins remain informative without a specified coupling. Instead let $p_0=1/10$, $p_1=1/2$ and $q_1=\mu_{(1,1)}=1/2$ in both cells with $z_1=1$. As $\epsilon\downarrow0$, the sharp region for $\psi(1,1)$ has closure $[7/16,9/16]$ at $\theta=1$. Leaving the coupling unspecified expands it to $[5/12,7/12]$, still narrower than $[1/4,3/4]$ for an unrestricted binary confounder. The response margins force the two alternative-response prevalences to sum to $1/5$, whereas the unrestricted confounder can have prevalence $1/2$ in both cells. Section~\ref{supp:sens-est} gives the calculations for both examples. More generally, Section~\ref{supp:unspecified-computation} optimizes over measurable couplings and gives an example where restricting the coupling to its endpoints misses the population bound. A finite grid of common odds ratios need not represent a covariate-dependent optimal coupling. This is a separate optimization from assignment-endpoint sufficiency at each fixed coupling.

Outcome mixtures underpin classical bounds in principal stratification \citep{zhangrubin2003}. For binary outcomes, the component limits yield a finite assignment search for principal contrasts.
\begin{corollary}[Binary principal contrasts]\label{cor:binary-principal-bounds}
Under Theorem~\ref{thm:unified-contrast}, suppose outcomes are binary and the
target stratum has positive mass. A component with mixing proportion $t>0$ in an
observed arm of mean $\mu$ has sharp mean limits
\[
\underline m(\mu,t)=\max\{0,1-(1-\mu)/t\},\qquad
\overline m(\mu,t)=\min\{1,\mu/t\}.
\]
For a principal contrast within first-period arm $j$, the conditional limits for type $v=1$ are
\[
\min_h\{\underline m(\mu_1,t_1(h))-\overline m(\mu_0,t_0(h))\},\qquad
\max_h\{\overline m(\mu_1,t_1(h))-\underline m(\mu_0,t_0(h))\}.
\]
For type $v=0$, these two optimizations use $1-t_a(h)$ in place of $t_a(h)$. For either type, it suffices to evaluate the feasible assignment endpoints and feasible solutions of $t_a(h)\in\{\mu_a,1-\mu_a\}$, $a=0,1$. Pure cells retain their observed means, and zero-score strata are omitted. Weight the resulting limits by $e_u(\X)/\E(e_u)$ to obtain principal-effect bounds.
\end{corollary}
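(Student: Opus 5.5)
The proof specializes Theorem~\ref{thm:unified-contrast} to binary outcomes and to a principal target. It then shows that the optimization over $h$ can be reduced to a finite candidate set.

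\emph{Step 1: component limits.} A Bernoulli$(\mu)$ law is a mixture in which one component has share $t$ and outcome distribution Bernoulli$(m)$. The other component, with share $1-t$, is Bernoulli$(m')$, where
\[
m'=(\mu-tm)/(1-t)\in[0,1].
\]
This requirement is equivalent to $m\in[\max\{0,1-(1-\mu)/t\},\min\{1,\mu/t\}]$. These limits coincide with $L_F(t)/t$ and $U_F(t)/t$ for $F=\operatorname{Bernoulli}(\mu)$, so they are the sharp limits $\underline m,\overline m$. Through $m_a(1)=\mu_a+(1-t_a)\delta_a$, the interval $\mathcal D_{F_a}\{t_a(h)\}$ corresponds exactly to $m_a(1)\in[\underline m(\mu_a,t_a),\overline m(\mu_a,t_a)]$. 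For type $v=0$, the component share is $1-t_a$, which gives the stated substitution.

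\emph{Step 2: the conditional principal contrast.} Take coefficients $c$ equal to $\pm1$ for $u=u^{(1)}$ in arm $a=1,0$ and zero otherwise, and apply Theorem~\ref{thm:unified-contrast}. At a fixed $h$, the two gaps $\delta_1,\delta_0$ are chosen separately. The target $m_1(1)-m_0(1)$ is therefore maximized by taking the upper limit in arm 1 and the lower limit in arm 0, and minimized by the reverse. The conditional limits are then the stated min and max over $h\in\mathcal H_\epsilon$. Averaging over cells and $\X$ with weights $e_u(\X)/\E(e_u)$ follows exactly the principal normalization described after Theorem~\ref{thm:unified-contrast}. Pure cells retain their observed means, and zero-score strata contribute nothing.

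\emph{Step 3: finite reduction.} This is the main step. On $\mathcal H_\epsilon$, which is a compact interval, $t_1(h)=wh/q$ is increasing and affine, and $t_0(h)=w(1-h)/(1-q)$ is decreasing and affine. For fixed $\mu$, the map $t\mapsto\overline m(\mu,t)=\min\{1,\mu/t\}$ is constant for $t\le\mu$ and convex decreasing for $t\ge\mu$. Similarly, $\underline m(\mu,t)$ is $0$ for $t\le1-\mu$ and concave increasing for $t\ge 1-\mu$.

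Partition $\mathcal H_\epsilon$ at the points solving $t_a(h)\in\{\mu_a,1-\mu_a\}$, where the kinks lie. On each piece, consider the upper-bound objective $g(h)=\overline m(\mu_1,t_1(h))-\underline m(\mu_0,t_0(h))$. Its first term is either constant or of the form $\mu_1q/(wh)$, which is convex in $h$. Its second term is either $0$ or $-\{1-(1-\mu_0)(1-q)/(w(1-h))\}$, which is also convex in $h$. So $g$ is convex on each piece, and its maximum is attained at a piece endpoint.

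For the lower bound, the objective $\underline m-\overline m$ is concave on each piece, so its minimum is likewise attained at a piece endpoint. For type $v=0$ the argument is identical, with shares $1-t_a(h)$, which are also affine in $h$. The kinks then solve $1-t_a\in\{\mu_a,1-\mu_a\}$, which is the same set as $t_a\in\{\mu_a,1-\mu_a\}$.

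The piece endpoints are exactly the feasible assignment endpoints together with the feasible kink solutions. Evaluating the objective at this finite set therefore yields the exact extrema. Example~\ref{ex:principal-endpoint-failure} shows that these interior candidates are genuinely needed: there the maximum occurs at the kink $h=1/2$.

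The delicate point is verifying convexity or concavity on each piece in all sign configurations. It is also necessary to check that at boundary shares $t\in\{0,1\}$ the limits extend continuously. This holds because the condition $\epsilon\le h$ keeps the shares $t_a$ strictly positive.
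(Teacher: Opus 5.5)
Your proposal is correct and follows the paper's own argument. The binary mixture equation gives the component limits, and the two arms' gaps are chosen separately at a common $h$. After partitioning $\mathcal H_\epsilon$ at the feasible solutions of $t_a(h)\in\{\mu_a,1-\mu_a\}$, each upper branch is a constant plus nonnegative multiples of reciprocals of positive affine functions, hence convex, and each lower branch is concave, so the extrema sit at partition endpoints. One small precision: $t_a\in(0,1)$, and hence $1-t_a>0$ for type $v=0$, follows from requiring both $h$ and $h_0$ to lie in $[\epsilon,1-\epsilon]$, not from $\epsilon\le h$ alone.
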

The proof is in Section~\ref{supp:proof-binary-principal}.

\subsection{Sharpening bounds through effect restrictions}\label{sec:effect-restrictions}
The observed mixtures can remain compatible with large differences between response types' effects, giving a principal interval as wide as $[-1,1]$ for binary outcomes (Example~\ref{ex:principal-endpoint-failure}).
We restrict differences in conditional mean effects between response types.
The restriction applies within each response cell and covariate value.
It can sharpen principal bounds while allowing unequal outcome levels and treatment probabilities.
We specify a measurable function $\gamma_{j,s}(\X)\ge0$ and impose
\begin{equation}\label{eq:effect-heterogeneity-restriction}
|\delta_1-\delta_0|\le\gamma_{j,s}(\X).
\end{equation}
This restriction couples the two outcome arms, so the separate armwise
optimization in Theorem~\ref{thm:unified-contrast} no longer applies.
Theorem~\ref{thm:contraction} shows that the population
calculation nevertheless retains its two assignment evaluations.

At $\gamma=0$, $\delta_1=\delta_0=d$ makes the conditional contrast $\mu_1-\mu_0+(t_0-t_1)d$ independent of the latent label. The assignment-endpoint rule therefore also gives sharp bounds for each conditional principal effect. Equal conditional effects still permit unequal outcome levels and treatment probabilities across response types, so their common value need not be identified. Marginal principal and population bounds can differ because they average across response cells and covariate values differently. SPI (Assumption~\ref{ass:spi}) imposes the stronger restriction $\delta_0=\delta_1=0$.

For the symmetric observed configuration in Example~\ref{ex:principal-endpoint-failure}, the sharp conditional principal interval at $\gamma=0$ is
\[
\left[-\frac{1-2\epsilon}{2(1-\epsilon)},\,
\frac{1-2\epsilon}{2(1-\epsilon)}\right],
\]
whereas unrestricted principal effects range over $[-1,1]$. Supplementary Section~\ref{supp:effect-restrictions} gives the calculation.

For $\gamma>0$, a sharp conditional principal bound may instead be attained at an assignment probability $h$ strictly inside its feasible interval, even when the optimizing assignment is unique.

\begin{example}[A unique interior principal optimum]\label{ex:positive-budget-interior}
In a fixed response cell and covariate group, suppose $F_a=\operatorname{Bernoulli}(\mu_a)$ for $a\in\{0,1\}$, with $w=0.43$, $q=0.57$, $(\mu_0,\mu_1)=(0.17,0.83)$, $\epsilon=0.08$ and $\gamma=0.4$. Figure~\ref{fig:positive-budget-objectives} compares conditional upper bounds for the population effect $M_1-M_0$ and the principal effect $m_1(1)-m_0(1)$ for response type $V=1$. At each feasible assignment probability $h$, each curve maximizes its effect over compatible component outcome means subject to the effect budget. The population maximum occurs at an assignment endpoint. The principal maximum $0.905263$ occurs only at the interior assignment $h=0.750581$ and exceeds the best assignment-boundary value $0.895849$. Section~\ref{supp:effect-restrictions} gives the calculation and verifies that the unique interior optimum persists under small changes in the observed probabilities and means.
\end{example}

\begin{figure}[t]
\centering
\includegraphics[width=\linewidth]{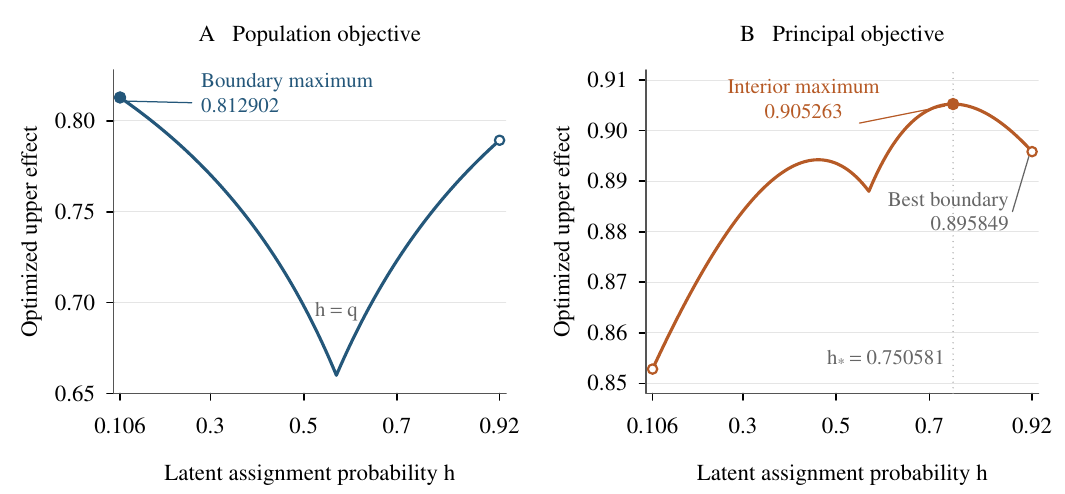}
\caption{Optimized conditional upper effects in Example~\ref{ex:positive-budget-interior}. At each $h$, both objectives maximize over compatible binary outcome means under the same budget $\gamma=0.4$. In panel A, the population maximum occurs at the assignment boundary $h=0.106047$. In panel B, the principal maximum for response type $V=1$ occurs at $h_\star=0.750581$ and exceeds both assignment-boundary values. Vertical scales differ.}\label{fig:positive-budget-objectives}
\end{figure}

For the effect budget in \eqref{eq:effect-heterogeneity-restriction}, the exact
feasible set at fixed $h$ is the polygon
\begin{equation}\label{eq:effect-heterogeneity-polygon}
\mathcal P_\gamma(h)=
\left(\mathcal D_{F_0}(t_0)\times\mathcal D_{F_1}(t_1)\right)
\cap\{(d_0,d_1):|d_1-d_0|\le\gamma\}.
\end{equation}
The linear expression in Theorem~\ref{thm:unified-contrast} is optimized over its
vertices, using only the two assignment endpoints for population targets.
The following result gives the corresponding exact search for principal targets.

\begin{proposition}[Finite optimization under an effect budget]\label{prop:finite-effect-budget}
Under Theorem~\ref{thm:unified-contrast}, suppose outcomes are binary and impose
\eqref{eq:effect-heterogeneity-restriction} with a finite nonnegative budget.
At fixed coupling and covariate value, each mixed cell's lower and upper linear target values
are attained by a finite candidate calculation. At each $h$, the polygon
\(\mathcal P_\gamma(h)\) has at most twelve listed vertices, consisting of
feasible rectangle corners and intersections of rectangle edges with
$d_1-d_0=\pm\gamma$.

Partition \(\mathcal H_\epsilon(w,q)\) at feasible solutions of
$t_a(h)\in\{\mu_a,1-\mu_a\}$.
For each resulting vertex formula, it suffices to evaluate feasible
partition endpoints, feasibility-equality roots and stationary roots of
its target value. Every nontrivial root equation has degree at most two
after clearing positive denominators. Isolated feasible points are retained,
and constant feasible pieces are represented by their boundaries.
Identically zero equations contribute no roots.

Integration gives the sharp scalar bounds. Principal bounds divide the
integrated numerator by the fixed positive stratum mass. Pure cells retain
their observed means. Population contrasts need only the two assignment
endpoints by Theorem~\ref{thm:contraction}.
\end{proposition}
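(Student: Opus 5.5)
The plan is to reduce each mixed cell to a one-dimensional optimization over $h\in\mathcal H_\epsilon(w,q)$. For each fixed $h$ the inner problem is a linear program over the polygon $\mathcal P_\gamma(h)$. I would then use the explicit Bernoulli form of the trimming endpoints to show that the inner optimal value is, piece by piece, a maximum or minimum of finitely many low-degree rational functions of $h$. Sharpness and attainment come from Theorem~\ref{thm:unified-contrast}: the effect budget only restricts which pairs $(\delta_0,\delta_1)$ are used in \eqref{eq:unified-means}, and the completion constructed there works for any admissible pair. So the integrated extrema of the cell values are the sharp bounds, and each value is attained by a compatible law.

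\emph{Step 1 (inner problem at fixed $h$).} At fixed $h$ the target is $\sum_aA_a\mu_a+B_0(h)d_0+B_1(h)d_1$, which is linear in $(d_0,d_1)$. The feasible set $\mathcal P_\gamma(h)$ is a compact polygon. It is nonempty because $\ell_{F_a}(t)\le0\le u_{F_a}(t)$, so $(0,0)\in\mathcal P_\gamma(h)$. Hence the extrema over $\mathcal P_\gamma(h)$ are attained at vertices. Every vertex is one of the following:
\begin{itemize}
\item one of the four rectangle corners;
\item the intersection of one of the four rectangle edges with one of the two lines $d_1-d_0=\pm\gamma$, giving at most eight such points.
\end{itemize}
This gives the stated list of at most twelve candidate vertices. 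Each candidate is retained only when its feasibility inequalities hold.

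\emph{Step 2 (explicit endpoints).} For $F=\operatorname{Bernoulli}(\mu)$, substituting $L_F(t)/t=\max\{0,1-(1-\mu)/t\}$ and $U_F(t)/t=\min\{1,\mu/t\}$ into \eqref{eq:unified-delta-interval} simplifies the endpoints:
\begin{itemize}
\item $u_F(t)=\mu/t$ when $t\ge\mu$, and $u_F(t)=(1-\mu)/(1-t)$ otherwise;
\item $\ell_F(t)=-(1-\mu)/t$ when $t\ge1-\mu$, and $\ell_F(t)=-\mu/(1-t)$ otherwise.
\end{itemize}
The shares $t_1(h)=wh/q$ and $t_0(h)=w(1-h)/(1-q)$ are affine in $h$. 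They lie strictly inside $(0,1)$ on $\mathcal H_\epsilon$ because $\epsilon>0$. So after partitioning $\mathcal H_\epsilon$ at the feasible solutions of $t_a(h)\in\{\mu_a,1-\mu_a\}$, each endpoint on each piece has the form $c/D(h)$. Here $D$ is an affine function with no zero on the piece. The coefficients $B_a(h)$ are affine in $t_a$ and hence in $h$.

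\emph{Step 3 (degree count).} On a fixed piece and for a fixed vertex formula, the target takes one of two forms.
\begin{itemize}
\item At a corner it equals $\text{const}+B_0c_0/D_0+B_1c_1/D_1$. Each ratio (affine)/(affine) has derivative $k/D^2$. Setting $k_0/D_0^2+k_1/D_1^2=0$ and clearing positive denominators gives $k_0D_1^2+k_1D_0^2=0$, which has degree at most two.
\item At an intersection vertex, say $d_0=c/D$ and $d_1=d_0\pm\gamma$ (the case where $d_1$ is the edge coordinate is symmetric), the target equals $(B_0+B_1)c/D\pm\gamma B_1+\text{const}$. Its derivative is $k/D^2+k'$, and clearing $D^2$ again gives a quadratic.
\end{itemize}
The feasibility conditions that switch a vertex on or off are comparisons between an edge value and the opposite bound or the budget line, such as $c/D=c'/D'\pm\gamma$ or $c/D=c'/D'$. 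Clearing denominators turns these into quadratic or linear equations.

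\emph{Step 4 (assembling the cell value).} On each subinterval cut by the partition points and feasibility roots, each retained vertex formula is a smooth rational function. Its extrema therefore occur at the subinterval endpoints or at its stationary roots. Isolated feasible points are kept as candidates. Identically zero equations contribute nothing, and constant pieces are represented by their boundaries. Taking the best candidate over the finitely many formulas and points gives $K^\pm_{j,s}(\X)$ exactly.

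\emph{Step 5 (integration and special cases).} Integrating $p_j^s(\X)K^\pm_{j,s}(\X)$ gives the sharp bounds. For principal targets I divide by the fixed positive stratum mass. Pure cells contribute their observed means. For population contrasts, Theorem~\ref{thm:contraction} applies because the budget set is closed, contains zero and is star-shaped under common scaling, so only the two assignment endpoints need to be evaluated.

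The main obstacle is the careful bookkeeping in Step 3. Every vertex formula, including those in which the roles of $d_0$ and $d_1$ are swapped or the sign of $\gamma$ changes, must reduce to degree at most two. The degenerate cases also need to be covered without loss: vanishing $k$'s, coincident roots, and vertices that are feasible only at a single point. I would first write out the eight intersection formulas explicitly and verify that each contains only one nonconstant $c/D$ term. That property is what keeps the degree at two.
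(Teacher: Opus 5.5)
Your proposal is correct and follows essentially the same route as the paper's proof. You partition $\mathcal H_\epsilon(w,q)$ at the Bernoulli trimming knots so that each rectangle edge becomes a constant over a positive affine function of $h$, and you enumerate the at most twelve corner and strip-edge vertices. You then check that corner objectives, strip-edge objectives and feasibility comparisons all give equations of degree at most two after clearing denominators, keep partition endpoints, feasibility roots and stationary roots as candidates, and obtain the population endpoint rule from Theorem~\ref{thm:contraction} because the budget set is preserved under common scaling.
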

The proof is in Section~\ref{supp:proof-finite-effect-budget}.
At a fixed coupling and covariate value, Proposition~\ref{prop:finite-effect-budget} turns each cell's binary optimization into a finite search over polygon vertices and candidate assignments. The vertices give the extreme compatible pairs of outcome-mean gaps at each assignment. The assignment candidates include interior stationary points and values where feasibility or a vertex formula changes, so the search retains principal optima that an assignment-endpoint search can miss.

\subsection{Supporting identification results}\label{sec:spi}
This section first gives point-identification conditions under LSI, then examines sensitivity to residual treatment selection within response profiles when LSI is relaxed.
For integrable potential outcomes, Assumptions~\ref{ass:sutva}--\ref{ass:positivity} and SPI (Assumption \ref{ass:spi}) identify the population mean below without specifying a response coupling. The principal identity additionally requires Assumption~\ref{ass:theta} and $\E\{e_u(\X)\}>0$.
\begin{align}
 \psi(j,a)&=\psi_g(j,a):=\E\left\{\sum_{s=0}^1p_j^s(\X)\mu_{(j,a)}(s,\X)\right\},\label{eq:gformula}\\
 \psi_u(j,a)&=\frac{\E\{e_u(\X)\mu_{(j,a)}(u_j,\X)\}}{\E\{e_u(\X)\}}.\label{eq:pce-id}
\end{align}
Zero-probability response cells are omitted. Corollaries~\ref{thm:union} and~\ref{thm:pce} in Supplementary Section \ref{supp:point} give these identities, with proofs in Sections~\ref{supp:proof-population-identification} and~\ref{supp:proof-principal-identification}. Homogeneous latent assignment means equal later-treatment probabilities across response types within each mixed cell $(\X,j,s)$, so $h_0=h_1=q$. Under LSI, this also identifies population means while allowing principal means to differ. Section~\ref{supp:point-identification-cases} gives the more general condition of zero products of latent assignment and outcome-mean differences.

\paragraph{Departures from response sufficiency.}\label{sec:residual-selection}\label{sec:sensitivity}
The effect budget $\gamma$ limits effect differences between profiles while maintaining LSI. It does not address an unmeasured trait that affects both later treatment and outcomes within a profile. Supplementary Section~\ref{app:residual} allows bounded residual selection in conditional means, retaining consistency, initial-treatment ignorability and positivity. Relaxing LSI changes which component laws are compatible with the observed outcome distributions, so the earlier identified intervals cannot simply be widened by a bias allowance.

\section{Inference for identified endpoints}\label{sec:estimation}\label{sec:bounds-inference-main}\label{sec:mr}\label{sec:dml}\label{sec:eifs}
We develop a common one-step construction for endpoints satisfying the smoothness and stability conditions below. For a smooth functional $T=\E\{g_\Gamma(\X)\}$, write its influence function as $\varphi_T=g_\Gamma-T+a_\Gamma$, where $\E(a_\Gamma\mid\X)=0$. Nuisance fits from outside each evaluation fold give
\begin{equation}\label{eq:onestep28}
\widehat T=\Pn\{g_{\widehat\Gamma}(\X)+a_{\widehat\Gamma}(\calO)\}.
\end{equation}
A binary endpoint makes the correction explicit. For the population
contrast $\psi(j,1)-\psi(j,0)$ at fixed $(\theta,\epsilon,\gamma)$,
write $q_s=\Prob(Z_2=1\mid\X,Z_1=j,S=s)$ and
$\mu_{sa}=\E(Y\mid\X,Z_1=j,S=s,Z_2=a)$.
The optimized lower or upper cell contrast bound from
Section~\ref{sec:effect-restrictions} is denoted by
$\mathcal B_s(p_0,p_1,q_s,\mu_{s0},\mu_{s1})$.
Its inputs are the response probabilities $p_0,p_1$, observed
later-treatment probability $q_s$ and observed outcome means
$\mu_{s0},\mu_{s1}$, all evaluated at $\X$.
Under these conditions, its complete derivatives give the following map, with
$\Gamma=(\pi,p,q,\mu)$ and conditioning on $\X$ suppressed.
\begin{equation}\label{eq:binary-endpoint-correction}
\begin{aligned}
g_\Gamma&=\sum_{s=0}^1p_j^s\mathcal B_s, 
\quad 
D_z=\one\{z=j\}(\mathcal B_1-\mathcal B_0)
       +\sum_{s=0}^1p_j^s\partial_{p_z}\mathcal B_s,\quad z\in\{0,1\},\\
a_\Gamma&=\sum_{z=0}^1D_z\frac{\one\{Z_1=z\}}{\pi_z}(S-p_z)+\sum_{s=0}^1\frac{\one\{Z_1=j,S=s\}}{\pi_j}
       \partial_{q_s}\mathcal B_s\,(Z_2-q_s)\\
&\quad+\sum_{s,a=0}^1
       \frac{\one\{Z_1=j,S=s,Z_2=a\}}
            {\pi_jq_s^a(1-q_s)^{1-a}}
       \partial_{\mu_{sa}}\mathcal B_s\,(Y-\mu_{sa}).
\end{aligned}
\end{equation}
The first term in $D_j$ differentiates the outer response weights.
Both response margins also enter through the latent prevalence
$w_s=e_{u^{(1)}}/p_j^s$. The derivatives retain this dependence and the
moving assignment or feasibility boundaries in the selected candidate.
The outer response weight $p_j^s$ cancels its factor in the residual denominators in the last two lines.

For a fixed principal stratum $u$, let $\mathcal B_u(\X)$ be its optimized conditional contrast bound under initial treatment $j$. It uses the same inputs in cell $s=u_j$. Its marginal endpoint is $T=N/D$, where $N=\E\{e_u(\X)\mathcal B_u(\X)\}$ and $D=\E\{e_u(\X)\}>0$. Applying the one-step construction separately to the numerator and mass gives cross-fitted estimates $\widehat N$ and $\widehat D$, with centered influence functions $\varphi_N$ and $\varphi_D$. The principal estimator and influence function are
\begin{equation}\label{eq:main-principal-ratio}
\widehat T=\frac{\widehat N}{\widehat D},\qquad
\varphi_T=\frac{\varphi_N-T\varphi_D}{D}.
\end{equation}
Differentiation includes both the principal score $e_u$ and the bound $\mathcal B_u$. Standard errors use the empirical variance of the estimated combined influence function, retaining numerator--mass covariance. Section~\ref{supp:distribution-scores} gives the common residual calculus and the equivalent ratio formula \eqref{eq:common-ratio}.

\paragraph{Binary endpoints.}
Corollary~\ref{cor:restricted-inference} gives joint inference for a fixed finite family of population and principal endpoints at sensitivity inputs chosen independently of the data. Its regularity conditions require a common interior domain, positive principal masses, persistent feasible candidates and bounded Lipschitz derivatives, including those of moving roots. Optimizers must remain separated or join with a common Lipschitz gradient. Uniform nuisance consistency preserves this domain. Section~\ref{supp:restricted-inference} states the precise conditions and the equality reduction required at $\gamma=0$.

For independent observations and a fixed number of cross-fitting folds with
nonvanishing proportions, let $\nu$ collect response margins, observed
later-treatment probabilities and outcome means. Foldwise, suppose
\[
r_n=\|\widehat\nu-\nu\|_2,\qquad
s_n=\|\widehat\pi-\pi\|_2,\qquad
r_n+s_n=o_{\Prob}(1),\qquad
r_n^2+s_nr_n=o_{\Prob}(n^{-1/2}).
\]
The one-step endpoint estimators then satisfy jointly
\begin{equation}\label{eq:main-binary-expansion}
\sqrt n(\widehat T_k-T_k)
=n^{-1/2}\sum_{i=1}^n\phi_k(\calO_i)+o_{\Prob}(1).
\end{equation}
Here $\phi_k$ is the complete endpoint influence function, including the principal numerator and mass correction. Their empirical influence covariance is consistent. Positive marginal variances
give asymptotically nominal pointwise normal intervals. For $M$ fixed endpoints,
the critical value $z_{1-\alpha/(2M)}$ gives simultaneous coverage at least
$1-\alpha$ asymptotically. Theorem~\ref{thm:distribution-inference} in Section~\ref{supp:distribution-inference} treats population endpoints with general outcome laws that admit the stated finite tail-integral representation and satisfy its regularity conditions.

These guarantees do not establish coverage for continuum sensitivity bands or a positivity margin chosen after inspecting fitted compatibility. Positive-probability switches with disagreeing derivatives are also excluded \citep{fangsantos2019}. General continuous principal bounds require a separate optimization regularity argument. Plug-in endpoints need a separate equivalence argument before using one-step standard errors.

Under SPI, the construction reduces to augmented principal-score estimation. Its influence function and rate conditions appear in Section~\ref{app:eif-pce}, alongside the standard longitudinal influence function for population means.

\section{Simulations}\label{sec:simulation-study}\label{sec:simulations}
\subsection{Design and identification geometry}
We assess the finite-sample performance of our estimators and confidence intervals for population and principal effect bounds. The targets are the lower and upper bounds of the population contrast $\psi(1,1)-\psi(1,0)$ and the principal contrast in $U=(1,1)$. Sample sizes are $n\in\{2000,8000,32000\}$.

The two main scenarios share a generating law satisfying LSI. Initial treatment is randomized with probability $1/2$, and the potential responses are independent Bernoulli variables with means $p_0=0.60$ and $p_1=0.65$, giving $\theta=1$. Within each recorded response cell, later treatment depends on the unobserved response $V=S(1-j)$, with $h_1>h_0$ chosen so that $(1-w)h_0+wh_1=q_{js}=0.70$. Binary potential outcomes are independent of assignment given $(\X,U)$, with means $0.04$ higher for $V=1$ than for $V=0$ in both treatment arms. These component means give observed means $\mu_{js0}=0.20$ and $\mu_{js1}=0.70$. Thus, the law has equal conditional effects while violating SPI. The latent positivity margin is $\epsilon=0.08$. Section~\ref{supp:sims} gives the complete generating process.

We present two scenarios that differ in the effect budget used to calculate the bounds. Their identified intervals follow from the finite optimization in Proposition~\ref{prop:finite-effect-budget}, evaluated at the true observed probabilities and averaged as described in Section~\ref{supp:sims}. Baseline sets $\gamma=0$, giving identical population and principal identified intervals $[-0.057971,0.649660]$. The equal-effect reduction in Section~\ref{sec:effect-restrictions} equates their conditional bounds, which here are constant across response cells and covariate values. Their different averaging weights therefore give the same marginal bounds, although their estimators can have different sampling behavior.

Interior sets $\gamma=1$, yielding population interval $[-0.086439,0.676501]$ and principal interval $[-0.310068,0.914516]$. In every target cell, $w=0.60$ and $\mathcal H_\epsilon=[83/150,23/25]$. The principal upper optimum occurs at
\[
h_\star=\frac12+\frac1{\sqrt{15}}\approx0.758199.
\]
Restricting this search to assignment endpoints gives $0.861088$, understating the sharp upper bound by $0.053428$. Population bounds retain the endpoint reduction. This illustrates why Proposition~\ref{prop:finite-effect-budget} includes interior assignments in the search for principal bounds.

All scenarios use synthetic covariates $(X_1,X_2)$ uniformly distributed on $\{-1,1\}^2$. These covariates do not affect the generating probabilities in Baseline and Interior. Section~\ref{supp:sims} gives the specifications and results for two additional scenarios. The Covariates scenario allows response probabilities, later-treatment probabilities and outcome means to vary with the covariates. The History scenario further introduces interactions involving treatment, response and covariates. Both additional scenarios use $\gamma=0$.

Across all scenarios, we use the cross-fitted one-step estimator in Section~\ref{sec:estimation} and construct nominal $95\%$ normal confidence intervals using estimated influence functions. Nuisance probabilities are estimated with the same ridge-logistic models used in the gym application. Comparisons with an oracle version using true nuisance functions assess the effect of nuisance estimation. Plug-in estimates based on the same fitted models assess the contribution of the one-step correction to point estimation. Section~\ref{supp:sims} gives fitting details and the construction of confidence regions for the full identified interval.

\subsection{Estimation and results}
Table~\ref{tab:simulation-reduced-main} reports evaluation metrics for both scenarios at all three sample sizes. Truth gives the identified endpoint. Each scenario and sample size uses $1000$ independently generated datasets.

\begingroup
\small
\setlength{\tabcolsep}{3pt}
\begin{table}[t]
\centering
\caption{Baseline and Interior scenarios at $n\in\{2000,8000,32000\}$. Every row uses the application nuisance learner and the same one-step normal interval, with $R=1000$ generated datasets per scenario and sample size. Truth is the identified endpoint. Width is mean interval length.}
\label{tab:simulation-reduced-main}
\begin{tabular}{@{}rlrrrrrr@{}}
\toprule
$n$ & Endpoint & Truth & Bias & RMSE & SD & Cov. (\%) & Width \\
\midrule
\multicolumn{8}{@{}l}{\textbf{Baseline}} \\
2,000 & Pop. L & -0.0580 & -0.020 & 0.041 & 0.036 & 92.4 & 0.148 \\
2,000 & Pop. U & 0.6497 & 0.008 & 0.024 & 0.023 & 88.8 & 0.078 \\
2,000 & Princ. L & -0.0580 & -0.024 & 0.056 & 0.050 & 93.7 & 0.205 \\
2,000 & Princ. U & 0.6497 & 0.006 & 0.028 & 0.027 & 92.1 & 0.097 \\
\addlinespace[2pt]
8,000 & Pop. L & -0.0580 & -0.005 & 0.014 & 0.013 & 93.6 & 0.052 \\
8,000 & Pop. U & 0.6497 & 0.000 & 0.010 & 0.010 & 94.9 & 0.038 \\
8,000 & Princ. L & -0.0580 & -0.003 & 0.017 & 0.017 & 95.5 & 0.066 \\
8,000 & Princ. U & 0.6497 & -0.000 & 0.012 & 0.012 & 96.0 & 0.047 \\
\addlinespace[2pt]
32,000 & Pop. L & -0.0580 & -0.001 & 0.007 & 0.006 & 93.6 & 0.025 \\
32,000 & Pop. U & 0.6497 & -0.000 & 0.005 & 0.005 & 95.8 & 0.019 \\
32,000 & Princ. L & -0.0580 & -0.001 & 0.008 & 0.008 & 95.6 & 0.031 \\
32,000 & Princ. U & 0.6497 & -0.000 & 0.006 & 0.006 & 95.7 & 0.023 \\
\midrule
\multicolumn{8}{@{}l}{\textbf{Interior}} \\
2,000 & Pop. L & -0.0864 & -0.023 & 0.048 & 0.043 & 93.2 & 0.170 \\
2,000 & Pop. U & 0.6765 & -0.000 & 0.019 & 0.019 & 94.6 & 0.072 \\
2,000 & Princ. L & -0.3101 & -0.041 & 0.090 & 0.080 & 92.9 & 0.268 \\
2,000 & Princ. U & 0.9145 & -0.003 & 0.037 & 0.037 & 89.2 & 0.126 \\
\addlinespace[2pt]
8,000 & Pop. L & -0.0864 & -0.003 & 0.016 & 0.015 & 95.4 & 0.059 \\
8,000 & Pop. U & 0.6765 & 0.000 & 0.009 & 0.009 & 95.2 & 0.036 \\
8,000 & Princ. L & -0.3101 & -0.008 & 0.023 & 0.022 & 94.1 & 0.071 \\
8,000 & Princ. U & 0.9145 & 0.000 & 0.016 & 0.016 & 95.0 & 0.063 \\
\addlinespace[2pt]
32,000 & Pop. L & -0.0864 & -0.000 & 0.006 & 0.006 & 94.7 & 0.025 \\
32,000 & Pop. U & 0.6765 & -0.000 & 0.005 & 0.005 & 95.2 & 0.018 \\
32,000 & Princ. L & -0.3101 & -0.000 & 0.006 & 0.006 & 95.2 & 0.023 \\
32,000 & Princ. U & 0.9145 & -0.001 & 0.008 & 0.008 & 95.3 & 0.031 \\
\bottomrule
\end{tabular}
\end{table}
\endgroup

Overall, both scenarios exhibit small bias and RMSE at larger sample sizes, with coverage close to the nominal $95\%$ level for most endpoints. RMSE and interval width decrease as sample size increases. Some intervals undercover at smaller sizes. Baseline population upper coverage improves from $88.8\%$ at $n=2000$ to $95.8\%$ at $n=32000$, and Interior principal upper coverage from $89.2\%$ to $95.3\%$. 

The one-step correction accounts for nuisance estimation in endpoint inference under the conditions of Section~\ref{sec:estimation}. Its finite-sample bias and RMSE need not improve on those of plug-in estimation, as Table~\ref{tab:simulation-method-comparison} in Section~\ref{supp:sims} illustrates. The full-grid paired comparisons documented in Section~\ref{supp:sims} show larger mean squared error than plug-in estimation for all $16$ endpoints at $n=2000$. At $n=32000$, it is smaller for all eight population endpoints but remains larger for seven of eight principal endpoints, with Interior upper the exception. Comparisons with saturated nuisance estimates show that the observed deficits are not confined to the ridge-logistic learner. True-influence remainders decrease with sample size but remain material for the History population lower endpoint, where negative fixed-fit drift persists (Section~\ref{supp:diagnostic-replay}). These comparisons identify finite-sample limitations of the one-step approximation.

Identification uncertainty accounts for most of the Interior principal outward-region width. Its true identified width is $1.2246$. At $n=32000$, the mean region width $1.2552$ decomposes into a mean estimated endpoint span of $1.2242$ and mean Bonferroni padding of $0.0310$. Whole-set enclosure is $98.3\%$, a different criterion from calibrated two-sided inference for each endpoint or containment of the generating causal effect.

Table~\ref{tab:simulation-reduced-detail} in Section~\ref{supp:sims} gives additional results for the Covariates and History scenarios. Bias magnitude and RMSE again decrease with sample size. At larger sample sizes, coverage is close to the nominal $95\%$ level for most endpoints, with modest undercoverage persisting for some lower endpoints. At $n=32000$, population lower coverage is $93.2\%$ in Covariates and $92.2\%$ in History, whose principal lower coverage is $92.9\%$.

\section{Gym attendance after an incentive program}\label{sec:application}

\citet{milkman2021} conducted a field experiment at a US fitness chain to test digital programs for increasing gym attendance and assess whether any increases persisted after the programs ended. We focus on a four-week program combining exercise planning, text reminders and small rewards for gym visits. Our analysis includes $1{,}932$ members retained in the released data, with intended assignment to this program ($Z_1=1$) or its placebo comparison ($Z_1=0$). Section~\ref{supp:gym-design} defines the cohort and attendance coding. The response $S$, later action $Z_2$ and outcome $Y$ indicate recorded gym attendance in weeks 1, 2 and 5. The program operates in weeks 1--4.

Our population target is the difference in week-5 attendance probabilities between at least one gym visit and no visits in week 2 among retained members, with initial assignment fixed at $Z_1=1$. The principal target is the same contrast among always-attenders, $U=(1,1)$, who would attend in week 1 under either initial assignment. The program continues under both interventions. Consistency in Assumption~\ref{ass:sutva} is imposed on binary attendance, so we assume that differences in visit number and timing among attendees do not alter a member's week-5 outcome.  Randomization supports Assumption~\ref{ass:lsi}(a), provided exchangeability is preserved after retention. Assumption~\ref{ass:lsi}(b) allows confounding through latent attendance tendencies summarized by $U$ and baseline covariates, including prior attendance, but excludes residual confounding by motivation or schedules within these groups. For example, unrecorded commitment to regular exercise may affect both week-2 and week-5 attendance. The profile $U$ may summarize this commitment by distinguishing members who would attend initially under either assignment from those whose attendance would change with assignment. Randomization provides initial positivity before retention. Both week-2 attendance states occur in every observed $(Z_1,S)$ group, but Assumption~\ref{ass:positivity} requires conditional positivity in the retained population, including within latent response profiles. Section~\ref{supp:emp-policy} details the interventions and identifying assumptions. We use a latent positivity margin of $\epsilon=0.005$, selected after checking compatibility with the fitted law (Section~\ref{supp:emp-nuisance}).

Cross-fitted ridge-logistic models adjust for demographics, recruitment cohort, enrollment date and baseline attendance, with treatment-history interactions (Section~\ref{supp:emp-nuisance}). At $\theta=1$, the plug-in SPI effects are $18.62$ percentage points for the population and $23.24$ for always-attenders. These are the reference values in Figure~\ref{fig:gym-gamma}, computed from the same fitted law as the identification regions. They do not depend on the positivity margin. Under SPI, the nominal pointwise $95\%$ confidence intervals based on augmented estimates lie entirely above zero for both targets (Table~\ref{tab:gym-spi}).

\begin{figure}[t]
\centering
\includegraphics[width=\textwidth]{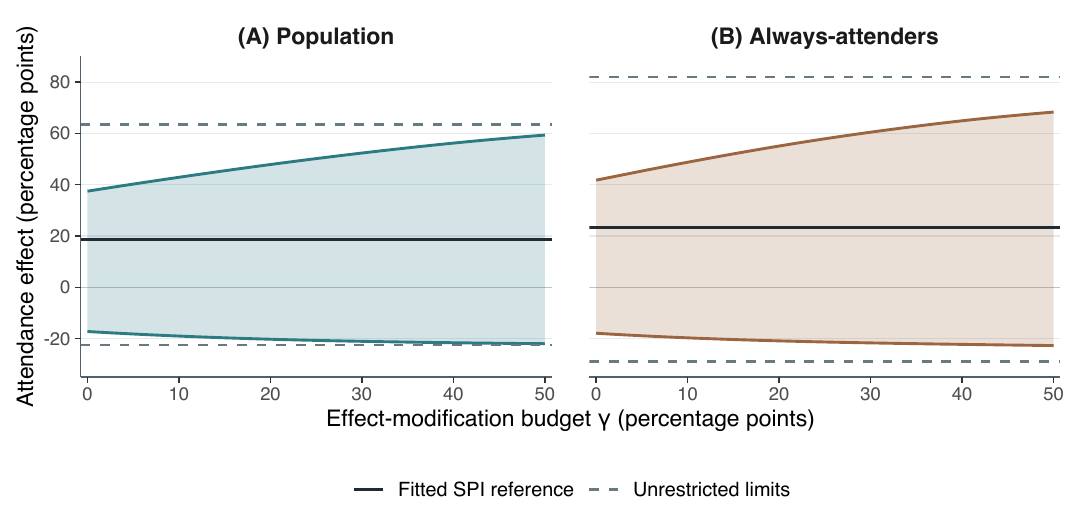}
\caption{Effect restrictions and attendance persistence. Shading gives fitted-law identification regions for the retained population (panel A) and always-attenders (panel B) at $\theta=1$ and $\epsilon=0.005$. Both regions widen with $\gamma$ and include zero even at $\gamma=0$. The budget $\gamma$ is displayed in percentage points. Solid horizontal lines are plug-in SPI values for the same fitted law, and dashed lines are unrestricted limits. The shading does not represent sampling uncertainty.}\label{fig:gym-gamma}
\end{figure}

Figure~\ref{fig:gym-gamma} shows how the population and principal identification regions widen as the effect budget $\gamma$ increases. At $\gamma=0.10$, the effect of week-2 attendance on week-5 attendance can differ by at most ten percentage points between the two response types in each cell and covariate group. The types may still have different outcome probabilities under either action. The choice of $\gamma$ requires subject-matter judgment. LSI remains imposed, so the budget does not address residual confounding within profiles.

At $\theta=1$ and $\gamma=0.10$, the population and principal regions are $[-19.01,42.89]$ and $[-19.69,48.76]$ percentage points, respectively, with widths $28\%$ and $38\%$ below their unrestricted values. These restrictions substantially narrow the compatible effects but leave both increases and decreases in week-5 attendance possible for either target. Even at $\gamma=0$, the regions remain $[-17.23,37.48]$ and $[-17.90,41.76]$. Equal conditional effects still permit selection between types with different outcome probabilities, so they do not determine the effect's sign. SPI additionally equates the types' outcome probabilities under each action, yielding the positive point estimates above.

Panel (B) of Figure~\ref{fig:gym-gamma} uses the full principal search, including interior assignment candidates. Its dashed lines show the unrestricted region $[-28.97,81.88]$ percentage points at $\theta=1$. For comparison, an endpoint-only search would give $[-25.55,75.35]$. This shortcut would exclude effects compatible with the same observed data and assumptions, understating the identified interval's width by $9.94$ percentage points. Interior assignments are therefore needed for sharp principal bounds in this application. For the population contrast, enforcing a common assignment reduces interval width by $0.71$ percentage points relative to separately bounding and subtracting the two means.

Joint changes in response coupling and effect budget retain both signs throughout the grid in Table~\ref{tab:gym-joint-sensitivity}. Different couplings imply different proportions and covariate distributions of always-attenders, so principal comparisons also depend on the assumed composition of that stratum. The displayed grid of common odds ratios does not optimize over all covariate-dependent couplings (Section~\ref{supp:emp-joint-sensitivity}).
Section~\ref{supp:emp-inference} gives the additional one-step endpoint analysis at $\gamma=0$, $0.05$ and $0.10$. All six nominal simultaneous outward regions include zero. These regions incorporate sampling uncertainty in the estimated endpoints. 

\section{Discussion}\label{sec:discussion}
An open question is whether the assignment-endpoint reduction extends to restrictions linking different response cells. For example, equal latent outcome-mean gaps across cells may not be preserved when each cell uses a different contraction factor. A characterization of restrictions that permit coordinated contractions would clarify the scope of this extension. Longer treatment histories pose a related challenge because several assignment mechanisms must remain compatible with one joint response profile. A further direction is inference for optimized bounds when competing candidates meet with different derivatives. Confidence procedures that remain valid near such switches would accommodate settings where small changes in the estimated law can change which latent model determines a bound.

\bibliographystyle{Chicago}\bibliography{literature}

\clearpage
\setcounter{page}{1}
\normalsize
\begin{center}
{\Large\bfseries Supplementary material}\\[4pt]
{\normalsize Sharp bounds for population and principal effects under latent sequential ignorability}\\[3pt]
Yuki Ohnishi
\end{center}
\allowdisplaybreaks[1]
\setcounter{section}{0}
\renewcommand{\thesection}{S\arabic{section}}
\renewcommand{\theHsection}{supp.S\arabic{section}}
\setcounter{equation}{0}
\renewcommand{\theequation}{S\arabic{equation}}
\renewcommand{\theHequation}{supp.S\arabic{equation}}
\setcounter{table}{0}
\renewcommand{\thetable}{S\arabic{table}}
\renewcommand{\theHtable}{supp.S\arabic{table}}
\setcounter{figure}{0}
\renewcommand{\thefigure}{S\arabic{figure}}
\renewcommand{\theHfigure}{supp.S\arabic{figure}}
\setcounter{theorem}{0}\renewcommand{\thetheorem}{S\arabic{theorem}}\renewcommand{\theHtheorem}{supp.S\arabic{theorem}}
\setcounter{lemma}{0}\renewcommand{\thelemma}{S\arabic{lemma}}\renewcommand{\theHlemma}{supp.S\arabic{lemma}}
\setcounter{proposition}{0}\renewcommand{\theproposition}{S\arabic{proposition}}\renewcommand{\theHproposition}{supp.S\arabic{proposition}}
\setcounter{corollary}{0}\renewcommand{\thecorollary}{S\arabic{corollary}}\renewcommand{\theHcorollary}{supp.S\arabic{corollary}}
\setcounter{assumption}{0}\renewcommand{\theassumption}{S\arabic{assumption}}\renewcommand{\theHassumption}{supp.S\arabic{assumption}}
\setcounter{remark}{0}\renewcommand{\theremark}{S\arabic{remark}}\renewcommand{\theHremark}{supp.S\arabic{remark}}

\setcounter{example}{0}\renewcommand{\theexample}{S\arabic{example}}\renewcommand{\theHexample}{supp.S\arabic{example}}

\section{Compatibility and sharp bounds}\label{supp:compatibility}\label{app:identification}\label{supp:bounds}
\subsection{Mixtures, bias and outcome compatibility}\label{sec:bias}\label{app:lemma-mixture}\label{app:bias}\label{sec:outcome-compatibility}\label{sec:full-law-bounds}\label{app:geometry}
For a positive-probability cell $(j,s)$, write $h_u^{(a)}=h_u^a(1-h_u)^{1-a}$. In a mixed cell, the assignment and outcome-mean gaps are
\begin{equation}\label{eq:deltarho}
 \rho(j,s,\X)=h_{u^{(1)}}(j,\X)-h_{u^{(0)}}(j,\X),\qquad
 \delta_{(j,a)}(j,s,\X)=m_{(j,a)}(u^{(1)},\X)-m_{(j,a)}(u^{(0)},\X).
\end{equation}
All sums below omit zero-score strata. Mean statements in this subsection assume $\E|Y|<\infty$.
\begin{lemma}[Selected mixture under LSI]\label{lem:mixture}
Under Assumptions~\ref{ass:sutva}--\ref{ass:positivity},
\begin{align}
 \Prob(U=u\mid\X,Z_1=j,S=s)&=w_u(j,s,\X),\label{eq:post-cell}\\
 q_1(j,s,\X)&=\sum_u w_uh_u,\label{eq:q-mixture}\\
 \Prob(U=u\mid\X,Z_1=j,S=s,Z_2=a)&=\widetilde w_u^{\,a}=w_uh_u^{(a)}/q_a,\label{eq:post-tilted}\\
 \mu_{(j,a)}(s,\X)&=\sum_u\widetilde w_u^{\,a}m_{(j,a)}(u,\X).\label{eq:tilted-mixture}
\end{align}
The last identity also holds for conditional outcome laws without a moment assumption.
\end{lemma}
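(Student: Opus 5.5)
The plan is to prove the four identities in order, each from the one before, using only consistency, Assumption~\ref{ass:lsi} and Bayes' rule. Throughout, fix a covariate value and a cell $(j,s)$ with $p_j^s(\X)>0$. Sums over $u$ run over $\calU(j,s)$ with $e_u(\X)>0$.

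\textbf{Step 1: identity \eqref{eq:post-cell}.} By consistency, on $\{Z_1=j\}$ we have $S=S(j)=U_j$. So the event $\{Z_1=j,S=s,U=u\}$ is empty unless $u_j=s$, and otherwise equals $\{Z_1=j,U=u\}$. Part (a) of Assumption~\ref{ass:lsi} gives $Z_1\indep U\mid\X$. Hence
\[
\Prob(U=u\mid\X,Z_1=j,S=s)=\frac{\pi_j(\X)\,e_u(\X)}{\pi_j(\X)\,\Prob\{S(j)=s\mid\X\}}.
\]
The denominator identifies $p_j^s(\X)$, again by consistency and first-stage ignorability. The ratio is therefore $e_u/p_j^s=w_u$.

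\textbf{Step 2: identity \eqref{eq:q-mixture}.} Apply the law of total probability over $U$ within the cell. Since $\{S=s,Z_1=j\}$ is determined by $(Z_1,U)$, conditioning on $(\X,Z_1=j,S=s,U=u)$ is the same as conditioning on $(\X,Z_1=j,U=u)$. The conditional probability of $Z_2=1$ there is $h_u(j,\X)$. Summing against the weights from Step 1 gives $q_1=\sum_u w_u h_u$.

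\textbf{Step 3: identity \eqref{eq:post-tilted}.} Use Bayes' rule on $Z_2=a$ with prior $w_u$ and likelihood $h_u^{(a)}$. The normalizing constant is $q_a$ by Step 2, with $q_0=1-q_1$. Positivity guarantees $q_a>0$, so the ratio is well defined.

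\textbf{Step 4: identity \eqref{eq:tilted-mixture}, the main point.} We prove it at the level of laws, which also gives the moment-free claim. For a Borel set $A$, consistency gives
\[
\Prob(Y\in A\mid\X,Z_1=j,S=s,Z_2=a)=\sum_u\widetilde w_u^{\,a}\,\Prob\{Y(j,a)\in A\mid\X,Z_1=j,U=u,Z_2=a\}.
\]
The key claim is that the inner probability equals $\Prob\{Y(j,a)\in A\mid\X,U=u\}$. Two facts deliver this. First, part (b) of Assumption~\ref{ass:lsi} gives $Z_2\indep Y(j,a)\mid\X,Z_1,U$, which drops $Z_2$. Second, part (a) gives $Z_1\indep\mathcal F\mid\X$, which implies $Z_1\indep Y(j,a)\mid\X,U$ and so drops $Z_1$.

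Integrating $y$ against these laws yields \eqref{eq:tilted-mixture} whenever $\E|Y|<\infty$, using $m_{(j,a)}(u,\X)=\E\{Y(j,a)\mid U=u,\X\}$.

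The main obstacle is rigor rather than ideas. We need regular conditional distributions, which exist because outcomes are real valued. We must also handle null conditioning sets by omitting zero-score strata and zero-probability cells. Positivity ensures $h_u^{(a)}>0$ on the retained strata, so every conditional probability used above is well defined.
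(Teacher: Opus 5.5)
Your proof is correct and follows the same route as the paper, which compresses it into one line. You use consistency and first-stage ignorability for the cell posterior, total probability and Bayes' rule for the next two identities, and second-stage LSI together with part (a) (by weak union) to replace each selected component law by the profile's full law. Working at the level of laws in the last step is exactly what gives the moment-free version.
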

The proof is in Section~\ref{supp:proof-mixture}.
\begin{proposition}[Bias correction and assignment feasibility]\label{prop:bias}\label{prop:sens}
Under the same assumptions, the following hold.
\begin{enumerate}[label=(\roman*),nosep]
\item The g-formula bias is
\begin{equation}\label{eq:bias}
 B_{\zbar}:=\psi_g(\zbar)-\psi(\zbar)
 =\E\sum_s p_{z_1}^s\frac{(-1)^{1-z_2}w(1-w)\delta_{\zbar}\rho}{q_{z_2}}.
\end{equation}
Its cell integrand is $\Cov_w(m_{\zbar},h^{(z_2)})/q_{z_2}$. Thus fixed $(\delta,\rho,\theta)$ identify $\psi=\psi_g-B_{\zbar}$ whenever the compatible model is nonempty.
\item In a mixed cell, $h_{u^{(1)}}=q_1+(1-w)\rho$ and $h_{u^{(0)}}=q_1-w\rho$. Both lie in $[\epsilon,1-\epsilon]$ exactly when
\[
 \max\left\{\frac{\epsilon-q_1}{1-w},\frac{q_1-1+\epsilon}{w}\right\}
 \le\rho\le
 \min\left\{\frac{1-\epsilon-q_1}{1-w},\frac{q_1-\epsilon}{w}\right\}.
\]
This restricts assignment only. Outcome compatibility is imposed below. In a pure cell the bias is zero and only the observed propensity is constrained.
\end{enumerate}
\end{proposition}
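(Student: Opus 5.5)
Part (ii) is elementary algebra, so I would start there and then prove part (i) by integrating the cell-level covariance from Lemma~\ref{lem:mixture}.

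\emph{Part (ii).} In a mixed cell only $u^{(0)}$ and $u^{(1)}$ carry positive weight, with $w_{u^{(1)}}=w$ and $w_{u^{(0)}}=1-w$.
\begin{itemize}
\item By \eqref{eq:q-mixture}, $q_1=wh_{u^{(1)}}+(1-w)h_{u^{(0)}}$.
\item Set $\rho=h_{u^{(1)}}-h_{u^{(0)}}$. Solving the two linear equations gives $h_{u^{(1)}}=q_1+(1-w)\rho$ and $h_{u^{(0)}}=q_1-w\rho$.
\item Imposing $\epsilon\le h_{u^{(1)}}\le1-\epsilon$ and $\epsilon\le h_{u^{(0)}}\le1-\epsilon$ gives four linear inequalities in $\rho$. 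Dividing by the positive $1-w$ or $w$ yields exactly the displayed max/min interval.
\end{itemize}
In a pure cell there is a single $u$ with $w_u=1$. Then $h_u=q_1$, and the only constraint is positivity of the observed propensity.

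\emph{Part (i), cell-level identity.} Fix $\zbar=(z_1,z_2)$ and a positive-probability cell $(z_1,s)$.
\begin{itemize}
\item Assumption~\ref{ass:lsi}(a) and consistency make the weights $p_{z_1}^s w_u$ equal to $e_u$. Hence $\psi(\zbar)=\E\sum_s p_{z_1}^s\sum_u w_u m_{\zbar}(u,\X)$, and the cell population mean is $\sum_u w_u m_{\zbar}$.
\item The g-formula uses $\mu_{\zbar}(s,\X)=\sum_u\widetilde w_u^{\,z_2}m_{\zbar}(u,\X)$ from \eqref{eq:tilted-mixture}, with $\widetilde w_u^{\,z_2}=w_uh_u^{(z_2)}/q_{z_2}$ from \eqref{eq:post-tilted}.
\item By \eqref{eq:q-mixture}, $\sum_u w_uh_u^{(z_2)}=q_{z_2}$. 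So the cell difference is
\[
\sum_u w_u m\bigl(h^{(z_2)}/q_{z_2}-1\bigr)=\Cov_w\bigl(m_{\zbar},h^{(z_2)}\bigr)/q_{z_2},
\]
where the covariance is taken over $U\sim w$. This is the claimed integrand.
\end{itemize}

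\emph{Part (i), two-point reduction.} In a mixed cell the distribution over $\{u^{(0)},u^{(1)}\}$ puts mass $w$ on $u^{(1)}$. For any two-point law, the covariance equals $w(1-w)$ times the product of the differences. Here $m$ differs by $\delta_{\zbar}$. The difference of $h^{(z_2)}$ is $\rho$ when $z_2=1$ and $-\rho$ when $z_2=0$, i.e.\ $(-1)^{1-z_2}\rho$. Multiplying by $p_{z_1}^s$, summing over $s$ and taking $\E$ gives \eqref{eq:bias}. In a pure cell the covariance is zero, so the bias vanishes there.

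\emph{Identification claim and technical points.} $\psi_g$ is a functional of the observed law. $w$ is determined by $\theta$ through Lemma~\ref{lem:plackett}, and $q_{z_2}$ is observed. So fixing $(\delta,\rho,\theta)$ determines $B_{\zbar}$, and $\psi=\psi_g-B_{\zbar}$ whenever some compatible law exists. Integrability follows from $\E|Y|<\infty$, since $|\delta|$ is bounded by sums of conditional means. Division by $q_{z_2}$ is safe because $q_{z_2}\ge\epsilon$ by positivity.

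The main obstacle is bookkeeping rather than depth. One must justify that cell weights equal $e_u/p_{z_1}^s$, which is Lemma~\ref{lem:mixture}. One must also handle zero-score strata and zero-probability cells by omission.
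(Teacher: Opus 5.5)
Your proof is correct and follows the paper's own argument. For (i) you use the mixture identity of Lemma~\ref{lem:mixture} to write the cell difference as $\Cov_w(m_{\zbar},h^{(z_2)})/q_{z_2}$, evaluate it on the two-point law and integrate. For (ii) you solve $wh_{u^{(1)}}+(1-w)h_{u^{(0)}}=q_1$, $h_{u^{(1)}}-h_{u^{(0)}}=\rho$. One point needs tightening: the bound on $|\delta|$ by sums of conditional means does not by itself give integrability. That comes from positivity together with LSI, for example $\E|Y(j,a)|\le\epsilon^{-2}\E|Y|$, which is what the paper's phrase ``integrability and positivity'' refers to.
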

The proof is in Section~\ref{supp:proof-bias}.

\paragraph{Outcome compatibility.}
Theorem~\ref{thm:unified-contrast} gives the following equivalent sensitivity formulas. For the target arm's conditional law $F$ and selected component share $t$, write
\begin{equation}\label{eq:trimmed-means}
T_F^-(t)=L_F(t)/t,\qquad T_F^+(t)=U_F(t)/t.
\end{equation}
\begin{proposition}[Compatible sensitivity inputs]\label{prop:full-law-compatibility}
For a measurable response coupling reproducing the observed margins and measurable inputs $(\delta_{\zbar},\rho)$, suppose first-treatment probabilities lie in $[\epsilon,1-\epsilon]$, $\epsilon>0$, and $\E|Y|<\infty$. A compatible LSI law exists if and only if the assignment interval in Proposition~\ref{prop:sens}(ii) holds and, in every mixed cell and target arm,
\begin{equation}\label{eq:outcome-compatibility}
\frac{T_F^-(t)-\mu}{1-t}\le\delta_{\zbar}\le
\frac{T_F^+(t)-\mu}{1-t}.
\end{equation}
Pure cells require only their observed treatment probabilities in $[\epsilon,1-\epsilon]$. For binary outcomes, \eqref{eq:outcome-compatibility} is
\begin{equation}\label{eq:binary-delta-compatibility}
\max\left\{-\frac{\mu}{1-t},-\frac{1-\mu}{t}\right\}
\le\delta_{\zbar}\le
\min\left\{\frac{1-\mu}{1-t},\frac{\mu}{t}\right\}.
\end{equation}
\end{proposition}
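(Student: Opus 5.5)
The plan is to reduce the statement to Theorem~\ref{thm:unified-contrast} by translating the sensitivity inputs $(\delta_{\zbar},\rho)$ into the $(h,\delta_a)$ parametrization used there.

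\textbf{Assignment part.} Fix a mixed cell. By Proposition~\ref{prop:sens}(ii), $\rho$ determines $h=h_{u^{(1)}}=q_1+(1-w)\rho$ and $h_0=q_1-w\rho$. The $\rho$-interval stated there is exactly the requirement that both $h$ and $h_0$ lie in $[\epsilon,1-\epsilon]$. Rearranging these inequalities shows that this holds if and only if $h\in\mathcal H_\epsilon(w,q)$ of \eqref{eq:unified-h-interval}. Once $h$ is fixed, the induced shares $t_a(h)=wh^{(a)}/q_a$ are fixed as well. These are precisely the selected shares $\widetilde w^{\,a}_{u^{(1)}}$ of Lemma~\ref{lem:mixture}.

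\textbf{Outcome part and assembly.} Theorem~\ref{thm:unified-contrast} says the compatible latent means are exactly those of \eqref{eq:unified-means}, with $h\in\mathcal H_\epsilon$ and $\delta_a\in\mathcal D_{F_a}\{t_a(h)\}$. Writing out $\mathcal D_F(t)$ with $T^\pm_F$ from \eqref{eq:trimmed-means} gives precisely \eqref{eq:outcome-compatibility}.

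\emph{Necessity.} Start from a compatible LSI law. Lemma~\ref{lem:mixture} shows that $F$ is the $t$-mixture of the two component laws. The component with share $t$ therefore has mean in $[T_F^-(t),T_F^+(t)]$, by the trimming (rearrangement) argument that allocates the lowest or highest fraction $t$ of the quantile function. Since \eqref{eq:unified-means} gives that component mean as $\mu+(1-t)\delta$, the inequality \eqref{eq:outcome-compatibility} follows.

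\emph{Sufficiency.} This is the existence half of Theorem~\ref{thm:unified-contrast}, which I take as given. It constructs component laws $G_{a1},G_{a0}$ as measurable mixtures of the trimmed laws, with mixture weight chosen to hit the prescribed $\delta$. It then joins them with the fixed coupling and one assignment $h$ per cell into a full law. Measurability of the inputs $(\delta,\rho)$ carries over to this choice.

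\textbf{Remaining details.} Pure cells impose only the observed propensity constraint, because there $w\in\{0,1\}$. The moment condition $\E|Y|<\infty$ replaces bounded support and suffices for the tail integrals to be finite. I expect the only delicate point to be this last one: checking that the existence construction in Theorem~\ref{thm:unified-contrast} does not actually use boundedness. Trimming and the quantile integrals need only integrability, so I would verify that step explicitly.

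\textbf{Binary case.} When $F=\mathrm{Bernoulli}(\mu)$, one has $T^+_F(t)=\min\{1,\mu/t\}$ and $T^-_F(t)=\max\{0,1-(1-\mu)/t\}$. Substituting these into \eqref{eq:outcome-compatibility} and simplifying gives $\bigl(\min\{1,\mu/t\}-\mu\bigr)/(1-t)=\min\{(1-\mu)/(1-t),\,\mu/t\}$. The lower limit follows the same way, which yields \eqref{eq:binary-delta-compatibility}.
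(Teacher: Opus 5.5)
Your reduction is the paper's own argument. The $\rho$-interval of Proposition~\ref{prop:sens}(ii) is exactly $h\in\mathcal H_\epsilon(w,q)$, and \eqref{eq:outcome-compatibility} is $\delta_{\zbar}\in\mathcal D_F(t)$. Both directions come from the allocation argument of Theorem~\ref{thm:unified-contrast}. Your binary calculation through the Bernoulli trimmed means is correct. It is equivalent to the paper's shortcut of requiring $m_1=\mu+(1-t)\delta$ and $m_0=\mu-t\delta$ to lie in $[0,1]$.

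The one step you leave open is the one the paper's proof actually supplies, and your suggested resolution does not close it. You are right that the allocation kernels need only an integrable $F_a$. The measurable-selection and intermediate-value parts of Theorem~\ref{thm:unified-contrast} used boundedness, but they are not needed here, since $(\delta_{\zbar},\rho)$ are given. However, finite $L_F(t)$ and $U_F(t)$ only make each constructed component law integrable conditional on $(\X,U)$. A compatible full law also needs the constructed $Y(j,a)$ to be integrable unconditionally, so that $\psi(\zbar)$ and the bias identity are defined. Theorem~\ref{thm:unified-contrast} obtained this from bounded support. The component laws are dominated only by $F_a/t_a$ and $F_a/(1-t_a)$. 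Integrating over $\X$ therefore needs lower bounds on $\pi_j$ and on the latent probabilities of receiving arm $a$. That is a positivity fact, not an integrability fact. The paper's argument is that, in the constructed law, first-stage ignorability and LSI give
\[
\E|Y|\ge\E\{|Y|\one\{Z_1=j,Z_2=a\}\}
=\E\{\pi_j(\X)\Prob(Z_2=a\mid\X,Z_1=j,U)\,\E(|Y(j,a)|\mid\X,U)\}
\ge\epsilon^2\E|Y(j,a)|.
\]
The last inequality holds because first-treatment probabilities, and by the assignment interval the latent second-treatment probabilities, lie in $[\epsilon,1-\epsilon]$. The hypothesis on first-treatment probabilities enters here, and your proposal never uses it, which is what signals the missing piece.
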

The proof is in Section~\ref{supp:proof-full-law-compatibility}.

\paragraph{Sequence bounds.}
Under the assumptions of Theorem~\ref{thm:unified-contrast}, the contraction in Theorem~\ref{thm:contraction} gives the following sequence-mean specialization. For the target arm's feasible tilt interval $[t^-_\epsilon,t^+_\epsilon]$ and deviations $E_F^\pm$ in \eqref{eq:trimmed-deviations}, define
\[
D^\pm_{F,\epsilon}=\max\{E_F^\pm(t^-_\epsilon),E_F^\pm(t^+_\epsilon)\},
\]
with $D^\pm=0$ in pure cells. The sharp sequence-mean interval is
\begin{equation}\label{eq:full-law-region}
\left[\psi_g(\zbar)-\E\sum_s p_{z_1}^s D^-_{F_{\zbar,s,\X},\epsilon},\quad
\psi_g(\zbar)+\E\sum_s p_{z_1}^s D^+_{F_{\zbar,s,\X},\epsilon}\right].
\end{equation}
An unspecified coupling gives the union over measurable couplings reproducing the margins. The endpoint argument is included in Section~\ref{supp:proof-shared-tilt}.

\subsection{Support bounds as a specialization}\label{supp:sens-est}\label{app:bounds}
For a fixed target cell abbreviate $(w,q_{z_2},\mu_{\zbar})$ by $(w,q,\mu)$, and let $A=y_{\max}-\mu$, $B=\mu-y_{\min}$. Under only a mixture mean and support $[y_{\min},y_{\max}]$, the exact component restriction at $0<t<1$ is
\[
 \max\{y_{\min},\mu-(1-t)A/t\}\le m_1
 \le\min\{y_{\max},\mu+(1-t)B/t\},\qquad
 m_0=\frac{\mu-tm_1}{1-t}.
\]
It is an outer constraint for a general observed outcome law, and coincides with its trimmed-mean constraint when that law is supported on the two endpoints. Define
\begin{equation}\label{eq:envelope}
 \begin{split}
 f(t)&=\min\left\{\frac{(w-t)A}{1-t},\left(\frac wt-1\right)B\right\},\quad t\le w,\\
 g(t)&=\min\left\{\frac{(t-w)B}{1-t},\left(1-\frac wt\right)A\right\},\quad t\ge w.
 \end{split}
\end{equation}
The boundary conventions are $f(0)=wA$ and $g(1)=(1-w)A$.

Let $\epsilon^*>0$ be the essential infimum of
$\min\{\pi_1,1-\pi_1,q_1(j,s,\X),1-q_1(j,s,\X)\}$ over both initial arms and supported response cells. At $0\le\epsilon\le\epsilon^*$ the feasible target-arm tilt interval is
\[
 t^-_\epsilon=\max\left\{\frac{w\epsilon}{q},\frac{q-(1-\epsilon)(1-w)}q\right\},\qquad
 t^+_\epsilon=\min\left\{\frac{w(1-\epsilon)}q,\frac{q-\epsilon(1-w)}q\right\}.
\]
It contains $w$ and, for mixed cells and positive $\epsilon$, lies inside $(0,1)$. Write $t^\pm=t^\pm_0$ and
\[
 D^+_{\zbar,\epsilon}=\max\{f(t^-_\epsilon),g(t^+_\epsilon)\},\qquad
 I_\epsilon=\left[\psi_g-\E\sum_s p_{z_1}^sD^-_{\zbar,\epsilon},\,
                  \psi_g+\E\sum_s p_{z_1}^sD^+_{\zbar,\epsilon}\right],
\]
where $D^-_{\zbar,\epsilon}$ interchanges $A,B$. Put $D^\pm_{\zbar}=D^\pm_{\zbar,0}$ and set both deviations to zero in pure cells. The tilt interval is equivalent to the assignment interval of Proposition~\ref{prop:sens}(ii).

\begin{proposition}[Support specialization]\label{prop:bounds}
Under consistency, LSI, the specified coupling, bounded observed outcomes and $\epsilon^*>0$, the following hold.
\begin{enumerate}[label=(\roman*),nosep]
\item At each positive latent margin, $I_\epsilon$ contains the sharp full-law interval. Equality holds for an observed law on the two support endpoints. Also $I_\epsilon\subseteq I_{\epsilon'}\subseteq I_0$ for $0\le\epsilon'\le\epsilon\le\epsilon^*$.
\item If $y_{\min}<\mu<y_{\max}$ almost surely in every target response cell, then $I_\epsilon\to I_0$ as $\epsilon\downarrow0$. In mixed cells, its zero-margin extrema require a latent propensity zero or one. For endpoint-supported outcomes, $I_0$ is the closure of the identified set over positive margins, and this set is the open interval with those endpoints when mixed cells have positive probability.
\item The cellwise restriction $\delta_{\zbar}\rho=0$, including SPI or LPH, collapses the identified set to $\{\psi_g\}$.
\end{enumerate}
\end{proposition}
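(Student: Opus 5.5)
The plan is to reduce everything to the cellwise sequence-mean formula \eqref{eq:full-law-region} and compare it with the support envelope in \eqref{eq:envelope}. First, I will show that in each mixed cell with bounded support, the trimmed-mean constraint \eqref{eq:outcome-compatibility} at share $t$ implies the mean-and-support restriction. This holds because $L_F(t)/t\ge\max\{y_{\min},\mu-(1-t)A/t\}$: the top fraction $1-t$ has mean at most $y_{\max}$, and any component has mean at least $y_{\min}$. The upper side is symmetric. Hence $\mathcal D_F(t)$ is contained in the support gap interval. Translating through $M-\mu=(w-t)\delta$ then bounds the deviations.

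For $t\le w$, the upward deviation is $(w-t)\,u_F(t)$. The support upper gap is $\min\{A/(1-t),B/t\}$: the first term comes from $m_1\le y_{\max}$, the second from $m_0\ge y_{\min}$. This gives $f(t)$, and $g(t)$ follows analogously for $t\ge w$. So $E_F^+(t)\le f(t)$ or $g(t)$. I will then check that $f$ is decreasing on $[0,w]$ and $g$ is increasing on $[w,1]$. With that monotonicity, the maximum over the feasible tilt interval $[t^-_\epsilon,t^+_\epsilon]$ sits at its endpoints, matching $D^+_{\zbar,\epsilon}$. Integrating yields $I_\epsilon\supseteq$ the sharp interval, and the endpoint reduction of Theorem~\ref{thm:contraction} makes the comparison valid. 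For two-point laws, $L_F(t)/t$ equals the support lower limit exactly, so equality holds. The nesting in $\epsilon$ follows because the tilt interval, equivalent to the assignment interval in Proposition~\ref{prop:sens}(ii), expands as $\epsilon$ decreases.

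For part (ii), $t^\pm_\epsilon$ is continuous in $\epsilon$, and $f,g$ are continuous on $(0,1)$ with the stated boundary conventions. Hence $D^\pm_{\zbar,\epsilon}\to D^\pm_{\zbar}$ pointwise, and bounded convergence gives $I_\epsilon\to I_0$. When $y_{\min}<\mu<y_{\max}$, both $A$ and $B$ are positive, so $f$ and $g$ are strictly monotone. The zero-margin extremum is then attained only at $t^\pm_0$, which corresponds to a latent propensity of $0$ or $1$. For endpoint-supported laws, every positive margin gives an attained interval strictly inside $I_0$ when mixed cells have positive mass. The union over margins is therefore the open interval, and its closure is $I_0$.

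For part (iii), the bias formula \eqref{eq:bias} has an integrand proportional to $\delta\rho$. So $\delta\rho=0$ forces $B_{\zbar}=0$ and $\psi=\psi_g$, and nonemptiness follows from the homogeneous construction. SPI gives $\delta=0$, and latent propensity homogeneity (LPH) gives $\rho=0$.

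The main obstacle is the strict monotonicity and the exact match between the two branches of the minimum in $f$ and $g$ and the two-point trimmed means. I would first handle this by explicit differentiation of each branch, then show that the crossing point preserves monotonicity.
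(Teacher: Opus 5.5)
Your proposal is correct and follows essentially the same route as the paper. In both, the mean-and-support restriction is an outer relaxation of the trimmed-mean compatibility interval that is exact for two-point laws, monotonicity of $f$ and $g$ gives the endpoint rule and the nesting, nondegeneracy gives the limit and open-interval claims in (ii), and the bias formula with the homogeneous (L5-type) construction gives (iii). One detail to make explicit: the convergence $I_\epsilon\to I_0$ needs right-continuity of $f$ at $t=0$ and left-continuity of $g$ at $t=1$ under the stated boundary conventions, which holds only when $B>0$ (and $A>0$ for the downward deviations), so nondegeneracy is used there as well as for strict monotonicity.
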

The proof is in Section~\ref{supp:proof-support-bounds}.
Zero-margin sharpness in part (ii) refers to the closure of the identified set over positive margins, under the stated condition that observed means lie strictly inside the outcome support. At a support endpoint, the formal zero-mass boundary values need not equal the limits from positive component shares or give that closure.

\begin{remark}\label{rem:bounds}
The additional identifying restrictions in Proposition~\ref{prop:bounds}(iii) impose no further observed-law test, by construction L5 in Section~\ref{app:relations}. Shared-arm contrasts use Proposition~\ref{prop:shared-tilt}. Separate mean differencing can be conservative. For binary contrasts, the zero-margin closure holds under interior means in both arms. Each population mean or contrast interval contains its corresponding g-formula value, so unions over measurable couplings remain intervals.
\end{remark}

\paragraph{The two numerical examples.}
For $w=1/4$, $q=1/2$, $\mu_1=3/4$ and $\mu_0=1/4$, the limiting deviations are
\[
\begin{array}{c|cccc}
h&E_1^+&E_1^-&E_0^+&E_0^-\\\hline
0&1/16&3/16&1/8&1/8\\
1&1/8&1/8&3/16&1/16
\end{array}
\]
Thus $\Delta_g=1/2$, $J^+=3/16$ and $J^-=5/16$, giving $[3/16,11/16]$. Separate optimization gives deviations $1/4$ and $3/8$, hence $[1/8,3/4]$. Interior binary means ensure these are limits of positive-margin bounds.

For the response-margin example, write $a=e_{(1,1)}\in[0,1/10]$. The two alternative response prevalences are $2a$ and $1/5-2a$. At $q=\mu=1/2$ and $0\le w\le1/5$, the limiting deviation in either direction is
\[
d(w)=\frac{w}{2(1-2w)}.
\]
Independence sets $a=1/20$, giving $d(1/10)=1/16$. Convexity of $d$ makes $\{d(2a)+d(1/5-2a)\}/2$ largest at $a=0$ or $1/10$, where it is $1/12$. The intervals share centre $1/2$, so the coupling union has closure $[5/12,7/12]$. Without response restrictions, consistency gives $[q\mu,q\mu+1-q]=[1/4,3/4]$. A binary confounder of prevalence $1/2$, assignment propensities tending to zero and one, and Bernoulli component outcomes attain those limits.

\paragraph{A strict support-and-mean relaxation.}\label{supp:full-law-strictness}
For $w=q=1/2$, $\epsilon=1/4$ and $F_a=\operatorname{Unif}[0,1]$, the feasible assignment interval is $[1/4,3/4]$ and $t_1=h$. Since $L_F(t)=t^2/2$ and $U_F(t)=t-t^2/2$, the compatible gap is $\mathcal D_F(t)=[-1/2,1/2]$. The sharp population cell mean interval is therefore
\[
\left\{\frac12+\left(\frac12-h\right)d:
 h\in[1/4,3/4],\ d\in[-1/2,1/2]\right\}=[3/8,5/8].
\]
At $h=1/4$, allocating the upper or lower quarter of the uniform law to $V=1$ attains the endpoints and preserves the observed mixture. Retaining only its support and mean instead allows
$|d|\le\min\{1/(2h),1/[2(1-h)]\}$, giving $[1/3,2/3]$.
For example, $(m_1,m_0)=(1,1/3)$ at $h=1/4$ gives $2/3$, but cannot mix to the uniform law, whose upper-quarter mean is at most $7/8$. Complementing outcomes gives the lower endpoint. With no covariates, $\pi_0=\pi_1=p_0=p_1=q=1/2$, $\theta=1$ and uniform outcomes in every cell, Theorem~\ref{thm:unified-contrast} embeds the sharp interval $[3/8,5/8]$ in a longitudinal law. The wider support-and-mean interval discards the uniform-law constraint.

\subsection{Matched comparisons with sensitivity models}\label{supp:matched-comparisons}

\begin{table}[t]
\centering
\caption{Relationship to the closest identification and sensitivity results, compared at the same observed cell law. No general ordering of the model classes is asserted.}\label{tab:contribution-comparison}
\small
\begin{tabular}{@{}>{\raggedright\arraybackslash}p{.23\textwidth}>{\raggedright\arraybackslash}p{.33\textwidth}>{\raggedright\arraybackslash}p{.38\textwidth}@{}}
\toprule
Ingredient & Related results & Role in this paper\\
\midrule
Latent sequential selection
& Principal profiles account for later assignment \citep{ricciardi2020}.
& Full-law compatibility without parametric component outcome models.\\[1ex]
Response coupling and principal weighting
& Odds-ratio coupling and identification under mean principal ignorability \citep{tong2025}.
& Response margins constrain latent prevalence while component outcome means may differ.\\[1ex]
Outcome and assignment sensitivity
& Sequential density-ratio or assignment and outcome restrictions \citep{tan2025,zhangxutan2026,zhangxutandist2025}.
& Population marginal laws admit assignment-endpoint contraction under local gap restrictions preserved by common scaling.\\[1ex]
Principal-effect sensitivity
& Bounds for initial-treatment principal effects under departures from principal ignorability \citep{chen2026}.
& Later-treatment principal effects can require interior assignment, whereas population laws admit endpoint contraction.\\
\bottomrule
\end{tabular}
\end{table}
Fixed-share mixture trimming is established \citep{lee2009,dornguo2023,dornguokallus2025}. Binary-confounder sensitivity models \citep{rosenbaumrubin1983} and mean-bias decompositions \citep{vanderweelearah2011} also have established foundations. The present comparison fixes the observed cell law $(q,F_0,F_1)$ and response-profile prevalence $w$. The shared assignment $h_v=\Prob(Z_2=1\mid V=v,\X,Z_1=j,S=s)$ must satisfy
\begin{equation}\label{eq:comparison-common-profile}
\begin{gathered}
 q=wh_1+(1-w)h_0,\qquad
 t_1=\frac{wh_1}{q},\qquad
 t_0=\frac{w(1-h_1)}{1-q},\\
 F_a=t_aF_{a1}+(1-t_a)F_{a0}.
\end{gathered}
\end{equation}
In the enhanced marginal sensitivity model of \citet{zhangxutan2026}, the corresponding assignment
density ratio, after identifying its unrestricted confounder with $V$, is
\[
 \lambda(v)=\frac{\Prob(V=v\mid Z_2=0)}{\Prob(V=v\mid Z_2=1)}
 =\frac{q(1-h_v)}{(1-q)h_v},
\]
with the cell conditioning suppressed. Our positivity margin corresponds to
\[
\frac{q\epsilon}{(1-q)(1-\epsilon)}\le\lambda(v)\le
\frac{q(1-\epsilon)}{(1-q)\epsilon}.
\]
Matching this range does not impose $qt_1+(1-q)t_0=w$ at the response-profile prevalence. The enhanced model constrains $m_a(v)-\mu_a=(v-t_a)\delta_a$ separately by action, whereas the effect budget bounds $|\delta_1-\delta_0|$. Thus $\gamma=0$ permits $\delta_0=\delta_1\ne0$, while zero widths for both enhanced-model mean constraints force both gaps to zero.

The outcome-density ratios in \citet{tan2025} need not admit the common binary decomposition in \eqref{eq:comparison-common-profile}. The distributionally enhanced model of \citet{zhangxutandist2025} bounds each latent outcome density relative to its observed treatment-arm density, without fixing response-profile prevalence or the present effect budget. Other relaxations also give distributional bounds \citep{mastenpoirier2025}.

\citet{chen2026} bound initial-treatment principal effects through stratum--confounder imbalance and outcome relative risks. Their observed-cell mixture weights are principal-score fractions, analogous to $w$ before later-treatment selection in our model. Their setup has no later treatment. Here later-treatment selection replaces $w$ by $t_a(h)$. Theorem~\ref{thm:contraction} addresses this shared assignment, preserving population marginal laws under common contraction while changing principal laws. The different restrictions do not establish a general ordering of the resulting regions.

\subsection{Sharp bounds under bounded effect heterogeneity}
\label{supp:effect-restrictions}
Write $\ell_a=\ell_{F_a}(t_a)$ and $u_a=u_{F_a}(t_a)$. All assumptions,
pure-cell conventions and principal normalizations are those of
Theorem~\ref{thm:unified-contrast}. The restrictions below are imposed
separately in each mixed cell and covariate value.

\begin{corollary}[Compatible effect restrictions]
\label{cor:effect-heterogeneity-bounds}
Under \eqref{eq:effect-heterogeneity-restriction}, replace the inner armwise
optimizations in \eqref{eq:unified-contrast-bounds} by
\[
\min_{\bm d\in\mathcal P_\gamma(h)}\sum_{a=0}^1 B_a(h)d_a,
\qquad
\max_{\bm d\in\mathcal P_\gamma(h)}\sum_{a=0}^1 B_a(h)d_a.
\]
The resulting scalar identified interval is sharp, with every value attained.
The feasible vertex list consists of the rectangle corners and the
intersections of its four edges with $d_1-d_0=\gamma$ or
$d_1-d_0=-\gamma$, retaining only feasible points. It has at most twelve
listed points, before removing duplicates. Population bounds require only
$h\in\{h^-_\epsilon,h^+_\epsilon\}$.

The intervals are nested as $\gamma$ increases. At $\gamma=0$ the optimization
uses $d_0=d_1=d\in[\max(\ell_0,\ell_1),\min(u_0,u_1)]$.
At a fixed $h$, the restriction leaves the entire feasible rectangle unchanged exactly when
$\gamma\ge\max(u_1-\ell_0,u_0-\ell_1)$.
In particular, $\gamma\ge2(y_+-y_-)$ recovers the unrestricted model.
\end{corollary}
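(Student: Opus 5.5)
The plan is to reduce the restricted problem to the unrestricted characterization of Theorem~\ref{thm:unified-contrast} and then handle each claim of the corollary in turn. First, sharpness. By Theorem~\ref{thm:unified-contrast}, the compatible latent means in a mixed cell are exactly \eqref{eq:unified-means}, with $h\in\mathcal H_\epsilon$ and $\delta_a\in\mathcal D_{F_a}\{t_a(h)\}$. Imposing \eqref{eq:effect-heterogeneity-restriction} simply intersects the gap rectangle with the band $|d_1-d_0|\le\gamma$, which gives $\mathcal P_\gamma(h)$. The cell contribution to $T_c$ is $\sum_aA_a\mu_a+\sum_aB_a(h)d_a$, and this is affine in $\bm d$. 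So at fixed $h$ the optimization is a linear program over a compact convex polygon.

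To assemble the scalar bounds and show every value is attained, I will follow the completion argument of Theorem~\ref{thm:unified-contrast}:
\begin{itemize}
\item Choose measurable optimizers. The map $h\mapsto\mathcal P_\gamma(h)$ has closed graph, since $\ell_F,u_F$ are continuous in $t$ on $(0,1)$ and the band is closed. Measurable selection on the standard Borel space then applies, and the extremal values are measurable.
\item Integrate the cell extrema over cells and covariates to obtain the two endpoints.
\item Fill in intermediate values by convex combination. Because $\mathcal P_\gamma(h)$ is convex and contains $0$, mixing the lower and upper compatible full laws cellwise gives every intermediate value. The cleanest route is to take a cellwise convex combination of the lower and upper gap choices at one common $h$, with a measurable parameter. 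This needs the lower and upper optimizers to share the same $h$. If they do not, I instead use continuity of the integrated value along a path: first move $h$ from the lower to the upper optimizer while holding $\bm d=0$ feasible, then move the gaps.
\end{itemize}
Pure cells are unchanged from Theorem~\ref{thm:unified-contrast}.

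Next, the vertex list. A convex polygon formed by intersecting a rectangle with the strip between two parallel lines has extreme points of three kinds: rectangle corners inside the strip, intersections of rectangle edges with either line, and intersections of the two lines with each other. The two lines are parallel when $\gamma>0$ and coincide when $\gamma=0$, so the third kind contributes no additional points. There are 4 corners plus at most 4 edges times 2 lines, giving at most twelve listed points. A linear objective attains its extrema at extreme points, so evaluating it on the feasible listed points suffices.

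The population claim follows from Theorem~\ref{thm:contraction}. Its hypotheses hold here: the set $\mathcal C=\{|d_1-d_0|\le\gamma\}$ does not depend on $h$, contains zero, has closed sections, and satisfies $\alpha\mathcal C\subseteq\mathcal C$ for $\alpha\in[0,1]$, because $|\alpha(d_1-d_0)|\le\gamma$. The contraction preserves the population cell laws, so the population bounds are unchanged when $h$ is restricted to $\{h^-_\epsilon,h^+_\epsilon\}$.

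The remaining items are monotonicity and boundary cases.
\begin{itemize}
\item Nesting holds because $\mathcal P_\gamma(h)$ increases in $\gamma$.
\item At $\gamma=0$ the band forces $d_0=d_1$, and the intersection of the two gap intervals is as stated.
\item The rectangle lies entirely within the strip exactly when $\max_{\text{rect}}(d_1-d_0)=u_1-\ell_0\le\gamma$ and $\max_{\text{rect}}(d_0-d_1)=u_0-\ell_1\le\gamma$.
\item Since $\ell_a\ge-(y_+-y_-)$ and $u_a\le y_+-y_-$, a budget $\gamma\ge2(y_+-y_-)$ suffices for every $h$. The gap bound holds because a gap is a difference of component means in $[y_-,y_+]$.
\end{itemize}

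I expect the main obstacle to be the "every value attained" part when the lower and upper optima use different assignments $h$. Theorem~\ref{thm:unified-contrast} already handles this for the rectangle, so I would mirror its argument, with the convex polygon in place of the rectangle.
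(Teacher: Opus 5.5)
Your proposal is correct and follows the paper's proof. Both intersect Theorem~\ref{thm:unified-contrast}'s rectangle with the band, optimize at the polygon's vertices, invoke Theorem~\ref{thm:contraction} for the population endpoint rule, and read the remaining claims off the rectangle's extreme coordinate differences.

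For intermediate values, the paper uses precisely your $\bm d=0$ connection. It scales each optimizer's gaps to zero at that optimizer's own $h$; the value at $\bm d=0$ is $\sum_aA_a\mu_a$ for every $h$, so your step of moving $h$ is unnecessary. Keep that route rather than your closing suggestion to mirror Theorem~\ref{thm:unified-contrast}'s interpolation in $(h,r_0,r_1)$: that interpolation can leave the band $|d_1-d_0|\le\gamma$, for example at $\gamma=0$, which is exactly why the paper switches to scaling here.
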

The proof is in Section~\ref{supp:proof-effect-restrictions}.

\paragraph{The equal-effect example.}
For $F_0=F_1=\operatorname{Bernoulli}(1/2)$, $w=q=1/2$ and $0<\epsilon<1/2$, we have $t_1=h$, $t_0=1-h$, and at $\gamma=0$
\[
|d|\le\{2\max(h,1-h)\}^{-1},\qquad
m_1(v)-m_0(v)=(1-2h)d.
\]
Its maximum absolute value over $h\in[\epsilon,1-\epsilon]$ is
$(1-2\epsilon)/\{2(1-\epsilon)\}$. At either endpoint the corresponding
binary component means attain it, proving sharpness. More generally, under homogeneous
assignment, $h=q$, the population contrast is $\mu_1-\mu_0$ and the
principal contrast is $\mu_1-\mu_0+(v-w)z$, with exact range
\[
z\in[\ell_1(w)-u_0(w),u_1(w)-\ell_0(w)]\cap[-\gamma,\gamma].
\]
Thus combining homogeneous assignment with $\gamma=0$ identifies the
principal contrast without identifying its two principal outcome means.

\paragraph{Calculation for Example~\ref{ex:positive-budget-interior}.}
At the principal upper optimum, the active vertex has
$\delta_0=-\mu_0/(1-t_0)$ and $\delta_1=\delta_0+\gamma$.
Its value and stationary assignment are
\[
 B(h)=0.66+\frac{0.17(h-0.57)}{0.57h}
             +0.4\left(1-\frac{0.43h}{0.57}\right),
 \qquad h_\star=\sqrt{969/1720}=0.750581.
\]
Figure~\ref{fig:positive-budget-objectives} shows the optimized population
and principal upper objectives over the same feasible assignment interval.
The finite-candidate calculation gives the unique principal upper value
$B(h_\star)=0.905263$, whereas the best assignment-boundary value is
$0.895849$.  The stationary root is simple, with $B''(h_\star)=-0.804055$,
and the minimum nondefining feasibility slack exceeds $0.094736$.
The simple stationary root and positive inactive-constraint slack ensure that the unique interior optimum persists under small changes in the observed probabilities and means.

\subsection{Binary population bounds with unspecified coupling}\label{supp:unspecified-computation}

Optimizing over the response coupling separates information supplied by the observed margins from that supplied by a chosen association. For initial arm $j=1$, write $r=e_{11}(\X)$, suppress $\X$ and let
\[
\mathcal R=[\max(0,p_0+p_1-1),\min(p_0,p_1)],\qquad
w_1(r)=\frac r{p_1},\quad w_0(r)=\frac{p_0-r}{1-p_1}.
\]
The same $r$ must serve both response cells, omitting zero-probability cells. Define the conditional SPI contrast $d_g(\X)=\sum_s p_1^s(\X)\{\mu_{(1,1)}(s,\X)-\mu_{(1,0)}(s,\X)\}$. Let $J_s^\pm(r)$ be $J_\epsilon^\pm$ from Proposition~\ref{prop:shared-tilt}, evaluated at $w_s(r)$ and that cell's observed probabilities and means. Its conditional limits are $d_g-\sum_s p_1^sJ_s^-(r)$ and $d_g+\sum_s p_1^sJ_s^+(r)$. Maximize each weighted deviation over $\mathcal R$, then average the conditional limits over $\X$.

For $0<\epsilon<1/2$, this optimization is finite. At $\epsilon=1/2$, compatibility forces all observed and latent treatment probabilities to equal $1/2$, so the population bounds collapse to the g-formula contrast and no coupling optimization is needed. At an assignment endpoint some component has propensity $b\in\{\epsilon,1-\epsilon\}$. If its prevalence is $u\in\{w_s,1-w_s\}$, its feasibility range is
\[
0\le u\le u_{\max}(b),\qquad
u_{\max}(\epsilon)=\frac{1-\epsilon-q}{1-2\epsilon},\quad
u_{\max}(1-\epsilon)=\frac{q-\epsilon}{1-2\epsilon}.
\]
For outcome arm $a$, put $b_a=b^a(1-b)^{1-a}$, $q_a=q^a(1-q)^{1-a}$, $c=b_a/q_a$ and $t=cu$. Partition $\mathcal R$ at its endpoints, assignment-feasibility boundaries and feasible solutions of $cu\in\{\mu_a,1-\mu_a\}$ for both cells, arms and component labels. Evaluating these candidates gives the extrema over all measurable couplings. Each fixed-coupling interval contains the same population SPI functional, so their union is an interval. The proof is in Section~\ref{supp:proof-unspecified-coupling}. A scalar odds-ratio grid, or the two coupling endpoints alone, need not recover it. Principal ratios have coupling-dependent masses and require a separate optimization.

Sufficiency of assignment endpoints at a fixed coupling does not justify restricting the coupling to its endpoints. For example, take $p_0=7/10$, $p_1=2/5$, $q_0=q_1=1/2$, all $\mu_a=1/2$ and $\epsilon=1/10$. The admissible coupling interval is $[1/10,2/5]$. Its upper limits at these boundaries are $24/209$ and $4/15$, whereas $r=4/15$ attains $1874/5355$, exceeding the larger by approximately $8.33$ percentage points. This witness disproves the shortcut without asserting that the exhibited interior point is the maximizer.
\section{Point identification and response sufficiency}\label{supp:point}
The potential outcomes are integrable throughout this section.
\subsection{Point-identification restrictions}\label{supp:point-identification-cases}
For the gaps in \eqref{eq:deltarho}, $t_a-w=(-1)^{1-a}w(1-w)\rho/q_a$. Hence $\delta_{(j,a)}\rho=0$ removes bias in every cell without relying on cancellation. This is an identification condition, distinct from estimator double robustness.

Assumption~\ref{ass:spi} sets $\delta_{(j,a)}=0$ in every mixed cell. The alternative restriction of latent propensity homogeneity (LPH) requires
\begin{equation}\label{eq:lph}
 h_u(j,\X)=h_{u'}(j,\X),\qquad u,u'\in\calU(j,s)\text{ with }e_ue_{u'}>0.
 \tag{LPH}
\end{equation}
This is the assignment-side restriction considered by \citet{ricciardi2020}. 

\begin{corollary}[Population identification]\label{thm:union}
Under Assumptions~\ref{ass:sutva}--\ref{ass:positivity}, the condition
\begin{equation}\label{eq:union}
 \delta_{(j,a)}(j,s,\X)\rho(j,s,\X)=0
\end{equation}
in every mixed cell identifies $\psi(j,a)=\psi_g(j,a)$. Either SPI or LPH is sufficient for all sequence means. A response coupling need not be specified.
\end{corollary}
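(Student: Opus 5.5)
The argument is a cell-by-cell bias computation followed by iterated expectations. We express each sequence mean through the latent means, compare it with the g-formula, and show that every cell's discrepancy is the product $\delta_{(j,a)}\rho$ up to a finite factor. We fix $(j,a)$. Consistency and Assumption~\ref{ass:lsi}(a) give $\E\{Y(j,a)\mid\X\}=\E\{Y(j,a)\mid\X,Z_1=j\}$. Since $S=S(j)$ on $\{Z_1=j\}$, splitting on $S$ yields
\[
\psi(j,a)=\E\sum_s p_j^s(\X)\,\E\{Y(j,a)\mid\X,Z_1=j,S=s\}.
\]
By \eqref{eq:post-cell} in Lemma~\ref{lem:mixture}, the inner mean equals the population cell mean $M_a=\sum_{u\in\calU(j,s)}w_u(j,s,\X)m_{(j,a)}(u,\X)$, using the weights before selection. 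The g-formula $\psi_g(j,a)$ has the same outer structure with $\mu_{(j,a)}(s,\X)$ in place of $M_a$. It therefore suffices to show that $\mu_{(j,a)}(s,\X)=M_a$ in every positive-probability cell.

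For a pure cell there is a single compatible profile, so $\widetilde w^{\,a}_u=w_u=1$ and \eqref{eq:tilted-mixture} gives $\mu=M_a$ directly. For a mixed cell we use \eqref{eq:post-tilted}--\eqref{eq:tilted-mixture} and the two-type notation of Section~\ref{sec:unified-identification}. This gives $\mu_{(j,a)}-M_a=(t_a-w)\delta_{(j,a)}$, which is \eqref{eq:unified-means}. We then substitute the identity $t_a-w=(-1)^{1-a}w(1-w)\rho/q_a$ from Section~\ref{supp:point-identification-cases}, which follows from $t_1=wh_1/q$ and $h_0=h_1-\rho$. The result is
\[
\mu_{(j,a)}-M_a=(-1)^{1-a}\frac{w(1-w)}{q_a}\,\delta_{(j,a)}\rho,
\]
which agrees with the bias formula \eqref{eq:bias} of Proposition~\ref{prop:bias}(i). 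Under \eqref{eq:union} the right-hand side vanishes in every mixed cell. Integrating over $s$ and $\X$ then gives $\psi(j,a)=\psi_g(j,a)$.

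The remaining claims follow quickly. SPI sets $\delta_{(j,a)}=0$ and LPH sets $\rho=0$ in every mixed cell, so either one implies \eqref{eq:union} for all $(j,a)$. Neither the condition nor $\psi_g$ involves $w$ or $e_u$: the coupling enters only through a factor multiplying a zero. Hence no coupling needs to be specified, and we can argue with any coupling consistent with the true full law.

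The only delicate points are regularity. Assumption~\ref{ass:positivity} gives $q_a\ge\epsilon>0$, so the factor $w(1-w)/q_a$ is finite and $\mu_{(j,a)}$ is well defined. Integrability of the potential outcomes justifies exchanging the sums and the conditional expectations. Cells with $p_j^s(\X)=0$ are omitted, as in the statement. We expect no real obstacle beyond this bookkeeping.
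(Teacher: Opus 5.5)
Your proposal is correct and follows essentially the paper's route. The paper's proof cites the bias formula of Proposition~\ref{prop:bias}(i), notes that each mixed-cell term vanishes under \eqref{eq:union} and that pure cells carry no bias, and observes that SPI sets $\delta_{(j,a)}=0$ while LPH sets $\rho=0$, with the argument holding for any compatible coupling; you simply re-derive that bias formula from the mixture identities instead of citing it.
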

The proof is in Section~\ref{supp:proof-population-identification}.
Within LSI, this product condition is exactly the corresponding cellwise mean exchangeability. It can hold while distributional SI fails. Section~\ref{app:relations} records the precise relations and counterexamples.

\paragraph{Principal effects.}\label{sec:pce}
\begin{corollary}[Principal identification]\label{thm:pce}
Under Assumptions~\ref{ass:sutva}, \ref{ass:lsi}, \ref{ass:positivity}, \ref{ass:theta} and \ref{ass:spi}, the specified coupling and $\E(e_u)>0$ identify $\psi_u(j,a)$ through \eqref{eq:pce-id}. Contrasts subtract the corresponding principal means, using the response coordinate under each initial treatment.
\end{corollary}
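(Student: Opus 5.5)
The plan is to express $\psi_u(j,a)$ as a principal-score-weighted average of latent means and then use SPI to replace each latent mean by an observed cell mean. By the law of total expectation, with $\E(e_u)>0$,
\[
\psi_u(j,a)=\frac{\E\{e_u(\X)m_{(j,a)}(u,\X)\}}{\E\{e_u(\X)\}}.
\]
Under Assumption~\ref{ass:theta} and the identified response margins $p_0,p_1$, Lemma~\ref{lem:plackett} determines $e_u(\X)$ uniquely and measurably. Assumption~\ref{ass:lsi}(a) identifies the margins as $p_j(\X)=\Prob(S=1\mid Z_1=j,\X)$, using consistency. Integrability of the potential outcomes makes all these expectations finite. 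Strata with $e_u(\X)=0$ contribute nothing and are omitted.

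The central step is to show that $m_{(j,a)}(u,\X)=\mu_{(j,a)}(u_j,\X)$ whenever $e_u(\X)>0$. Fix the cell $s=u_j$, which has $p_j^s(\X)\ge e_u(\X)>0$. Lemma~\ref{lem:mixture}, specifically \eqref{eq:post-tilted}--\eqref{eq:tilted-mixture}, gives
\[
\mu_{(j,a)}(s,\X)=\sum_{u'\in\calU(j,s)}\widetilde w^{\,a}_{u'}\,m_{(j,a)}(u',\X),
\]
where the tilted weights are nonnegative and sum to one over the profiles with positive score. Assumption~\ref{ass:positivity} keeps $q_a>0$, so the weights are well defined.

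The remaining cases follow directly. In a pure cell, only $u$ has positive weight, so the identity is immediate. In a mixed cell, SPI (Assumption~\ref{ass:spi}) makes $m_{(j,a)}(u',\X)$ constant over the profiles with positive score. The convex combination therefore collapses to $m_{(j,a)}(u,\X)$.

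Substituting this identity into the ratio gives \eqref{eq:pce-id}. For a contrast $\psi_u(j,a)-\psi_u(j',a')$, apply the identity to each term. Each term uses its own response coordinate $u_j$ or $u_{j'}$, since the relevant observed cell is $(Z_1=j,S=u_j)$. The denominator $\E(e_u)$ is common to both terms.

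The main point needing care is almost-sure equality on the set $\{e_u>0\}$, together with measurability of $e_u$ as a function of $\X$. I would handle both through Lemma~\ref{lem:plackett}, including the boundary values $\theta\in\{0,\infty\}$, where the Fréchet endpoints still give measurable scores. A second point is that SPI is stated only for profiles with $e_ue_{u'}>0$. This is exactly the set carrying positive tilted weight, because positivity makes $h_{u'}^{(a)}>0$. The argument therefore never needs values of latent means on null strata.
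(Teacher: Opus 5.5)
Your proof is correct and follows essentially the same route as the paper: SPI makes the components of the tilted-mixture identity in Lemma~\ref{lem:mixture} coincide, giving $m_{(j,a)}(u,\X)=\mu_{(j,a)}(u_j,\X)$ on positive-score strata, and averaging with weights $e_u(\X)/\E(e_u)$ yields \eqref{eq:pce-id}, with contrasts obtained by subtraction. You also spell out details the paper leaves implicit: the tilted weights are positive exactly on profiles with positive score, because positivity gives $h_{u'}^{(a)}>0$, and you handle the pure-cell case.
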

The proof is in Section~\ref{supp:proof-principal-identification}.
LPH alone does not identify these means. For a second-stage contrast, however, LPH together with $\delta_1=\delta_0$ gives
\[
m_{(j,1)}(u,\X)-m_{(j,0)}(u,\X)
=\mu_{(j,1)}(u_j,\X)-\mu_{(j,0)}(u_j,\X).
\]
Principal weighting identifies this contrast even when its two component means remain unidentified, by \eqref{eq:unified-effect}.

\subsection{The specified coupling and principal weighting}\label{supp:plackett}
\begin{lemma}[Principal scores]\label{lem:plackett}
Response margins $p_0,p_1\in(0,1)$ and a specified conditional odds ratio $\theta\in[0,\infty]$ uniquely determine $e_u$. For fixed $0<\theta<\infty$, the scores are continuously differentiable in $(p_0,p_1)$. At $\theta=0$ or $\infty$, they are piecewise affine in the response margins.
\end{lemma}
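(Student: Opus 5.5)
We reduce the four principal scores at a fixed covariate value to one unknown, $r=e_{11}$. The two margins force
\[
e_{10}=p_0-r,\qquad e_{01}=p_1-r,\qquad e_{00}=1-p_0-p_1+r,
\]
using the convention that $u=(u_0,u_1)$, so $e_{10}$ has $S(0)=1,S(1)=0$. Nonnegativity restricts $r$ to the Fréchet interval $\mathcal R=[r_-,r_+]$, with $r_-=\max\{0,p_0+p_1-1\}$ and $r_+=\min\{p_0,p_1\}$. Because $p_0,p_1\in(0,1)$, this interval is nondegenerate.

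\emph{Existence and uniqueness.} We study the cross-ratio
\[
\Theta(r)=\frac{r(1-p_0-p_1+r)}{(p_0-r)(p_1-r)}
\]
on $\mathcal R$. On the interior both factors in the numerator increase in $r$ and both factors in the denominator decrease, with all four positive. Hence $\Theta$ is continuous and strictly increasing there. At $r_-$ a numerator factor vanishes while the denominator stays positive. At $r_+$ a denominator factor vanishes while the numerator stays positive. So $\Theta$ maps the interior bijectively onto $(0,\infty)$, giving a unique $r$ for each $0<\theta<\infty$. For $\theta=0$ and $\theta=\infty$, with the convention $\Theta(r_-)=0$ and $\Theta(r_+)=\infty$, the unique solutions are the Fréchet endpoints. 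The one case to watch is a degenerate corner, such as $p_0+p_1=1$. There $r_-=0$ makes both numerator factors vanish together, but the denominator is still positive, so $\Theta(r_-)=0$ and the assignment remains unambiguous.

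\emph{Smoothness for fixed $0<\theta<\infty$.} Clearing denominators gives
\[
(1-\theta)r^2+\{1-p_0-p_1+\theta(p_0+p_1)\}r-\theta p_0p_1=0 .
\]
When $\theta=1$ this is linear and yields $r=p_0p_1$. Otherwise, the Plackett root in $\mathcal R$ is given by the quadratic formula. Rather than checking directly that the discriminant stays positive, I would apply the implicit function theorem to $G(r,p_0,p_1)=\log\Theta(r)-\log\theta$. This function is $C^\infty$ on the open set where all four scores are positive. By monotonicity, $\partial_rG$ is a sum of four reciprocals of scores and is strictly positive. So $r(p_0,p_1)$ is $C^1$, in fact $C^\infty$, and the other three scores, being affine in $(r,p_0,p_1)$, inherit this smoothness. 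We need the root to stay interior, which holds because $\theta\in(0,\infty)$ excludes the endpoints. Checking this interior-root condition is the only step requiring care.

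\emph{Extreme couplings.} At $\theta=\infty$ we get $r=\min\{p_0,p_1\}$, and at $\theta=0$ we get $r=\max\{0,p_0+p_1-1\}$. Both are maxima or minima of affine functions of the margins, so they are piecewise affine, with kinks on $p_0=p_1$ and on $p_0+p_1=1$ respectively. The remaining scores are affine in these, which completes the proof.
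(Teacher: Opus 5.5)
Your proposal is correct and follows essentially the paper's route: uniqueness from strict monotonicity of the odds ratio on the interior of the Fr\'echet interval, smoothness from the implicit function theorem, and the extreme couplings read off as the min/max Fr\'echet endpoints with kinks at $p_0=p_1$ and $p_0+p_1=1$. The only difference is cosmetic. You apply the implicit function theorem to the log odds ratio, whose $r$-derivative is visibly the sum of the four reciprocal scores. The paper applies it to the cleared quadratic $F(e)=(\theta-1)e^2-Be+\theta p_0p_1$ with $F_e=-\sqrt\Delta<0$, which additionally gives the closed-form root and the explicit derivatives $d_0,d_1$ used later for inference.
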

The proof is in Section~\ref{supp:proof-plackett}. The formulas and margin derivatives $\dot e_{u,j}=\partial e_u/\partial p_j$ are given below.

For $p_0,p_1\in(0,1)$ let $B=1+(\theta-1)(p_0+p_1)$. The Plackett map \citep{plackett1965} is
\begin{equation}\label{eq:plackett}
 e_{11}=
 \begin{cases}
 (B-\sqrt\Delta)/\{2(\theta-1)\},&0<\theta<\infty,\ \theta\ne1,\\
 p_0p_1,&\theta=1,\\
 \max(0,p_0+p_1-1),&\theta=0,\\
 \min(p_0,p_1),&\theta=\infty,
 \end{cases}
 \qquad \Delta=B^2-4\theta(\theta-1)p_0p_1>0
\end{equation}
for $0<\theta<\infty$, with $e_{01}=p_1-e_{11}$, $e_{10}=p_0-e_{11}$ and $e_{00}=1-p_0-p_1+e_{11}$. Its finite-coupling derivatives are
\begin{equation}\label{eq:plackett-partials}
 d_0:=\frac{\partial e_{11}}{\partial p_0}=\frac{\theta p_1-(\theta-1)e_{11}}{\sqrt\Delta},\qquad
 d_1:=\frac{\partial e_{11}}{\partial p_1}=\frac{\theta p_0-(\theta-1)e_{11}}{\sqrt\Delta},
\end{equation}
read as $(d_0,d_1)=(p_1,p_0)$ at $\theta=1$. The remaining derivatives are
\begin{equation}\label{eq:edot}
\begin{array}{c|rrrr}
u&(1,1)&(0,1)&(1,0)&(0,0)\\\hline
\dot e_{u,0}&d_0&-d_0&1-d_0&d_0-1\\
\dot e_{u,1}&d_1&1-d_1&-d_1&d_1-1
\end{array}
\end{equation}
and sum to zero in each row. At $\theta=\infty$, $(d_0,d_1)=(\one\{p_0<p_1\},\one\{p_1<p_0\})$. At $\theta=0$, both equal $\one\{p_0+p_1>1\}$. The respective branch changes occur at $p_0=p_1$ and $p_0+p_1=1$. Our inference results require the corresponding distance $|p_1-p_0|$ or $|p_0+p_1-1|$ to be uniformly bounded away from zero, so small estimation errors do not cross these nondifferentiable points, as required by (R1) in Section~\ref{app:mr}. Identification does not require this separation.

Under SPI, the principal contrast for sequences sharing $z_1$ weights the observed conditional effect by $e_u=p_{z_1}^sw_u$, whereas the contrast given $S(z_1)=s$ weights it by $p_{z_1}^s$, with each weight normalized by its expectation. Constant conditional effects or constant $w_u$ make the two coincide. At $\theta=\infty$ and $p_1\ge p_0$, the scores reduce to the monotone case, with pure observed cells $(1,0)$ and $(0,1)$.

\subsection{Relations among the identifying assumptions}\label{app:relations}
Under LSI, Bayes' rule gives
\begin{equation}\label{eq:mean-si}
\begin{split}
 &\E\{Y(\zbar')\mid\X,Z_1=j,S=s,Z_2=1\}
 -\E\{Y(\zbar')\mid\X,Z_1=j,S=s,Z_2=0\}\\
 &\hspace{35mm}=\frac{w(1-w)\rho\delta_{\zbar'}}{q_1(1-q_1)}.
\end{split}
\end{equation}
Here $\delta_{\zbar'}$ compares the two types in the conditioning cell, even for a cross-arm target. Indeed,
$\widetilde w_1^{\,1}-\widetilde w_1^{\,0}=w(1-w)\rho/\{q_1(1-q_1)\}$, and the other weight changes by its negative.
\begin{proposition}[Implications and their limits]\label{prop:relations}
Under consistency and positivity, LPH together with LSI implies SI. Within LSI, zero bias in a mixed cell is equivalent to $\delta_{\zbar}\rho=0$ and to mean exchangeability of its target outcome in the later treatment. SPI then gives same-initial-arm mean exchangeability, but need not give cross-arm mean exchangeability or distributional SI. LPH and SPI are nonnested, as are SI and LSI. SI need not imply LPH, and nonzero cell biases may cancel after averaging. Apart from stipulated positivity, LSI, SPI and their conjunction impose no observed-law restriction. Neither LSI alone nor SPI alone identifies sequence means.
\end{proposition}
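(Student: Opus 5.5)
The plan is to treat the proposition as a list of separate claims. The positive implications come from the mixture identities in Lemma~\ref{lem:mixture} and the Bayes identity \eqref{eq:mean-si}. The negative claims each get an explicit finite counterexample, built from a binary covariate-free law in a single mixed cell, or two covariate values for the cancellation claim.

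\textbf{LPH with LSI implies SI.} Fix $(\X,Z_1=j,S=s)$. By Assumption~\ref{ass:lsi}(a) and consistency, the conditional law of $(U,\{Y(\zbar)\})$ given $(\X,Z_1=j,S=s)$ equals that given $(\X,S(j)=s)$. Write $\mathbf Y$ for the vector of four potential outcomes. LSI(b) gives
\[
\Prob(Z_2=a,\mathbf Y\in A\mid\X,j,s)=\sum_{u\in\calU(j,s)}w_uh_u^{(a)}\Prob(\mathbf Y\in A\mid U=u,\X,Z_1=j).
\]
Under LPH, $h_u^{(a)}=q_a$ for every compatible $u$, so the sum factors as $q_a\Prob(\mathbf Y\in A\mid\X,j,s)$. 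This is the second part of \eqref{eq:si}, and the first part is shared with LSI.

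\textbf{Zero bias, $\delta\rho=0$ and mean exchangeability.} Within LSI, Proposition~\ref{prop:bias}(i) writes the cell bias as $(-1)^{1-z_2}w(1-w)\delta_{\zbar}\rho/q_{z_2}$. Since $0<w<1$ and $q_{z_2}>0$, it vanishes exactly when $\delta_{\zbar}\rho=0$. By \eqref{eq:mean-si}, the same product vanishing is equivalent to equality of the two conditional means of $Y(\zbar)$ across $Z_2$.

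\textbf{What SPI does and does not give.} SPI sets $\delta_{(j,a)}=0$ in cells with $Z_1=j$, so \eqref{eq:mean-si} yields same-initial-arm exchangeability. For a cross-arm target $Y(1-j,a)$ in a cell with $Z_1=j$, the relevant gap compares profiles that agree in $u_j$ but differ in $u_{1-j}$. SPI only restricts mean gaps within cells of arm $1-j$, that is, between profiles that agree in $u_{1-j}$. I would therefore choose profile-level means of $Y(1-j,a)$ that satisfy SPI in arm $1-j$ yet differ between $u^{(0)}$ and $u^{(1)}$ of the arm-$j$ cell, and take $\rho\ne0$; this breaks cross-arm exchangeability. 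For distributional SI, I would keep equal means but give the two types different outcome variances, again with $\rho\ne0$.

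\textbf{Non-nesting and cancellation.} For LPH versus SPI, one law has $\rho=0$ with $\delta\ne0$ and another has $\rho\ne0$ with $\delta=0$. For SI without LPH, equal component outcome laws across types together with $\rho\ne0$ satisfy both SI and LSI but not LPH. For SI without LSI, I would take a binary trait $H$ independent of $U$ that drives both $Z_2$ and $Y$, and then average or otherwise arrange so that SI holds given $S$ but not given $U$. This step is delicate, and a cleaner choice is to make $S$ itself carry $H$ within the construction. For LSI without SI, the same-arm failure above already suffices. For cancellation, two equiprobable covariate values with biases $+b$ and $-b$ give zero population bias with $\rho\ne0$, so SI fails locally while the g-formula is still correct.

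\textbf{No observed-law restriction.} Given any observed law satisfying positivity, I would fix any margin-reproducing coupling and set $h_u=q_1(j,s,\X)$. I would then let each compatible type's law of $Y(j,a)$ equal $F_{ja,s,\X}$, independently of $U$ within $(\X,S(j))$, and assign the unobserved-arm outcomes $Y(1-j,a)$ the observed laws of arm $1-j$ cell by cell through $u_{1-j}$. The remaining step is to check that this single full law satisfies LSI, SPI and LPH simultaneously and reproduces the observed distribution; this is construction L5.

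\textbf{Non-identification.} Under LSI alone, Theorem~\ref{thm:unified-contrast} with $w=q=1/2$ and interior binary means gives a nondegenerate interval for $\psi$. Under SPI alone, with LSI dropped, residual confounding by a trait $H$ inside a profile shifts $\psi$ while holding the observed law fixed.

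\textbf{Main obstacle.} I expect the main difficulty to be the counterexample showing that SI does not imply LSI. There one must ensure that conditioning on the extra latent coordinate creates dependence that marginalizing over it removes. I would try first a law in which $Y$ depends on $S(1-j)$ and $Z_2$ depends on an independent coin, symmetrized so that the mixture over $V$ is exchangeable.
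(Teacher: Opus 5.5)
\textbf{Gap: the witness that SI does not imply LSI.} Most of your plan follows the paper's route. The LPH-plus-LSI factorization, the bias and exchangeability equivalences via Proposition~\ref{prop:bias} and \eqref{eq:mean-si}, and finite witnesses playing the roles of L1, L2, L4, L5 and the first part of Remark~\ref{rem:nonnested} all match it. Your cancellation across covariate values rather than response cells is an immaterial variation.

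The claim that SI and LSI are nonnested needs a law with SI but not LSI, and you leave this open. Neither fallback works as stated.
\begin{itemize}
\item If $S$ ``carries'' $H$, then $H$ is a function of $(Z_1,U)$. Conditioning on $U$ then removes confounding that runs through $H$, so LSI(b) holds.
\item If $Z_2$ depends only on $U$ and a coin independent of the potential outcomes, then $Z_2\indep\{Y(\zbar)\}\mid\X,Z_1,U$, so LSI again holds.
\end{itemize}
LSI fails only if, within a profile, later treatment depends on something that also moves the outcome. SI must then be recovered by cancellation when the profile is averaged out within the recorded cell.

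\textbf{The missing idea.} Make the within-profile dependence of $Z_2$ on the outcome run in opposite directions for the two profiles of a mixed cell. Tune it so that, by Bayes' rule, $\Prob(Z_2=1\mid\X,Z_1,S,\bm Y)$ is constant.

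The paper's witness takes every $Y(\zbar)=Y$, with $Y\mid U=u$ Bernoulli of mean $0.2+0.6\one\{u=11\}$. In cell $(Z_1,S)=(1,1)$ the propensities are $0.8-0.6Y$ for stratum $11$ and $0.2+0.6Y$ for stratum $01$, and $1/2$ elsewhere. Then $\Prob(U=11\mid Z_1=1,S=1,Y=y)=0.2+0.6y$, so the observed propensity is $0.32$ at both values of $y$. Hence SI holds while LSI fails in both strata.

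Your trait-$H$ idea can be completed in the same way, even more simply:
\begin{itemize}
\item take $U$ uniform, so $w=1/2$, and $H\sim\mathrm{Bernoulli}(1/2)$ independent of $(U,Z_1)$;
\item set every $Y(\zbar)=H$;
\item in cell $(1,1)$ give the two profiles propensities $0.2+0.6H$ and $0.8-0.6H$, with $1/2$ in the other cells.
\end{itemize}
Averaging gives observed propensity $1/2$ at both values of $H$, so SI holds, while $Z_2$ depends on the outcome within each profile. Your coin idea works only if the coin $C$ also enters the outcome, for example $Y=C$ when $V=1$ and $Y=1-C$ when $V=0$ with $w=1/2$. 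Then $C\indep Y$ within the cell, so a propensity depending only on $C$ preserves SI but breaks LSI.

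\textbf{Smaller point.} Your SPI-alone nonidentification step is only sketched. An L3-type law (outcomes a function of a trait independent of $U$, propensity depending on that trait, so SPI holds and $\psi\ne\psi_g$) paired with the L5 completion of the same observed law, which has $\psi=\psi_g$, makes ``holding the observed law fixed'' precise.
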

The proof is in Section~\ref{supp:proof-relations}.
The following examples and counterexamples, labeled L1--L5, illustrate the relationships among LSI, SPI, LPH and SI. They support the nonimplication and nonidentification claims in Proposition~\ref{prop:relations}.

Unless specified otherwise, $\X$ is degenerate, $U$ is uniform on $\{0,1\}^2$, and $Z_1$ is an independent Bernoulli$(1/2)$ variable. Thus $w=p_j^s=1/2$. In L1, L2 and L4, later treatment is generated independently of the outcome vector given $(Z_1,U)$, with propensity $h_U(Z_1)$, so LSI holds. L3 uses outcome-dependent assignment instead. All propensities below are interior.

In L1, $h_u=1/2$ and each potential outcome given $U=u$ is Bernoulli with mean $0.2+0.6\one\{u=11\}$. LPH holds, but the outcome-mean gap $\delta=0.6$ in cell $(1,1)$ violates SPI.

In L2, $h_u(j)=0.2+0.6u_{1-j}$ and the outcome vector is independent of $U$, with independent Bernoulli$(1/2)$ coordinates. SPI and SI hold, but the latent propensity difference $\rho=0.6$ violates LPH.

In L3, an auxiliary $Q\sim\mathrm{Bernoulli}(1/2)$ is independent of $(U,Z_1)$, every $Y(\zbar)=Q$, and $\Prob(Z_2=1\mid Q,U,Z_1)=(1+Q)/4$. All principal outcome means equal $1/2$, so SPI holds, but LSI fails. The g-formula gives $\psi_g(j,1)=2/3$ and $\psi_g(j,0)=2/5$, whereas both sequence means equal $1/2$.

In L4, set $h_{u^{(v)}}=\{1+(2v-1)\rho\}/2$ in selected cells, with $0<|\rho|<1$, and $h=1/2$ elsewhere. Binary component means with gap $\delta$ give $q_a=1/2$ and cell bias $(-1)^{1-a}\delta\rho/2$. A single biased cell gives nonidentification, while opposite gaps across response cells can cancel after averaging.

L5 reproduces an arbitrary observed law. Draw $\X$ from that law, $U\mid\X$ from any compatible response coupling, and each $Y(j,a)\mid(\X,U=u)$ from its observed law given $(\X,j,u_j,a)$. Draw $Z_1$ independently of the potential vector given $\X$ at its observed propensity. Set $S=U_{Z_1}$ and draw $Z_2$ independently of the outcomes given $(\X,U,Z_1)$ with $h_u(j,\X)=q_1(j,u_j,\X)$. The identical within-cell component laws and propensities reproduce the observed law and satisfy LSI, SPI and LPH. Pairing L5 with L4 gives identical observed laws with different sequence means under LSI; pairing it with L3 does so under SPI alone.

\begin{remark}[Mean and distributional independence]\label{rem:nonnested}
Two further witnesses separate SI from LSI. First use L2's assignment and, conditional on $U=u$, draw each $Y(j,a)$ independently as Bernoulli$(1/2)$ if $u_{1-j}=1$ and the constant $1/2$ otherwise. LSI and SPI hold, yet in cell $(j,s)$ the outcome atom at $1/2$ has probability $0.2$ given $Z_2=1$ and $0.8$ given $Z_2=0$. Thus distributional SI fails. Also, replacing the outcomes by $Y(j,a)=U_j$ preserves SPI, but in cell $(1,1)$ the cross-arm target has $\delta_{(0,a)}=1$, $w=q_1=1/2$ and $\rho=0.6$, so cross-arm mean exchangeability fails by \eqref{eq:mean-si}.

Conversely let every $Y(\zbar)=Y$, with $Y\mid U=u$ Bernoulli of mean $0.2+0.6\one\{u=11\}$. In cell $(Z_1,S)=(1,1)$, give strata $11$ and $01$ the outcome-dependent propensities $0.8-0.6Y$ and $0.2+0.6Y$ respectively, and use $1/2$ elsewhere. Bayes' rule gives $\Prob(U=11\mid Z_1=1,S=1,Y=y)=0.2+0.6y$. Thus the observed treatment probability is $0.8(0.2)+0.2(0.8)=0.32$ at either $y$, and is $1/2$ in other cells. SI holds, while LSI fails within both mixed strata.
\end{remark}

\subsection{Residual selection and compatible component laws}\label{app:residual}
Retain consistency, initial-treatment ignorability, positivity and integrable potential outcomes, but drop second-stage LSI. Define
\begin{equation}\label{eq:residual-selection}
 r_{\zbar}(u,\X)=\E\{Y(\zbar)\mid\X,Z_1=z_1,U=u,Z_2=z_2\}-m_{\zbar}(u,\X).
\end{equation}
For an analyst-specified common allowance $\eta\ge0$, we impose
\[
|r_{\zbar}(u,\X)|\le\eta
\]
almost surely for every sequence and profile with $e_u(\X)>0$. Measured in outcome units, $\eta$ bounds how much the potential-outcome mean among recipients of the later treatment can differ from the whole profile's mean at the same covariates. LSI makes this difference zero.

\begin{proposition}[Residual selection decomposition]\label{prop:residual-selection}
Under these retained assumptions,
\begin{equation}\label{eq:residual-bias}
 \psi_g(\zbar)-\psi(\zbar)=B_{\zbar}+R_{\zbar},\qquad
 R_{\zbar}=\E\sum_s p_{z_1}^s\sum_u\widetilde w_u^{z_2}r_{\zbar}(u,\X),
\end{equation}
where $B_{\zbar}$ uses the actual latent means in \eqref{eq:bias}. If $|r_{\zbar}|\le\eta_{\zbar}$, then $|R_{\zbar}|\le\eta_{\zbar}$ and the additional contrast bias is at most $\eta_{\zbar}+\eta_{\zbar'}$.
\end{proposition}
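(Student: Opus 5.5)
The plan is to derive the decomposition by conditioning the target sequence mean on the recorded cell and the latent profile, and then to read off the two bias components. Fix $\zbar=(z_1,z_2)$. Initial-treatment ignorability and consistency give
\[
\psi(\zbar)=\E\sum_s p_{z_1}^s\sum_u w_u(z_1,s,\X)\,m_{\zbar}(u,\X).
\]
As in \eqref{eq:weights}, this step uses only Assumption~\ref{ass:lsi}(a), $S=S(z_1)$ and positivity. The weights $w_u=e_u/p_{z_1}^s$ are the latent prevalences before later-treatment selection. The g-formula term is $\psi_g(\zbar)=\E\sum_s p_{z_1}^s\mu_{\zbar}(s,\X)$.

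Next I decompose the observed cell mean without invoking LSI. Condition on $U$ within the cell $(\X,Z_1=z_1,S=s,Z_2=z_2)$. Consistency gives $Y=Y(\zbar)$ there. The event $S=s$ is the event $U\in\calU(z_1,s)$ once $Z_1=z_1$ is given. Hence
\[
\mu_{\zbar}(s,\X)=\sum_u\widetilde w_u^{z_2}\,\E\{Y(\zbar)\mid\X,Z_1=z_1,U=u,Z_2=z_2\}=\sum_u\widetilde w_u^{z_2}\{m_{\zbar}(u,\X)+r_{\zbar}(u,\X)\}.
\]
Here $\widetilde w_u^{z_2}=\Prob(U=u\mid\X,Z_1=z_1,S=s,Z_2=z_2)$ is the actual selected share. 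Equations \eqref{eq:post-cell}--\eqref{eq:post-tilted} of Lemma~\ref{lem:mixture} use only Assumption~\ref{ass:lsi}(a) and Bayes' rule, since $h_u$ is defined directly as $\Prob(Z_2=1\mid\X,Z_1,U)$. These weights therefore remain $w_uh_u^{(z_2)}/q_{z_2}$ without second-stage LSI.

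Subtracting the expression for $\psi(\zbar)$ from $\psi_g(\zbar)$ gives
\[
\psi_g-\psi=\E\sum_s p_{z_1}^s\sum_u(\widetilde w_u^{z_2}-w_u)m_{\zbar}(u,\X)+R_{\zbar}.
\]
The first term matches the bias $B_{\zbar}$ in \eqref{eq:bias} evaluated at the actual latent means and actual propensities. In a mixed cell, $\widetilde w_{u^{(1)}}^{z_2}-w=(-1)^{1-z_2}w(1-w)\rho/q_{z_2}$ and the other weight changes by the negative of this amount. The first term therefore equals $(-1)^{1-z_2}w(1-w)\rho\,\delta_{\zbar}/q_{z_2}$, exactly as in the proof of Proposition~\ref{prop:bias}(i), and that algebra never used LSI. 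In a pure cell the weights coincide, so the first term vanishes.

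The bound on $R_{\zbar}$ is immediate. The weights $\widetilde w_u^{z_2}$ are nonnegative and sum to one within each cell, and $\sum_s p_{z_1}^s=1$. Hence $|R_{\zbar}|\le\E\sum_sp_{z_1}^s\sum_u\widetilde w_u^{z_2}|r_{\zbar}|\le\eta_{\zbar}$. For a contrast between $\zbar$ and $\zbar'$, the additional bias is $R_{\zbar}-R_{\zbar'}$, and the triangle inequality bounds it by $\eta_{\zbar}+\eta_{\zbar'}$.

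The main obstacle is rigor in the conditioning step. I must check that conditioning on the recorded cell together with $U$ is well defined, and that integrability of $Y(\zbar)$ makes every conditional expectation finite. I must also confirm that zero-probability profiles and cells can be omitted consistently, with positivity guaranteeing $q_{z_2}>0$. I would handle these by working with regular conditional expectations on the finite partition by $U$, using $\E|Y(\zbar)|<\infty$, and omitting strata with $e_u=0$ exactly as the paper does throughout.
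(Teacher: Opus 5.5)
Your proposal is correct and takes the same route as the paper. Consistency gives the selected mixture $\mu_{\zbar}=\sum_u\widetilde w_u^{z_2}(m_{\zbar}+r_{\zbar})$; subtracting the unselected mixture $\sum_u w_u m_{\zbar}$ and applying the covariance identity of Proposition~\ref{prop:bias} recovers $B_{\zbar}$, and the bound on $R_{\zbar}$ holds because the selected weights sum to one. Your explicit check that the tilted weights $w_uh_u^{(z_2)}/q_{z_2}$ and the covariance algebra do not require second-stage LSI is the step the paper's brief proof leaves implicit.
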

The proof is in Section~\ref{supp:proof-residual-selection}.
Under a common allowance $\eta$, the residual contribution to population sequence-mean bias is at most $\eta$ in absolute value, and at most $2\eta$ for a contrast. These allowances need not be sharp. The following result imposes the additional conditions needed for compatibility with the full observed outcome distributions and stipulated support.

\begin{proposition}[Compatibility with residual selection]\label{prop:residual-compatibility}
Suppose every potential outcome $Y(\zbar)$ is supported on $[a,b]$. The fixed observed law has $Y\in[a,b]$ and first-treatment probabilities bounded away from zero and one. The measurable response coupling reproduces its margins, and the measurable latent assignment probabilities are bounded away from zero and one and satisfy $q_1=\sum_u w_uh_u$ in every supported cell. The inputs $(\delta_{\zbar},r_{\zbar})$ are measurable, and first-stage ignorability is retained without second-stage LSI. In a mixed cell let $k_v=h_{u^{(v)}}^{(z_2)}$, $t=wk_1/q_{z_2}$, $\mu=\mu_{\zbar}(s,\X)$ and $F$ be the observed conditional outcome law. Put
\begin{equation}\label{eq:residual-components}
\begin{split}
n_1&=\mu+(1-t)(\delta_{\zbar}+r_1-r_0),\qquad
n_0=\mu-t(\delta_{\zbar}+r_1-r_0),\\
m_v&=n_v-r_v,\qquad a_v^{\rm mis}=n_v-\frac{r_v}{1-k_v}.
\end{split}
\end{equation}
A compatible full law exists if and only if, almost surely in every positive-probability cell and for each sequence,
\begin{equation}\label{eq:residual-compatible}
T_F^-(t)\le n_1\le T_F^+(t),\qquad a_v^{\rm mis}\in[a,b]\quad(v=0,1),
\end{equation}
where $T_F^\pm$ are defined in \eqref{eq:trimmed-means}. In a pure cell set the sole $n_v=\mu$ and retain only its missing-mean restriction. For real-valued outcomes without a support restriction, the trimmed-mean condition remains sufficient when the constructed potential outcomes are integrable. An integrable observed outcome and uniformly bounded residuals suffice under uniform positivity.
\end{proposition}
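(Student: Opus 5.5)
The plan is to prove both directions one mixed cell at a time, for a fixed covariate value and a fixed sequence $\zbar=(z_1,z_2)$ with $z_1=j$, and then assemble the full law. Fix $v\in\{0,1\}$. Within profile $u^{(v)}$ in arm $z_1$, split the population by the later action. The \emph{recipients} are those with $Z_2=z_2$; they have probability $k_v$, and their law of $Y(\zbar)$ has mean $n_v=m_v+r_v$ by \eqref{eq:residual-selection}. The \emph{nonrecipients} have probability $1-k_v$ and some mean $a_v^{\rm mis}$.

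\textbf{Necessity.}
1. Consistency makes the recipient laws observable only through their mixture. By the Bayes step of Lemma~\ref{lem:mixture}, which uses only first-stage ignorability and $q_1=\sum_u w_uh_u$, the share of type $V=1$ among recipients is $t=wk_1/q_{z_2}$. Hence
\[
F=tG^{\rm rec}_1+(1-t)G^{\rm rec}_0 .
\]
2. The trimming characterization behind \eqref{eq:unified-delta-interval} then forces $T_F^-(t)\le n_1\le T_F^+(t)$.
3. The mean identity $\mu=tn_1+(1-t)n_0$ together with the gap $n_1-n_0=(m_1+r_1)-(m_0+r_0)=\delta_{\zbar}+r_1-r_0$ yields the formulas for $n_1$ and $n_0$ in \eqref{eq:residual-components}.
4. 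The profile mean decomposes as $m_v=k_vn_v+(1-k_v)a_v^{\rm mis}$. Solving gives $a_v^{\rm mis}=n_v-r_v/(1-k_v)$, which is well defined because latent assignment is bounded away from zero and one.
5. This nonrecipient mean belongs to an unobserved law supported on $[a,b]$, so it must lie in $[a,b]$.

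In a pure cell the recipient law equals $F$, so $n_v=\mu$ and only the missing-mean condition remains.

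\textbf{Sufficiency.}
1. Recipient laws. Given $n_1\in[T_F^-(t),T_F^+(t)]$, take the convex combination of the lower-tail and upper-tail allocations of share $t$ of $F$, splitting mass at the quantile cutoff. This gives sub-probability measures $tG^{\rm rec}_1\le F$ and $(1-t)G^{\rm rec}_0=F-tG^{\rm rec}_1$ with the required means. Both are supported in $[a,b]$ because $F$ is.
2. Nonrecipient laws. Let $G^{\rm mis}_v$ be the point mass at $a_v^{\rm mis}\in[a,b]$.
3. Profile mean check. The profile law $k_vG^{\rm rec}_v+(1-k_v)G^{\rm mis}_v$ has mean $m_v$, so the residuals equal the specified $r_v$.

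The mixture identity for $F$ holds by construction. All kernels are measurable in $(\X,j,s)$ because quantile functions and the displayed formulas are measurable.

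\textbf{Assembly, the main obstacle.}
The difficulty is to make later treatment depend on outcomes within a profile while keeping first-stage ignorability of the \emph{whole} potential vector, and with a single $Z_2$ serving both outcome arms $a=0,1$. I would introduce potential later treatments.
1. Draw $\X$ from its observed law and $U\mid\X$ from the coupling.
2. For each $z_1\in\{0,1\}$, draw $D_{z_1}\sim\mathrm{Bernoulli}\{h_U(z_1,\X)\}$, independently of each other.
3. Given $(\X,U,D_0,D_1)$, draw each $Y(z_1,a)$ independently: from $G^{\rm rec}$ of type $U_{1-z_1}$ in cell $(z_1,U_{z_1})$ when $D_{z_1}=a$, and from the corresponding $G^{\rm mis}$ otherwise.
4. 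Draw $Z_1$ independently of everything given $\X$ at its observed propensity, and set $S=U_{Z_1}$, $Z_2=D_{Z_1}$, $Y=Y(Z_1,Z_2)$.

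With this construction:
\begin{itemize}
\item $(U,D_0,D_1,\{Y(\zbar)\})$ is independent of $Z_1$ given $\X$, so first-stage ignorability holds.
\item $P(Z_2=1\mid\X,Z_1,U)=h_U$, so the assignment probabilities are the specified ones.
\item Given $(\X,Z_1=z_1,U,Z_2=z_2)$, the outcome $Y(\zbar)$ has the recipient law; this delivers the residuals $r$ and, by step 1 of sufficiency, reproduces the observed cell law $F$ together with $q$ and the margins.
\end{itemize}
Cross-arm sequences are handled automatically, since each $Y(z_1,\cdot)$ is built from its own arm's cell. Standard Borel spaces give measurable realizations of the kernels.

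For unbounded real outcomes, the same construction works whenever the point masses at $a_v^{\rm mis}$ and the trimmed components are integrable. Under uniform positivity, $1/(1-k_v)$ is bounded, so bounded residuals and $\E|Y|<\infty$ ensure this, which gives the final claim.
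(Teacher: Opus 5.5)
Your proposal is correct and follows essentially the same route as the paper. Necessity uses the same mean identities, the trimmed-mean characterization of the recipient mixture $F$, and the support requirement on the missing means. Sufficiency uses the same construction: the paper's potential later-treatment indicators $A_j$ (your $D_{z_1}$) are drawn given $(\X,U)$, each is shared by both potential outcomes in its initial arm, and $Z_1$ is drawn independently so that $Z_2=A_{Z_1}$.
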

The proof is in Section~\ref{supp:proof-residual-compatibility}.

For fixed $(\delta,\rho,\theta)$ in a nonempty compatible residual class, let $c=\psi_g(\zbar)-\psi_g(\zbar')-(B_{\zbar}-B_{\zbar'})$. The allowance $|r|\le\eta$ gives the outer interval $[c-2\eta,c+2\eta]$. This interval requires the compatibility conditions in Proposition~\ref{prop:residual-compatibility} and need not be sharp.

\section{Influence functions, remainders and inference}\label{supp:inference-statements}\label{app:estimation}

\subsection{Common residuals and ratio estimation}\label{supp:distribution-scores}\label{supp:eif}
Write $q_{j,s}=\Prob(Z_2=1\mid\X,Z_1=j,S=s)$ and $\mu_{j,s,a}=\E(Y\mid\X,j,s,a)$. The vector $\nu$ contains the two response margins, four $q_{j,s}$ and eight outcome means. Its conditional residuals are
\begin{equation}\label{eq:distribution-base-scores}
\begin{aligned}
S_{p_j}=A_j&=\frac{\one\{Z_1=j\}}{\pi_j}(S-p_j),&
S_{q_{j,s}}&=\frac{\one\{Z_1=j,S=s\}}{\pi_jp_j^s}(Z_2-q_{j,s}),\\
S_{\mu_{j,s,a}}=B_{(j,a),s}&=W_{j,s,a}(Y-\mu_{j,s,a}),&
W_{j,s,a}&=\frac{\one\{Z_1=j,S=s,Z_2=a\}}
{\pi_jp_j^s q_{j,s}^{a}(1-q_{j,s})^{1-a}}.
\end{aligned}
\end{equation}
All probabilities condition on $\X$. A conditional atom probability on a known finite support has residual $W_{j,s,a}\{\one(Y=y)-\Prob(Y=y\mid\X,j,s,a)\}$. Nuisance $L_r$ norms are over $\Prob_X$, maximized over discrete arguments. The norm of $\Gamma=(\pi,p,q,\mu)$ is the maximum of its four coordinate norms, and $\|\cdot\|_\infty$ is the corresponding essential supremum.

For a dominated submodel with score $\ell$ and a measurable function $g=g(\X)$, conditional differentiation gives
\begin{equation}\label{eq:score-conv}
\E\{g\,\partial_\eta\nu_l\}=\E(gS_l\ell),\qquad
\E\{g\,\E(\ell\mid\X)\}=\E[\{g-\E(g)\}\ell].
\end{equation}
The identities hold first for bounded $g$ and then by truncation when the products are integrable. For a smooth conditional quantity $G(\nu)$ put
\begin{equation}\label{eq:common-contribution}
\mathcal C_G=G+\nabla G^\top S_\nu.
\end{equation}
Thus $\mathcal C_G-\E G$ is the gradient of $\E G$. For $T=\E A/\E m$, $D=\E m>0$, the quotient rule gives
\begin{equation}\label{eq:common-ratio}
\phi_T=\frac{\mathcal C_A-T\mathcal C_m}{D},\qquad
\widehat T=\frac{\Pn\widehat{\mathcal C}_A}{\Pn\widehat{\mathcal C}_m}.
\end{equation}
This retains numerator--mass covariance. The gradient is efficient when the maintained model has tangent-space closure $L_0^2(\Prob)$, justified in the corresponding proofs in Section~\ref{supp:proofs}.

For independent identically distributed observations, cross-fitting uses fixed $K$ folds with $n_k/n$ bounded away from zero, fits outside each evaluation fold and averages contributions with weights $n_k/n$. Folds may be assigned independently of the observations or uniformly within initial-treatment strata, with each fold preserving the sample stratum proportions up to rounding. Section~\ref{supp:common-crossfit} gives the common expansion under $L_2$ score consistency and negligible drift. The following results supply these conditions.

\subsection{SPI functionals and inference}\label{app:eif-pce}\label{app:mr}
Let $s^*=u_j$, $\mu=\mu_{(j,a)}(s^*,\X)$ and
$W_u=e_u+\sum_b\dot e_{u,b}A_b$. The SPI ratio estimator is
\begin{equation}\label{eq:mr-estimator}
\widehat\psi_u^{\rm aug}(j,a)
=\frac{\Pn\{\widehat\mu_{(j,a)}(s^*,\X)\widehat W_u+\widehat e_u\widehat B_{(j,a),s^*}\}}
{\Pn\widehat W_u}.
\end{equation}

The estimation statements use the following additional conditions.
\begin{enumerate}[label=(R\arabic*),nosep]
\item True and fitted observed propensities, response margins and action probabilities belong to $[\epsilon',1-\epsilon']$. Margin pairs lie in a compact convex $\calK\subset(0,1)^2$ on which the first two derivatives of $e_u$ are uniformly bounded. At $\theta=\infty$ require $\inf_{\calK}|p_1-p_0|>0$, and at $\theta=0$ require $\inf_{\calK}|p_0+p_1-1|>0$.
\item $\E(Y^4)<\infty$, and true and fitted outcome regressions are uniformly bounded.
\end{enumerate}

\begin{theorem}[SPI influence function and inference]\label{thm:eif-pce}\label{thm:dml}
Under Assumptions~\ref{ass:sutva}, \ref{ass:lsi}, \ref{ass:positivity}, \ref{ass:theta} and \ref{ass:spi}, suppose $\E(Y^2)<\infty$, $\epsilon_p\le p_j\le1-\epsilon_p$, and either $\theta$ lies in a compact subinterval of $(0,\infty)$ or $\theta\in\{0,\infty\}$ with the separation conditions on response margins in Section~\ref{supp:plackett}. For $\E e_u>0$, the efficient influence function is
\begin{equation}\label{eq:eif-pce}
\varphi_{u,(j,a)}
=\frac{(\mu-\psi_u(j,a))W_u+e_u B_{(j,a),s^*}}{\E e_u}.
\end{equation}
Principal contrasts use the difference of the corresponding influence functions, evaluating each outcome residual at its own slot $u_j$.

Under the additional conditions (R1)--(R2), suppose the common cross-fitting scheme uses $L_2$-consistent nuisances satisfying, in each fold,
\begin{equation}\label{eq:dml-rates}
\|\widehat p-p\|_2\{\|\widehat p-p\|_2+\|\widehat\pi-\pi\|_2\}
+\|\widehat\mu-\mu\|_2\{\|\widehat\pi-\pi\|_2+\|\widehat p-p\|_2+\|\widehat q-q\|_2\}
=o_{\Prob}(n^{-1/2}).
\end{equation}
The estimator in \eqref{eq:mr-estimator} averages the fold-weighted numerator and mass contributions before division and has influence function \eqref{eq:eif-pce} and an efficient normal limit. The expansion and empirical covariance consistency hold jointly for finitely many principal means and their contrasts.
\end{theorem}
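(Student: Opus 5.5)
The statement has two parts, and I would handle them in sequence: the efficient influence function (a pathwise-derivative calculation), then the cross-fitted expansion of the ratio estimator.

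\emph{Step 1: reduce to a smooth functional of observed nuisances.} By Corollary~\ref{thm:pce},
\[
\psi_u(j,a)=N/D,\qquad N=\E\{e_u(\X)\mu\},\quad D=\E e_u(\X),
\]
with $\mu=\mu_{(j,a)}(s^*,\X)$. Here $e_u=e_u(p_0,p_1;\theta)$ is the Plackett map of Lemma~\ref{lem:plackett}. On the stated domain this map is continuously differentiable: in the interior case $\theta$ lies in a compact subinterval of $(0,\infty)$; in the boundary cases the separation conditions keep us on a single affine branch. So both $N$ and $D$ have the form $\E G(\nu)$, and the common-contribution formula \eqref{eq:common-contribution} applies.

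\emph{Step 2: differentiate along a submodel.} For a regular submodel with score $\ell$, I would differentiate $N$ term by term.
\begin{itemize}
\item The marginal law of $\X$ contributes $\{e_u\mu-N\}\ell$, by the second identity in \eqref{eq:score-conv}.
\item Moving the margins $p_b$ contributes $\mu\,\dot e_{u,b}A_b$, using the derivatives in \eqref{eq:edot} and the first identity in \eqref{eq:score-conv}.
\item Moving $\mu$ contributes $e_uB_{(j,a),s^*}$.
\end{itemize}
Hence the gradient of $N$ is $\mu W_u+e_uB_{(j,a),s^*}-N$, and likewise the gradient of $D$ is $W_u-D$. The quotient rule \eqref{eq:common-ratio} then yields \eqref{eq:eif-pce}.

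\emph{Step 3: efficiency.} To show this gradient is the efficient one, I would check that the model is locally saturated at the observed-law level. By L5 in Section~\ref{app:relations}, LSI, SPI and the specified coupling impose no restriction on the observed law beyond positivity. So the tangent space is all of $L_0^2(\Prob)$, the gradient is unique, and it is efficient.

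\emph{Step 4: contrasts.} These follow by linearity. I would just check that each outcome residual is evaluated in its own cell $u_j$.

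\emph{Step 5: cross-fitted estimator.} I would use the common expansion of Section~\ref{supp:common-crossfit}, applied separately to the numerator average $\Pn\{\widehat\mu\widehat W_u+\widehat e_u\widehat B\}$ and the mass average $\Pn\widehat W_u$, and then take the delta method for the ratio with $D>0$. Two ingredients are needed.
\begin{itemize}
\item \emph{$L_2$ convergence of the fitted contributions.} Bounded fitted regressions and the $[\epsilon',1-\epsilon']$ bounds in (R1)--(R2) make the contributions uniformly Lipschitz in the nuisances. This gives the $L_2$ convergence, so the empirical-process term is $o_{\Prob}(n^{-1/2})$ by the cross-fitting conditional-independence argument.
\item \emph{The second-order drift remainder.} I would compute the conditional mean given the training fold and decompose it into three pieces:
\begin{enumerate}
\item A second-order Taylor remainder of $e_u$ in $(p_0,p_1)$, using the bounded second derivatives on the convex set $\calK$. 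This gives $O(\|\widehat p-p\|_2^2)$.
\item Terms from the ratio $\pi/\widehat\pi$ multiplying the margin error in $A_b$. These give $\|\widehat p-p\|_2\|\widehat\pi-\pi\|_2$.
\item The outcome term, whose weight $W_{j,s,a}$ involves $\pi_jp_j^sq$. Its conditional mean equals $\E[e_u(\widehat\mu-\mu)(1-\text{ratio of true to fitted weights})]$, and this is bounded by $\|\widehat\mu-\mu\|_2$ times the sum of the $\pi$, $p$ and $q$ errors.
\end{enumerate}
Together these match \eqref{eq:dml-rates}. There is also a cross term $(\widehat e_u-e_u)(\widehat\mu-\mu)$, bounded by $\|\widehat p-p\|_2\|\widehat\mu-\mu\|_2$, which is also covered.
\end{itemize}

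\emph{Main obstacle.} I expect the key difficulty to be this drift bookkeeping. Every product must pair the errors exactly as \eqref{eq:dml-rates} allows. In particular, the outcome-weight mismatch must not produce a $q$ error that is not multiplied by a $\mu$ error. I would handle this by writing $\widehat W/W-1$ as a product ratio and expanding it, using the uniform positivity bounds.

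\emph{Step 6: joint statement and covariance.} Stacking finitely many means and contrasts gives the joint statement. Consistency of the empirical covariance follows from the $L_2$ convergence of the estimated influence functions together with $\E Y^4<\infty$, through a standard moment bound.
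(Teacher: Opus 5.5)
Your proposal is correct and follows essentially the same route as the paper. The gradient comes from the common-contribution formula and the quotient rule, efficiency from the observed-law saturation of construction L5, and the exact drift splits into the Taylor remainder $r'_u$, the $\pi$-ratio term $c'$ and the outcome term $(\widehat\mu-\mu)(e_u-\widehat e_uR')$, followed by the common cross-fitting expansion. The one point to tighten is in Step 5: the fitted contributions are not uniformly Lipschitz, because the Lipschitz constant involves $|Y|$. Their $L_2$ convergence therefore needs $\E Y^4<\infty$ together with $\|h\|_4^2\le\|h\|_\infty\|h\|_2$, exactly as the paper uses.
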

The proof and exact drift are in Section~\ref{supp:proof-dml}. The product condition does not require uniform nuisance convergence.

For the sequence functional, set $\nu_{\zbar}=\sum_s p_j^s\mu_{(j,a)}(s,\X)$. The same calculus gives the classical longitudinal gradient \citep{bang2005,luedtke2017}
\begin{equation}\label{eq:eif-ate}
\varphi^g_{\zbar}
=\nu_{\zbar}-\psi_g(\zbar)
+\{\mu_{(j,a)}(1,\X)-\mu_{(j,a)}(0,\X)\}A_j
+\sum_s p_j^s B_{(j,a),s}.
\end{equation}
The last sum cancels the response-margin denominator in \eqref{eq:distribution-base-scores}. Thus observed positivity and $\E(Y^2)<\infty$ suffice for its square integrability without a lower bound on $p_j^s$. Under Corollary~\ref{thm:union}, $\psi_g=\psi$ and this is also the causal sequence-mean EIF.

\begin{corollary}[Sequence-mean inference]\label{cor:seq-dml}
Under Corollary~\ref{thm:union}, (R2), and the observed probability bounds in (R1), use the common fixed-fold scheme. The estimator solving $\Pn\widehat\varphi^g_{\zbar}=0$ is efficient if its nuisances are $L_2$ consistent and, foldwise,
\[
\|\widehat p-p\|_2\|\widehat\pi-\pi\|_2
+\|\widehat\mu-\mu\|_2(\|\widehat\pi-\pi\|_2+\|\widehat q-q\|_2)
=o_{\Prob}(n^{-1/2}).
\]
The result holds jointly across sequences and contrasts.
\end{corollary}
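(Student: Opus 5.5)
Under Corollary~\ref{thm:union}, I plan to prove the sequence-mean statement as a special case of the common cross-fitting expansion in Section~\ref{supp:common-crossfit}, using the gradient \eqref{eq:eif-ate}. The plan has four steps. First, verify that \eqref{eq:eif-ate} is the efficient influence function. Second, show that the cross-fitted one-step estimator has a second-order remainder bounded by the stated product rates. Third, show that the empirical-process (drift) term is negligible. Fourth, obtain joint asymptotic normality and covariance consistency across finitely many sequences.

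For efficiency, I will write $\psi_g(\zbar)=\E\nu_{\zbar}$ with $\nu_{\zbar}=G(\nu)=\sum_s p_j^s\mu_{(j,a)}(s,\X)$. I then apply \eqref{eq:common-contribution}, so that $\mathcal C_G-\E G$ is the pathwise gradient by \eqref{eq:score-conv}. The derivative with respect to $p_j$ is $\mu_{(j,a)}(1,\X)-\mu_{(j,a)}(0,\X)$. The derivative with respect to $\mu_{(j,a)}(s,\X)$ is $p_j^s$. Since $G$ does not involve $q$, no $q$-residual term appears. This reproduces \eqref{eq:eif-ate}. The factor $p_j^s$ cancels the denominator of $W_{j,s,a}$, so square integrability needs only observed positivity, (R1), and $\E(Y^2)<\infty$ from (R2). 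The model for the observed law is nonparametric apart from stipulated positivity, by construction L5 and Proposition~\ref{prop:relations}. Its tangent space is therefore $L_0^2$, so the gradient is efficient. Corollary~\ref{thm:union} gives $\psi_g=\psi$.

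For the remainder, I fix a fold and compute
\[
R(\widehat\Gamma,\Gamma)=\E\{\widehat\nu_{\zbar}+\widehat\varphi^g_{\zbar}-\widehat\nu_{\zbar}\}+\E\widehat\nu_{\zbar}-\psi_g,
\]
that is, $\E_{\Gamma}\{\widehat\varphi^g+\widehat\psi\}-\psi$. I will integrate the residuals conditionally on $\X$.

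The $A_j$ term has conditional mean $(\pi_j/\widehat\pi_j)(p_j-\widehat p_j)\widehat\Delta\mu$. Combining it with the plug-in error from the response weights leaves two pieces. One is $(\pi_j/\widehat\pi_j-1)(p_j-\widehat p_j)\widehat\Delta\mu$, which is of order $\|\widehat\pi-\pi\|_2\|\widehat p-p\|_2$. The other is $(p_j-\widehat p_j)(\widehat\Delta\mu-\Delta\mu)$, which is of order $\|\widehat p-p\|_2\|\widehat\mu-\mu\|_2$.

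The $B$ term has conditional mean $\widehat p_j^s\,\pi_jp_j^sq_a/(\widehat\pi_j\widehat p_j^s\widehat q_a)\,(\mu-\widehat\mu)$. Here I use that $\widehat p_j^s$ cancels its own denominator. Combined with the plug-in term $p_j^s(\widehat\mu-\mu)$, this leaves $p_j^s\{\pi_jq_a/(\widehat\pi_j\widehat q_a)-1\}(\mu-\widehat\mu)$. By Cauchy–Schwarz and the bounds in (R1), this is of order $\|\widehat\mu-\mu\|_2(\|\widehat\pi-\pi\|_2+\|\widehat q-q\|_2)$.

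The $\|\widehat p-p\|_2\|\widehat\mu-\mu\|_2$ piece from the $A_j$ term does not appear in the displayed rate condition. It is, however, dominated by the listed terms once the $B$ residual is evaluated at the true $p$. Failing that, I will reorganize the terms: I will pair the $A_j$ correction with $\widehat\mu$ evaluated in the $p$-weighted average. This is the step I expect to be the main obstacle, namely arranging the algebra so that only the stated products appear. My first attempt will be to expand around $(p,\widehat\mu)$ rather than $(\widehat p,\widehat\mu)$.

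For drift, conditionally on the training folds, $\widehat\varphi^g-\varphi^g$ has $L_2$ norm tending to zero. This follows from $L_2$ consistency together with uniform boundedness of the fitted regressions and of the inverse propensities, using (R1)–(R2), with $\E(Y^4)<\infty$ controlling the residual products. Chebyshev's inequality then gives that the centered empirical term is $o_{\Prob}(n^{-1/2})$. Averaging across a fixed number of folds with nonvanishing proportions, I solve $\Pn\widehat\varphi^g_{\zbar}=0$, which equals the one-step estimator since $\varphi^g$ is linear in $\psi_g$. This yields the asymptotically linear expansion.

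Joint normality across finitely many sequences and contrasts follows from the multivariate central limit theorem applied to the stacked influence functions. Contrasts are linear combinations of these. Consistency of the empirical covariance follows from the same $L_2$ convergence of the estimated influence functions together with the fourth-moment bound.
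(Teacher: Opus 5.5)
Your overall route is the paper's. You integrate the residuals of the fitted centred score given $\X$ to get its exact drift at the true target, bound it by Cauchy--Schwarz using (R1)--(R2), and finish with the common cross-fitting argument with unit mass. The efficiency part is also fine: $p_j^s$ cancels the residual denominator, and L5 makes the observed-law model nonparametric.

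The genuine problem is the remainder bookkeeping, together with the claim you use to rescue it. In the $A_j$ step you split the plug-in error so that the response-weight error multiplies the \emph{true} contrast, $(\widehat p_j-p_j)\Delta\mu$. In the $B$ step you instead pair with $\sum_s p_j^s(\widehat\mu_s-\mu_s)$, the outcome error weighted by the \emph{true} $p_j^s$. These two pieces do not add up to $\widehat\nu_{\zbar}-\nu_{\zbar}$. They omit $\sum_s(\widehat p_j^{\,s}-p_j^s)(\widehat\mu_s-\mu_s)=(\widehat p_j-p_j)(\widehat\Delta\mu-\Delta\mu)$, which cancels your extra piece $(p_j-\widehat p_j)(\widehat\Delta\mu-\Delta\mu)$ exactly. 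So the $\|\widehat p-p\|_2\|\widehat\mu-\mu\|_2$ term is an artifact of the inconsistent split. It is not ``dominated by the listed terms'', and it could not be: with $\widehat\pi=\pi$ and $\widehat q=q$ the displayed rate is zero, while that product need not be.

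Your fallback is the correct fix and is exactly the paper's decomposition. Expand through $(p,\widehat\mu)$, writing $\widehat\nu_{\zbar}-\nu_{\zbar}=\sum_s(\widehat p_j^{\,s}-p_j^s)\widehat\mu_s+\sum_s p_j^s(\widehat\mu_s-\mu_s)$. The fitted $A_j$ term has conditional mean $\sum_s(p_j^s-\widehat p_j^{\,s})\widehat\mu_s\,\pi_j/\widehat\pi_j$. The $\widehat p_j^{\,s}$-weighted fitted $B$ term has conditional mean $p_j^sR'_s(\mu_s-\widehat\mu_s)$, with $R'_s=\pi_jq_a/(\widehat\pi_j\widehat q_a)$. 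The drift is therefore identically
\[
\E\sum_s\Bigl\{(\widehat p_j^{\,s}-p_j^s)\widehat\mu_s\frac{\widehat\pi_j-\pi_j}{\widehat\pi_j}+p_j^s(\widehat\mu_s-\mu_s)(1-R'_s)\Bigr\}.
\]
No approximation step or ``main obstacle'' remains. Cauchy--Schwarz, with $\widehat\pi_j$ and $\widehat q_a$ bounded below and $\widehat\mu$ bounded, gives precisely the two displayed products. With that correction the rest of your argument goes through as in the paper.
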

The proof is in Section~\ref{supp:proof-sequence-inference}.

\subsection{Compatible full-outcome endpoints}\label{supp:distribution-regularity}\label{supp:distribution-inference}
At fixed coupling, the population sequence and shared-assignment contrast bounds without additional outcome restrictions have finitely many candidates from the two assignment endpoints and lower or upper tail allocations. On a smooth active branch write an endpoint as $T=\E\{K(\nu,\bm R)\}$. Here $\nu$ contains the response margins, observed second-treatment probabilities and outcome means. For the conditional law $F_r=F_{j_ra_r,s_r,\X}$, each $R_r=R^{\sigma_r}_{F_r}\{t_r(\nu)\}$ is a lower or upper tail integral, $R^-_F=L_F$ or $R^+_F=U_F$, and $K$ includes the outer response weights. Other population endpoints under the local restrictions of Theorem~\ref{thm:contraction} are covered only when such a finite representation is established and the conditions below hold. The contraction theorem alone does not supply that representation.

At cutoffs $Q^-=F^{-1}(t)$ and $Q^+=F^{-1}(1-t)$,
\begin{equation}\label{eq:tail-gradients}
\begin{aligned}
\xi^-_F(Y;t)&=tQ^--(Q^--Y)_+-L_F(t),
&\partial_tL_F(t)&=Q^-,\\
\xi^+_F(Y;t)&=tQ^++(Y-Q^+)_+-U_F(t),
&\partial_tU_F(t)&=Q^+.
\end{aligned}
\end{equation}
The centered hinge terms correct estimation of the conditional outcome law, while the cutoff derivatives correct estimation of the trimming fraction. With $K_r=\partial K/\partial R_r$, the combined derivative in a base coordinate is
\begin{equation}\label{eq:distribution-total-derivative}
D_l=\frac{\partial K}{\partial\nu_l}
+\sum_rK_rQ_r\frac{\partial t_r}{\partial\nu_l}.
\end{equation}
The first derivative holds the tail arguments fixed. Let $S_l$ be the inverse-probability residual for $\nu_l$ and $W_r$ its outcome history weight, defined in \eqref{eq:distribution-base-scores}.

Impose the following sufficient conditions on common neighbourhoods, foldwise.
\begin{enumerate}[label=(D\arabic*),nosep]
\item\label{cond:distribution-domain}
Outcomes have bounded stipulated support, containing all fitted means and cutoffs. True and fitted base coordinates and tail integrals lie in a compact domain with residual denominators bounded below and tilts in $[c,1-c]$. True observed treatment probabilities are uniformly inside $(\epsilon,1-\epsilon)$. Candidates and tilts have bounded first and second derivatives. Pure cells retain their status and contribute only observed-law means.
\item\label{cond:distribution-branches}
Assignment branches, cutoff-feasibility comparisons and active candidates have uniform positive gaps after algebraically identical candidates are identified. Base coordinates, tail integrals and cutoffs converge uniformly, preserving feasible candidates and selected branches with probability tending to one.
\item\label{cond:distribution-cutoffs}
Each cutoff has either a conditional density bounded above and away from zero on a common neighbourhood, or a known finite outcome support with its trimming probability a fixed positive distance inside the cutoff atom's jump. In the latter case, fitted cutoffs are exact quantiles on that same support and fitted atom probabilities converge uniformly. Hence cutoff atoms eventually agree. Moving atom locations, numerical cutoffs beside atoms and jump endpoints are excluded.
\item\label{cond:distribution-rates}
With a fixed number of nonvanishing cross-fitting folds, define
\[
r_n=\|\widehat\nu-\nu\|_2+
\sum_r(\|\widehat R_r-R_r\|_2+\|\widehat Q_r-Q_r\|_2),
\qquad s_n=\|\widehat\pi-\pi\|_2.
\]
True tails and cutoffs use true tilts, fitted quantities use fitted tilts. Require $r_n+s_n=o_{\Prob}(1)$ and
\begin{equation}\label{eq:distribution-rates}
r_n^2+s_nr_n=o_{\Prob}(n^{-1/2}).
\end{equation}
\end{enumerate}

\begin{theorem}[One-step inference for population endpoints]\label{thm:distribution-inference}
Fix the response coupling and latent positivity margin. For a population sequence bound or shared-assignment contrast endpoint with the preceding finite representation, impose conditions (D1)--(D4) of Section~\ref{supp:distribution-regularity}.
The efficient influence function on the stipulated outcome support is
\begin{equation}\label{eq:distribution-eif}
\phi_T=K-T+\sum_lD_lS_l+\sum_rK_rW_r\xi_r.
\end{equation}
The estimator averaging the fitted $K+\sum_lD_lS_l+\sum_rK_rW_r\xi_r$ satisfies
\[
\sqrt n(\widehat T-T)=n^{-1/2}\sum_{i=1}^n\phi_T(\calO_i)+o_{\Prob}(1).
\]
Its empirical influence variance is consistent. The expansion holds jointly for finitely many endpoints, with a joint normal approximation when their covariance is nondegenerate.
\end{theorem}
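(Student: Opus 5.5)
The plan follows the standard one-step route for a smooth functional of a finite-dimensional nuisance vector and tail integrals. First, I would derive the influence function \eqref{eq:distribution-eif} along a regular one-dimensional submodel with score $\ell$. On the active branch, (D1)--(D2) fix the selected candidate, assignment endpoint and tail allocation in a neighbourhood, so $T=\E K(\nu,\bm R)$ with $\bm R_r=R^{\sigma_r}_{F_r}\{t_r(\nu)\}$ and every ingredient differentiable. Differentiating gives three pieces. The first is the marginal law of $\X$, which contributes $K-T$ by the second identity in \eqref{eq:score-conv}. The second is $\nu$, held fixed in the tail arguments, which contributes $\sum_l(\partial K/\partial\nu_l)S_l$ by the first identity in \eqref{eq:score-conv}. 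The third is the tail integrals. For these, the chain rule with $\partial_tL_F=Q^-$ and $\partial_tU_F=Q^+$ from \eqref{eq:tail-gradients} adds $K_rQ_r\,\partial t_r/\partial\nu_l$ to each $\nu_l$ coefficient, which gives $D_l$ in \eqref{eq:distribution-total-derivative}. Perturbing the conditional law $F_r$ at fixed $t$ gives the hinge terms. For instance, $L_F(t)=\int_0^tF^{-1}$ has pathwise derivative $\E_F[\{tQ^--(Q^--Y)_+\}\ell]$, which I would justify under either branch of (D3): a density near the cutoff, or an atom jump strictly containing $t$. Weighting by $W_r$ converts this conditional score into an observed-data term. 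Efficiency then follows because the observed model is locally nonparametric on the stipulated support: the stipulated positivity and interior tilts in (D1) impose no local restriction, so the tangent space is all of $L_0^2$.

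Second, I would set up the cross-fitted expansion. For each fold, write $\widehat T_k-T=(\Pn^k-\Prob)\phi_T+(\Pn^k-\Prob)(\widehat\phi-\phi_T)+\mathrm{Rem}$, where $\mathrm{Rem}=\Prob\widehat{\mathcal C}-T$. The empirical-process term is $o_\Prob(n^{-1/2})$ conditionally on the training fold. This uses Chebyshev together with $L_2$ convergence of $\widehat\phi$ to $\phi_T$, which follows from uniform boundedness (D1) and consistency $r_n+s_n=o_\Prob(1)$. The term also needs the fitted candidate to coincide with the true one with probability tending to one. That is exactly what the gap conditions (D2) and the eventual atom agreement in (D3) deliver.

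Third, and this is the main obstacle, I would bound the drift $\mathrm{Rem}$ by $O_\Prob(r_n^2+s_nr_n)$. I would condition on $\X$ and integrate each residual against the true law. The inverse-probability residuals produce terms of the form $(\pi/\widehat\pi)(\nu-\widehat\nu)$, and combining these with the linear term of $K$ leaves products $(\widehat\pi-\pi)(\widehat\nu-\nu)$ plus a second-order Taylor remainder in $\nu$ and $\bm R$. That remainder is controlled by the bounded second derivatives in (D1). The delicate part is the tail integrals evaluated at fitted cutoffs and fitted tilts. I would use the exact identity
\[
\E_F\xi^-_F(Y;\widehat t,\widehat Q)+L_F(\widehat t)-\widehat L = \text{quadratic in }(\widehat Q-Q,\widehat t-t,\widehat L-L),
\]
which holds because the hinge expansion is quadratic in the cutoff error under a bounded density. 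In the finite-support case the corresponding term vanishes once the fitted atoms agree. Quantile-error terms are then absorbed into $r_n$ through the cutoff norms included in its definition. Summing over folds gives the asymptotically linear representation stated in the theorem.

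Finally, I would obtain consistency of the empirical influence variance from the $L_2$ convergence of $\widehat\phi$ and the bounded fourth moments implied by bounded support. For finitely many endpoints, the joint expansion follows by stacking the per-endpoint expansions, and the Cramér--Wold device gives joint normality whenever the covariance matrix is nondegenerate.
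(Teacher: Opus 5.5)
Your proposal is correct and follows the paper's proof essentially step for step. It uses the same score-identity derivation of \eqref{eq:distribution-eif} (tail derivatives $Q_r\,\partial t_r$ plus hinge terms, with efficiency from a locally saturated model), the same drift bound of order $r_n^2+s_nr_n$ from residual means and denominator ratios, the same quadratic hinge remainder at fitted cutoffs vanishing at preserved atoms, and the same common cross-fitting argument.

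The only slip is the sign in your displayed identity. With $\widehat\xi^-=\widehat t\widehat Q-(\widehat Q-Y)_+-\widehat L$, the quadratic quantity is $\E_F\widehat\xi^- -\{L_F(\widehat t)-\widehat L\}=\widehat t\widehat Q-\E_F(\widehat Q-Y)_+-L_F(\widehat t)$, not the sum you display. The leftover first-order term $K_r\{L_F(\widehat t)-L_F(t)\}\approx K_rQ_r(\widehat t_r-t_r)$ must then be cancelled explicitly against the tilt part of $D_l$, as the paper does.
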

The proof is in Section~\ref{supp:proof-distribution-inference}.

\subsection{A fixed grid of restricted binary endpoints}\label{supp:restricted-inference}
For fixed endpoint $k$, let $A_k$ be its conditional numerator and $m_k$ its mass. Population endpoints have $m_k=1$; principal endpoints have $A_k=e_uB_k$ and $m_k=e_u$, where $B_k$ is the conditional principal bound. Use \eqref{eq:common-contribution}--\eqref{eq:common-ratio}, differentiating response weights and all moving candidate coordinates. A simple stationary or feasibility root obeys
\begin{equation}\label{eq:restricted-root-derivative}
P\{h(\nu),\nu\}=0,\qquad
\nabla_\nu h=-P_\nu/P_h,\qquad
\nabla_\nu f\{h(\nu),\nu\}=f_\nu+f_h\nabla_\nu h.
\end{equation}
At a stationary objective the last term vanishes. At a moving feasibility boundary it must generally be retained.

True and fitted nuisance functions lie in a common compact interior domain with neighbourhoods of uniform radius, true $q$ uniformly inside $(\epsilon,1-\epsilon)$, and positive residual and candidate denominators and principal masses bounded below. Active candidates have persistent feasible charts with constant-rank vertex equations, uniformly slack inactive constraints and simple roots whose gradients $-P_\nu/P_h$ are uniformly bounded and Lipschitz on these neighbourhoods. This root condition is invariant to nonzero rescaling of the root equation. Complete candidate gradients are bounded and uniformly Lipschitz. Either algebraically distinct optimizers have uniform positive gaps, or their joined value is continuously differentiable with uniformly Lipschitz gradient throughout the neighbourhood. Uniform nuisance consistency retains these neighbourhoods.

Nonunique optimizers can satisfy these conditions when their complete gradients agree where the branches meet and the joined gradient is Lipschitz. Value ties alone do not suffice, and positive-probability ties with disagreeing derivatives are excluded.

At $\gamma=0$, first impose $\delta_0=\delta_1=d$. The contrast $\mu_1-\mu_0+(t_0-t_1)d$ is independent of the principal label, so Theorem~\ref{thm:contraction} gives assignment-endpoint sufficiency for this conditional contrast too. Marginal weights still differ. This equality reduction removes redundant strip constraints.

\begin{corollary}[Restricted finite-grid inference]\label{cor:restricted-inference}
On the preceding domain, suppose observations are independent, folds are fixed and nonvanishing, and foldwise $r_n=\|\widehat\nu-\nu\|_2$, $s_n=\|\widehat\pi-\pi\|_2$ tend to zero and satisfy \eqref{eq:distribution-rates}. Then
\begin{equation}\label{eq:restricted-ratio-inference}
\widehat T_k=\frac{\Pn\widehat{\mathcal C}_{A_k}}{\Pn\widehat{\mathcal C}_{m_k}},\quad
\phi_k=\frac{\mathcal C_{A_k}-T_k\mathcal C_{m_k}}{\E m_k},\quad
\sqrt n(\widehat{\bm T}-\bm T)=n^{-1/2}\sum_i\bm\phi(\calO_i)+o_{\Prob}(1).
\end{equation}
Empirical influence covariance is consistent. For positive marginal variances, pointwise normal intervals have asymptotic nominal coverage. For $M$ fixed endpoints, the critical value $z_{1-\alpha/(2M)}$ gives simultaneous coverage at least $1-\alpha$ asymptotically.
\end{corollary}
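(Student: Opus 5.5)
The plan is to use the common cross-fitting expansion of Section~\ref{supp:common-crossfit} separately for each numerator $A_k$ and mass $m_k$, and then combine by the quotient rule in \eqref{eq:common-ratio}. The key step is to show that the conditional bound $B_k(\nu)$ is a locally smooth function of the base nuisance vector $\nu$ on the stated domain. Its complete gradient must then be bounded and uniformly Lipschitz, so that the contribution $\mathcal C_{A_k}=A_k+\nabla A_k^\top S_\nu$ from \eqref{eq:common-contribution} has the usual second-order remainder.

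To obtain this smoothness, I would start from the finite candidate representation of Proposition~\ref{prop:finite-effect-budget}, using Theorem~\ref{thm:contraction} for population endpoints and the $\gamma=0$ equality reduction where it applies. Each candidate value has the form $f\{h(\nu),\nu\}$. Here $h$ is either an assignment endpoint, which is an explicit smooth function of $(w,q)$ and therefore of $(p_0,p_1,q_s)$ through Lemma~\ref{lem:plackett}, or a simple root of a polynomial equation of degree at most two. The implicit function theorem \eqref{eq:restricted-root-derivative} then gives bounded Lipschitz derivatives, by the assumed bounds on $-P_\nu/P_h$. Persistent charts with constant-rank vertex equations and slack inactive constraints keep the active vertex formula fixed on the neighbourhood.

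The optimized value is a max or min over finitely many candidates. Under uniform gaps it coincides locally with a single candidate. In the joined case the stated $C^{1,1}$ condition applies directly. Either way, $B_k$ is $C^{1,1}$ in $\nu$ on the common domain. Uniform nuisance consistency places the fitted $\widehat\nu$ in this neighbourhood with probability tending to one.

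With smoothness established, I would expand each fold. Writing $\widehat G=G(\widehat\nu)$, conditional mean-zero residuals give
\[
\E\{\widehat{\mathcal C}_G-\mathcal C_G\}=\E\{\widehat G-G+\nabla\widehat G^\top \E(\widehat S_\nu\mid\X)\}.
\]
Here $\E(\widehat S_\nu\mid\X)$ equals $-(\widehat\nu-\nu)$ up to a factor $\pi/\widehat\pi$ and similar ratios from the fitted propensities. A Taylor expansion yields a drift bounded by a constant times $\|\widehat\nu-\nu\|_2^2+\|\widehat\pi-\pi\|_2\|\widehat\nu-\nu\|_2$. This is $o_{\Prob}(n^{-1/2})$ by \eqref{eq:distribution-rates}. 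The empirical-process term is $o_{\Prob}(n^{-1/2})$ by sample splitting, since $\|\widehat{\mathcal C}_G-\mathcal C_G\|_2\to0$ follows from $L_2$ consistency together with bounded gradients and bounded denominators.

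Combining the numerator and mass expansions by the delta method, with $\E m_k$ bounded below, gives \eqref{eq:restricted-ratio-inference} jointly over the finitely many $k$. Consistency of the empirical covariance follows from the same $L_2$ convergence of the estimated $\phi_k$. Pointwise and Bonferroni coverage then follow from the joint normal limit.

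The main obstacle is the smoothness step near optimizer switches and moving feasibility boundaries. At such boundaries the term $f_h\nabla h$ does not vanish and must be retained. At branch joins, ties in value without agreement of gradients would destroy differentiability. I would handle both cases strictly through the assumed persistent-chart and joined-gradient conditions, rather than trying to derive them.
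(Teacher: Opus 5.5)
Your proposal is correct and follows essentially the same route as the paper. In both, the numerator and mass have bounded Lipschitz gradients on the preserved charts, taken from the assumed root, chart and gap or joined-gradient conditions. A Taylor expansion about the fitted $\widehat\nu$ cancels the linear error against the residual means $\nu-\widehat\nu$, and fitted-denominator ratios give the $C(r_n^2+s_nr_n)$ drift; $L_2$ score consistency, the common cross-fitting ratio argument and a union (Bonferroni) bound finish the proof, with your version only spelling out more explicitly how that smoothness arises from the finite candidate representation.
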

The proof is in Section~\ref{supp:proof-restricted-inference}.

\section{Simulation specifications and evaluation}\label{supp:sims}\label{supp:sim-design}
The four scenarios use $1000$ independent draws at each of $n\in\{2000,8000,32000\}$ and the estimation procedure described in Section~\ref{sec:simulations}.

\paragraph{Generating laws.}
Synthetic covariates $(X_1,X_2)$ are uniform on $\{-1,1\}^2$, with $\Prob(Z_1=1\mid\X)=1/2$. The two potential responses are conditionally independent, so $\theta=1$. In the Baseline and Interior scenarios,
\[
 p_0=0.60,\qquad p_1=0.65,\qquad q_{js}=0.70,\qquad
 \mu_{js0}=0.20,\qquad \mu_{js1}=0.70.
\]
The Covariates scenario replaces these by
\[
\begin{aligned}
 \operatorname{logit}p_0(x)&=\operatorname{logit}(0.60)+0.075x_1+0.05x_2,\\
 \operatorname{logit}p_1(x)&=\operatorname{logit}(0.65)-0.05x_1+0.075x_2,\\
 \operatorname{logit}q_{js}(x)&=\operatorname{logit}(0.70)+0.05x_1-0.05x_2,\\
 \operatorname{logit}\mu_{jsa}(x)&=\operatorname{logit}(0.20+0.50a)+0.05x_1+0.05x_2.
\end{aligned}
\]
The History scenario additionally adds $0.05(2j-1)+0.04(2s-1)x_2$ to the later-treatment logit and $0.05(2j-1)(2s-1)+0.05(2a-1)x_2$ to the outcome logit.

The latent positivity margin is $\epsilon=0.08$ in every scenario. Within a mixed response cell, with alternative-response prevalence $w$ and observed later-treatment probability $q$, the assignment probabilities are
\[
 h_v=q+(v-w)r,\qquad
 r=0.2\min\left\{\frac{q-\epsilon}{w},\frac{1-\epsilon-q}{1-w}\right\}.
\]
With $t_1=wh_1/q$ and $t_0=w(1-h_1)/(1-q)$, potential outcomes are Bernoulli with component means
\[
 m_{jsa}(v,x)=\mu_{jsa}(x)+0.04\{v-t_a(x)\},
\]
and independent of assignment given $(U,\X)$. Their selected mixtures reproduce the specified observed means exactly. The component gaps satisfy $\delta_0=\delta_1=0.04$, so the generating law satisfies the zero effect budget while violating SPI. The recorded outcome is $Y(Z_1,Z_2)$. Target bounds optimize over all compatible component means under the stipulated budget, rather than assuming that the generating gap is known.

The budgets are $\gamma=0$ for Baseline, Covariates and History, and $1$ for Interior. True endpoints average the exact conditional bound contributions from Proposition~\ref{prop:finite-effect-budget} over the four covariate profiles, dividing principal numerators by the true stratum mass. Candidate enumeration also verifies feasibility and endpoint derivatives at each generating law.

\paragraph{Common nuisance-fitting procedure.}
Each synthetic profile is encoded in the application's input fields and fitted with the preprocessing and ridge-logistic models in Section~\ref{supp:emp-nuisance}. The simulations assess this procedure on synthetic covariate and outcome distributions rather than reproducing the application's empirical distribution. Five outer folds are stratified by initial treatment, and five inner folds by each model's response. Predictions are the raw probabilities at the penalty minimizing cross-validated binomial deviance.

The initial propensity $\pi$ is estimated by the same ridge-logistic procedure, despite its true value $1/2$. Only the oracle uses that known value. The fixed tuning path alone does not establish consistency or the nuisance rates in Section~\ref{sec:estimation}.

\paragraph{Estimation and inference.}
All scalar confidence intervals use the cross-fitted one-step estimate and its estimated influence standard error,
\[
 [\widehat T-1.96\widehat{\mathrm{se}}(\widehat T),\,
   \widehat T+1.96\widehat{\mathrm{se}}(\widehat T)].
\]
Complete endpoint derivatives include moving candidate coordinates, response weights and the principal numerator/mass ratio. Covariate averaging uses the sample rather than known profile probabilities. The oracle one-step estimator uses true nuisance functions throughout while retaining empirical covariate averaging. The plug-in comparison uses the same estimated law without correction and makes no interval claim.

For a target's identified interval $[T_L,T_U]$, a Bonferroni adjustment of the endpoint intervals gives the outward region
\[
 [\widehat T_L-z_{0.9875}\widehat{\mathrm{se}}_L,\,
   \widehat T_U+z_{0.9875}\widehat{\mathrm{se}}_U].
\]
Under Corollary~\ref{cor:restricted-inference}, this adjustment gives asymptotic simultaneous two-sided endpoint coverage of at least $95\%$. Whole-set enclosure needs only one tail per endpoint, so the same critical value conservatively gives asymptotic enclosure of at least $97.5\%$. Its length equals the estimated span $\widehat T_U-\widehat T_L$ plus sampling padding $z_{0.9875}(\widehat{\mathrm{se}}_L+\widehat{\mathrm{se}}_U)$. Containment of the generating causal effect remains a distinct, weaker event.

\begingroup
\small
\setlength{\tabcolsep}{4pt}
\begin{table}[t]
\centering
\caption{Point-estimation comparisons for every Baseline and Interior endpoint at $n=2000$, with $R=1000$ generated datasets per scenario. Plug-in and One-step use the same fitted nuisance functions. Oracle uses true nuisance functions throughout. Signed bias and RMSE use finite estimates. No plug-in interval is asserted.}
\label{tab:simulation-method-comparison}
\begin{tabular}{@{}lrrrrrr@{}}
\toprule
Endpoint & \multicolumn{2}{c}{Plug-in} & \multicolumn{2}{c}{One-step} & \multicolumn{2}{c}{Oracle} \\
\cmidrule(lr){2-3}\cmidrule(lr){4-5}\cmidrule(l){6-7}
 & Bias & RMSE & Bias & RMSE & Bias & RMSE \\
\midrule
\multicolumn{7}{@{}l}{\textbf{Baseline}} \\
Pop. L & -0.003 & 0.025 & -0.020 & 0.041 & 0.002 & 0.026 \\
Pop. U & -0.004 & 0.018 & 0.008 & 0.024 & 0.001 & 0.020 \\
Princ. L & 0.010 & 0.033 & -0.024 & 0.056 & 0.002 & 0.031 \\
Princ. U & 0.004 & 0.021 & 0.006 & 0.028 & 0.001 & 0.024 \\
\midrule
\multicolumn{7}{@{}l}{\textbf{Interior}} \\
Pop. L & -0.013 & 0.030 & -0.023 & 0.048 & -0.001 & 0.026 \\
Pop. U & -0.007 & 0.017 & -0.000 & 0.019 & 0.000 & 0.019 \\
Princ. L & 0.021 & 0.042 & -0.041 & 0.090 & -0.000 & 0.023 \\
Princ. U & -0.011 & 0.028 & -0.003 & 0.037 & 0.001 & 0.033 \\
\bottomrule
\end{tabular}
\end{table}
\endgroup

\paragraph{Additional scenarios.}
Table~\ref{tab:simulation-reduced-detail} gives the Covariates and History results at all three sample sizes. Baseline and Interior results appear in Table~\ref{tab:simulation-reduced-main}. The procedures and tuning rules are identical across these scenarios.

In Covariates, sampling error and interval width decrease with sample size, but population lower coverage is not monotone. It is $92.6\%$, $90.6\%$ and $93.2\%$ over the three sizes. At $n=32000$, the known-nuisance interval covers $95.0\%$, compared with $93.2\%$ for the fitted procedure. At this sample size, the two principal intervals cover near or above $95\%$.

History retains more pronounced lower-endpoint deficits despite shrinking bias. Population lower coverage is $90.1\%$, $89.8\%$ and $92.2\%$ over the three sizes. At $n=32000$, principal lower coverage is $92.9\%$. The population and principal lower known-nuisance coverages are $94.0\%$ and $94.5\%$, respectively.

\begingroup

\begingroup
\small
\setlength{\tabcolsep}{3pt}
\begin{longtable}{@{}rlrrrrrr@{}}
\caption{Covariates and History scenarios at all three sample sizes. Every row uses the application nuisance learner and the same one-step normal interval, with $R=1000$ generated datasets per cell. Truth is the identified endpoint.  Width is mean interval length.}\label{tab:simulation-reduced-detail}\\
\toprule
$n$ & Endpoint & Truth & Bias & RMSE & SD & Cov. (\%) & Width \\
\midrule\endfirsthead
\toprule
$n$ & Endpoint & Truth & Bias & RMSE & SD & Cov. (\%) & Width \\
\midrule\endhead
\bottomrule\endfoot
\multicolumn{8}{@{}l}{\textbf{Covariates}} \\*
2,000 & Pop. L & -0.0579 & -0.024 & 0.043 & 0.036 & 92.6 & 0.151 \\*
2,000 & Pop. U & 0.6498 & 0.006 & 0.022 & 0.021 & 91.8 & 0.078 \\*
2,000 & Princ. L & -0.0585 & -0.029 & 0.058 & 0.050 & 95.0 & 0.208 \\*
2,000 & Princ. U & 0.6506 & 0.005 & 0.027 & 0.027 & 92.6 & 0.097 \\*
\addlinespace[2pt]
8,000 & Pop. L & -0.0579 & -0.007 & 0.017 & 0.015 & 90.6 & 0.055 \\*
8,000 & Pop. U & 0.6498 & 0.000 & 0.010 & 0.010 & 94.6 & 0.038 \\*
8,000 & Princ. L & -0.0585 & -0.005 & 0.019 & 0.018 & 93.8 & 0.071 \\*
8,000 & Princ. U & 0.6506 & -0.000 & 0.012 & 0.012 & 94.8 & 0.047 \\*
\addlinespace[2pt]
32,000 & Pop. L & -0.0579 & -0.001 & 0.007 & 0.007 & 93.2 & 0.025 \\*
32,000 & Pop. U & 0.6498 & -0.000 & 0.005 & 0.005 & 95.3 & 0.019 \\*
32,000 & Princ. L & -0.0585 & -0.000 & 0.008 & 0.008 & 95.3 & 0.031 \\*
32,000 & Princ. U & 0.6506 & -0.000 & 0.006 & 0.006 & 96.6 & 0.023 \\
\midrule
\multicolumn{8}{@{}l}{\textbf{History}} \\*
2,000 & Pop. L & -0.0522 & -0.022 & 0.045 & 0.039 & 90.1 & 0.152 \\*
2,000 & Pop. U & 0.6484 & 0.006 & 0.023 & 0.022 & 90.8 & 0.079 \\*
2,000 & Princ. L & -0.0564 & -0.025 & 0.061 & 0.056 & 92.7 & 0.212 \\*
2,000 & Princ. U & 0.6555 & 0.005 & 0.027 & 0.027 & 93.4 & 0.098 \\*
\addlinespace[2pt]
8,000 & Pop. L & -0.0522 & -0.008 & 0.016 & 0.014 & 89.8 & 0.056 \\*
8,000 & Pop. U & 0.6484 & 0.001 & 0.010 & 0.010 & 93.8 & 0.039 \\*
8,000 & Princ. L & -0.0564 & -0.008 & 0.019 & 0.018 & 94.6 & 0.073 \\*
8,000 & Princ. U & 0.6555 & 0.001 & 0.012 & 0.012 & 93.7 & 0.047 \\*
\addlinespace[2pt]
32,000 & Pop. L & -0.0522 & -0.001 & 0.007 & 0.007 & 92.2 & 0.025 \\*
32,000 & Pop. U & 0.6484 & 0.000 & 0.005 & 0.005 & 94.5 & 0.019 \\*
32,000 & Princ. L & -0.0564 & -0.001 & 0.009 & 0.009 & 92.9 & 0.032 \\*
32,000 & Princ. U & 0.6555 & 0.000 & 0.006 & 0.006 & 94.4 & 0.023 \\
\end{longtable}
\endgroup
\endgroup

\paragraph{Paired diagnostic comparison.}\label{supp:diagnostic-replay}
The first $200$ datasets in each Baseline, Interior and History cell at $n=2000,8000,32000$ compare the application learner with saturated conditional Bernoulli estimates $(k+1/2)/(m+1)$, where $k$ successes occur among $m$ training observations in the conditioning cell. The same observations, outer folds, endpoint optimization and inference rules are used. These $1800$ diagnostic datasets form a subset of the study. Table~\ref{tab:simulation-diagnostic-replay} reports the comparison for three selected endpoints.

For each endpoint, the scaled influence remainder is $R_n=\sqrt n\{\widehat T-T_0-\Pn\varphi_{T,0}\}$. The principal true-influence average equals $(\widehat T_{\mathrm{or}}-T_0)\widehat m_{\mathrm{or}}/m_0$, with corrected empirical oracle mass $\widehat m_{\mathrm{or}}$ and true mass $m_0$. Population masses equal one. A separate drift calculation holds each fitted fold fixed, takes exact expectations under the finite generating law and combines them with fold-size weights. For principal targets it uses the ratio of expected corrected numerator and mass, not a conditional expectation of the cross-fitted ratio.

Saturated fitting has larger paired mean squared error in $33$ of $36$ scalar comparisons. At $n=32000$, History population lower coverage is $89.5\%$, $88.5\%$ and $92.5\%$ for application, saturated and oracle procedures on these same $200$ draws. The application--saturated difference is only one percentage point, with paired Monte Carlo standard error $1.42$ points. Mean fixed-fit drift remains negative for this endpoint with either learner. These results do not isolate ridge regularization as the cause of the deficits. 

\begingroup
\small
\setlength{\tabcolsep}{4pt}
\begin{table}[t]
\centering
\caption{Paired diagnostic comparison using the same $200$-draw subset (the first $200$ draws per cell) of the $R=1000$ study. Both learners use the same one-step endpoint procedure. RMS($R_n$) summarizes the scaled true-influence remainder. Coverage uses all $200$ draws. Other metrics use finite estimates, numbering $198$ and $196$ for the saturated Baseline and History rows at $n=2000$, and $200$ for every other displayed row.}
\label{tab:simulation-diagnostic-replay}
\begin{tabular}{@{}lrlrrr@{}}
\toprule
Case & $n$ & Learner & RMSE & Cov. (\%) & RMS($R_n$) \\
\midrule
Baseline pop. U & 2,000 & Application & 0.024 & 91.0 & 0.493 \\
Baseline pop. U & 2,000 & Saturated & 0.027 & 85.5 & 0.754 \\
Baseline pop. U & 32,000 & Application & 0.005 & 95.5 & 0.041 \\
Baseline pop. U & 32,000 & Saturated & 0.005 & 95.0 & 0.048 \\
\midrule
Interior princ. L & 2,000 & Application & 0.092 & 94.0 & 3.889 \\
Interior princ. L & 2,000 & Saturated & 0.114 & 88.0 & 5.015 \\
Interior princ. L & 32,000 & Application & 0.005 & 96.0 & 0.209 \\
Interior princ. L & 32,000 & Saturated & 0.007 & 96.5 & 0.920 \\
\midrule
History pop. L & 2,000 & Application & 0.044 & 91.5 & 1.750 \\
History pop. L & 2,000 & Saturated & 0.083 & 81.5 & 3.449 \\
History pop. L & 32,000 & Application & 0.007 & 89.5 & 0.526 \\
History pop. L & 32,000 & Saturated & 0.009 & 88.5 & 1.080 \\
\bottomrule
\end{tabular}
\end{table}
\endgroup

\section{Gym application definitions and implementation}\label{supp:empirical}

\subsection{Cohort and recorded sequence}\label{supp:gym-design}

The study cohort contains $1{,}932$ members, $963$ and $969$ in the respective arms. All $79$ intended/actual assignment mismatches remain included. The initial intervention is the intended assignment under the study's implementation process. The original protocol specifies equal allocation, but intended codes are available only for retained members. First-stage exchangeability is assumed in this population, with its propensity estimated from baseline covariates, since retention can induce selection.
The week-1, week-2 and week-5 attendance indicators and targets are defined in Section~\ref{sec:application}. The eight observed $(Z_1,S,Z_2)$ histories have $318$, $68$, $137$, $446$, $236$, $58$, $149$ and $520$ members in lexicographic order.

\subsection{Attendance interventions and identifying assumptions}\label{supp:emp-policy}

All quantities refer to the retained population. Under initial assignment $j$, intervention $a=0$ prescribes no week-2 visit to the fitness chain, whereas $a=1$ prescribes at least one visit. The initial program continues, including attendance-dependent rewards and subsequent unrestricted behavior. Assumption~\ref{ass:sutva} is imposed on this binary intervention. For each member and fixed initial assignment, all positive week-2 visit patterns are assumed to produce the same potential week-5 outcome $Y(j,1)$. This restriction excludes effects of visit number or timing beyond the attendance indicator and is not established by the weekly records. Assumption~\ref{ass:sutva} also requires no interference and consistency of the recorded first-week response.

With first-stage exchangeability in the retained population, LSI and positivity as in Assumptions~\ref{ass:lsi}--\ref{ass:positivity}, the binary mixture representation and sharp bounds apply. The effect budget restricts latent effect differences under LSI without relaxing consistency or retention exchangeability. Residual selection within response profiles instead requires the compatible selected and missing component means in Section~\ref{app:residual}.

\subsection{Nuisance estimation and compatibility}\label{supp:emp-nuisance}\label{supp:emp-reproduction}

Five outer folds, stratified by intended assignment, are shared across models and outcomes. Baseline variables are age, gender, state, recruitment cohort, enrollment day, new-member status and the four preceding weekly counts. The $1{,}726$ members with all four counts and those with incomplete baseline histories remain in the same analysis. Training medians impute numeric values, with missingness indicators retained. Natural splines with three degrees of freedom represent age, enrollment day and mean baseline attendance.

The primary learner uses ridge logistic regression. Response, later-action and outcome models include $Z_1$, $(Z_1,S)$ and $(Z_1,S,Z_2)$, respectively. History products and their interactions with baseline attendance and new-member status permit effect modification. Penalties minimize five-fold training cross-validated deviance over $60$ values with a minimum-to-maximum ratio of $10^{-4}$. Inner folds are stratified by each model's response. Preprocessing is learned on the outer training sample and reused during inner tuning. Additive and additive/interaction mixture fits provide comparisons.

All reported calculations use raw predictions. The primary margin is $\epsilon=0.005$. The $0.01$ margin is incompatible at one evaluation row, where a fitted later-action probability reaches $0.9904571$. The weaker margin uses the same cohort and fitted law, without probability clipping. The change followed inspection of the fit. The scientific margin restricts latent assignment probabilities; fitted compatibility does not establish that population restriction. Population endpoints average over all retained members. Principal numerators and masses are averaged separately before division.

\subsection{Sampling inference and comparisons}\label{supp:emp-comparators}\label{supp:emp-inference}

Table~\ref{tab:gym-spi} reports the plug-in and augmented SPI estimates for the population and always-attenders, $U=(1,1)$. The plug-in column gives the reference values in Figure~\ref{fig:gym-gamma}, while the standard errors and confidence intervals refer to the augmented estimates. The SPI estimands, estimates and intervals do not depend on the selected positivity margin.

\begin{table}[t]
\centering
\caption{SPI estimates of the effect of week-2 attendance on week-5 attendance under initial program assignment, at $\theta=1$. All values are percentage points. Standard errors (SE) and nominal pointwise $95\%$ confidence intervals (CI) refer to the augmented estimates and assume independent observations.}\label{tab:gym-spi}
\begin{tabular}{lrrrr}
\toprule
Target & Plug-in & Augmented & SE & $95\%$ CI\\\midrule
Population & 18.62 & 21.09 & 4.74 & $[11.81,30.38]$\\
Always-attenders & 23.24 & 26.04 & 7.44 & $[11.45,40.63]$\\
\bottomrule
\end{tabular}
\end{table}

Augmented SPI intervals use Theorem~\ref{thm:dml} and Corollary~\ref{cor:seq-dml} under their stated conditions. Restricted endpoints use Corollary~\ref{cor:restricted-inference}, including moving candidate derivatives. Table~\ref{tab:gym-restricted-inference} applies the nominal twelve-endpoint adjustment across both targets and $\gamma\in\{0,0.05,0.10\}$. Under the corollary's conditions, this adjustment is conservative for outward enclosure because it uses two-sided endpoint intervals. Its coverage result requires sensitivity inputs specified independently of fitted compatibility checks. Because $\epsilon$ was selected after inspecting the fitted law, that result does not establish nominal coverage for the restricted-endpoint intervals reported here.

\begin{table}[t]
\centering
\caption{One-step endpoint estimates, standard errors and nominal simultaneous $95\%$ outward regions. All entries, including $\gamma$, are percentage points, with $\theta=1$ and $\epsilon=0.005$. The twelve-endpoint formula uses critical value $z_{1-0.05/24}$. Nominal coverage after selecting $\epsilon$ from the data is not established (Section~\ref{supp:emp-inference}).}\label{tab:gym-restricted-inference}
\begin{tabular}{lrrrr}
\toprule
Target & $\gamma$ & Lower (SE) & Upper (SE) & Outer enclosure\\\midrule
Population & 0 & -16.12 (2.63) & 39.62 (4.44) & $[-23.66,52.33]$\\
Population & 5 & -16.71 (2.58) & 42.08 (4.42) & $[-24.11,54.75]$\\
Population & 10 & -17.66 (2.45) & 45.21 (4.39) & $[-24.68,57.78]$\\
Principal & 0 & -16.23 (3.76) & 44.87 (7.08) & $[-27.00,65.15]$\\
Principal & 5 & -16.47 (3.71) & 48.35 (7.04) & $[-27.10,68.53]$\\
Principal & 10 & -17.07 (3.48) & 52.28 (7.00) & $[-27.04,72.35]$\\
\bottomrule
\end{tabular}
\end{table}

Every displayed outer enclosure contains zero. Across all $15$ model/partition fits, all $180$ restricted endpoints are numerically available. For the five interaction fits, the $\gamma=0.10$ upper estimates range from $45.21$ to $46.30$ percentage points for the population and $52.01$ to $55.58$ for the principal group. The zero-budget principal lower estimate varies by $6.90$ points across these interaction fits. The ten largest contributions account for $48.5\%$ and $68.4\%$ of the primary SPI population and principal influence variances. These diagnostics describe variation across fits and influence concentration, without establishing interval calibration.

The richer-history comparator replaces the binary first-week response in the second-stage regressions by its exact released count $L$, integrating over its natural distribution under $Z_1=1$. It uses the same binary attendance intervention and requires consistency, retained-population first-stage exchangeability, positivity and sequential mean exchangeability given $(\X,L)$. The population estimate is $19.05$ percentage points with standard error $5.30$, and the paired richer-minus-coarse estimate is $-2.04$ with standard error $2.23$. The released count includes eleven values of eight, retained as supplied. Sparse high-count categories produce a maximum fitted later-action probability of $0.9987$. This comparison uses different identifying assumptions and cannot validate LSI.

\subsection{Joint coupling and effect-budget sensitivity}\label{supp:emp-joint-sensitivity}

Table~\ref{tab:gym-joint-sensitivity} extracts $\theta\in\{1/4,1,4\}$ and $\gamma\in\{0.10,2\}$ from a grid of $81$ common odds ratios in $[1/16,16]$. For binary outcomes, $\gamma=2$ imposes no effect-difference restriction. The baseline integration measure and raw fitted conditional probabilities remain unchanged. Each principal mass is $\Pn e_{11}(\X;\theta)$, and its endpoint is the mean fitted numerator divided by that mass. The mass does not depend on $\gamma$. The grid describes identification sensitivity without sampling intervals.

\begin{table}[t]
\centering
\small
\caption{Joint sensitivity of fitted-law identification regions at $\epsilon=0.005$. Bounds are in percentage points. The principal mass is the fitted proportion of always-attenders. Unrestricted permits arbitrary latent effect differences. These are exploratory identification calculations, without sampling intervals.}\label{tab:gym-joint-sensitivity}
\begin{tabular}{llrrr}
\toprule
$\theta$ & Effect budget & Population & Principal & Principal mass (\%)\\\midrule
$1/4$ & 10 pp & $[-15.46,40.95]$ & $[-18.50,48.69]$ & $46.24$\\
$1/4$ & Unrestricted & $[-18.32,59.53]$ & $[-28.80,82.96]$ & $46.24$\\
$1$ & 10 pp & $[-19.01,42.89]$ & $[-19.69,48.76]$ & $49.43$\\
$1$ & Unrestricted & $[-22.42,63.38]$ & $[-28.97,81.88]$ & $49.43$\\
$4$ & 10 pp & $[-19.58,42.87]$ & $[-18.92,47.24]$ & $53.57$\\
$4$ & Unrestricted & $[-23.85,64.79]$ & $[-25.74,78.03]$ & $53.57$\\
\bottomrule
\end{tabular}
\end{table}

The unrestricted population benchmark varies $e_{11}(x)$ over its feasible coupling interval at each covariate value, keeping response margins fixed. Its region is $[-25.67,67.24]$ percentage points, lying within the free-binary-confounder region $[-28.88,70.43]$. The calculation uses all finite knots from the unrestricted binary result in Section~\ref{supp:unspecified-computation}, including interior couplings. Population assignment-endpoint sufficiency at fixed coupling does not restrict the coupling search to the two coupling endpoints. This benchmark concerns population bounds without an effect restriction.

\section{Proofs}\label{supp:proofs}
The proof of Theorem~\ref{thm:contraction} appears in Section~\ref{sec:contraction}. The remaining proofs are collected here, using the notation and assumptions of their corresponding statements.

\subsection{Proof of Lemma~\ref{lem:plackett}}\label{supp:proof-plackett}
\begin{proof}
With $B$ and $\Delta$ as in \eqref{eq:plackett}, the four scores are nonnegative exactly on the feasible coupling interval. In its interior,
$\mathrm{OR}(e)=e(1-p_0-p_1+e)/\{(p_0-e)(p_1-e)\}$ increases continuously from zero to infinity. Its unique solution at a finite positive $\theta$ satisfies
\[
 F(e)=(\theta-1)e^2-Be+\theta p_0p_1=0,\qquad F_e=-\sqrt\Delta<0.
\]
This chooses the root in \eqref{eq:plackett} and gives $\partial e/\partial p_z=-F_{p_z}/F_e$. The affine score relations and the piecewise affine formulas at $\theta=0$ and $\infty$ give the remaining derivatives.
\end{proof}

\subsection{Proof of Theorem~\ref{thm:unified-contrast}}\label{supp:proof-unified-contrast}
\begin{proof}
Work conditionally on $\X$ in a cell with $0<w<1$. Assignment compatibility gives
$wh+(1-w)h_0=q$, hence \eqref{eq:unified-h-interval}. LSI gives the full mixture equation $F_a=t_aF_{a1}+(1-t_a)F_{a0}$. Equivalently, the subprobability measure $t_aF_{a1}$ has mass $t_a$ and is dominated by $F_a$. Its smallest and largest first moments select the lower and upper $t_a$ fractions of $F_a$, with fractional allocation of boundary atoms. Therefore
$L_{F_a}(t_a)/t_a\le m_a(1)\le U_{F_a}(t_a)/t_a$.
Convex interpolation between the extremizing submeasures attains every intervening mean; the remaining component is $(F_a-t_aF_{a1})/(1-t_a)$. Substitution of $m_a(1)=\mu_a+(1-t_a)\delta_a$ gives exactly \eqref{eq:unified-delta-interval}. Averaging the two component means gives \eqref{eq:unified-means}. A pure cell has no latent mixture and retains its observed conditional law.

These cellwise conditions are jointly sufficient. Given $\X$, draw $U$ from the fixed response coupling and, conditional on $(\X,U)$, draw each potential outcome independently from its constructed component law. Draw $Z_1$ independently of this vector given $\X$ using its observed propensity. Set $S=U_{Z_1}$, draw $Z_2$ using the chosen cell-specific latent propensity, and set $Y=Y(Z_1,Z_2)$. This reproduces every observed conditional distribution and satisfies both stages of LSI. In particular, separate potential outcomes may have independently chosen component laws, while both second-stage arms share the same assignment probability in each cell.

The cell contribution to $T_c$ is
\[
\sum_a\{wc_{a1}m_a(1)+(1-w)c_{a0}m_a(0)\}
=\sum_a\{A_a\mu_a+B_a(h)\delta_a\}.
\]
For fixed $h$, linearity in each $\delta_a$ gives its endpoint optimization in \eqref{eq:unified-contrast-bounds}. The feasible $h$ interval is compact, and positive $\epsilon$ places its tilted shares inside $(0,1)$. Integrated quantiles are continuous, so both extrema are attained. Regular conditional outcome kernels exist on the stated standard Borel space. A jointly measurable conditional quantile $Q_a(\X,r)$ gives the lower allocated submeasure
\[
A^-_{a,t}(B)=\int_0^t\ind{Q_a(\X,r)\in B}\,dr,
\]
with the upper allocation integrating over $[1-t,1]$. These kernels have mass $t$, are dominated by $F_a$, and fractionally allocate atoms. Interpolating them gives every feasible component mean, using interpolation weight zero when their first moments coincide. The residual submeasure gives the other component law. The feasible graph in $(h,\delta_0,\delta_1)$ is measurable with nonempty compact sections, and the objective is continuous in these variables. A measurable maximizing or minimizing selection therefore exists, including after intersection with a local closed restriction having measurable graph and containing zero. Bounded means and coefficients supply an integrable envelope even when component weights approach zero. The selected kernels have a common full-data completion as above, so the integrals of the conditional extrema are themselves attained.

For completeness, in the unrestricted feasible set of Theorem~\ref{thm:unified-contrast}, scalar values between the extrema are also attained. Write each feasible contrast as
$\delta_a=\ell_{F_a}(t_a)+r_a\{u_{F_a}(t_a)-\ell_{F_a}(t_a)\}$ with $r_a\in[0,1]$. Interpolate the lower and upper optimizers linearly in $(h,r_0,r_1)$, using $r_a=0$ when its interval is a singleton. This gives continuous conditional means along a common interpolation parameter. Bounded outcomes and coefficients justify dominated convergence after integration. The resulting scalar functional is continuous and connects its lower and upper extrema, so the intermediate value theorem supplies every intervening value. This argument concerns scalar identified sets and does not assert convexity of the joint vector of causal means. Under the effect-budget restriction, interval attainment follows instead from the scaling argument in Section~\ref{supp:proof-effect-restrictions}.
\end{proof}

\subsection{Proof of Proposition~\ref{prop:shared-tilt}}\label{supp:proof-shared-tilt}
\begin{proof}
The contraction in Theorem~\ref{thm:contraction} applies simultaneously to both outcome arms, so their population cell laws can be preserved at a common assignment endpoint. At each such endpoint, the latent component laws for the two outcome arms can be optimized separately, giving $J^+_\epsilon$ and $J^-_\epsilon$ in \eqref{eq:shared-tilt-formula}. Averaging their conditional contrast limits gives \eqref{eq:shared-tilt-region}. Sequences with different initial arms allow separate assignments. Theorem~\ref{thm:unified-contrast} supplies measurable full-law attainment in each case.

For a single sequence mean, the same contraction leaves only the two endpoints of its feasible component-share interval. Optimizing the target-arm component laws at these endpoints gives $D^\pm_{F,\epsilon}$, and averaging gives \eqref{eq:full-law-region}. Theorem~\ref{thm:unified-contrast} supplies full-law attainment and every intermediate scalar value.
\end{proof}

\subsection{Proof of Corollary~\ref{cor:binary-principal-bounds}}\label{supp:proof-binary-principal}
\begin{proof}
In a binary mixture, $tm+(1-t)m'=\mu$ with $m,m'\in[0,1]$ gives the stated component limits. At fixed $h$, the target component has proportion $t_a(h)$ for $v=1$ and $1-t_a(h)$ for $v=0$. Both potential-outcome component laws may be chosen separately under the same assignment, giving the displayed conditional contrast limits.

The feasible assignment interval is partitioned at all points satisfying $t_a(h)=\mu_a$ or $t_a(h)=1-\mu_a$. Complementing the mixing proportion leaves this set of points unchanged. On each resulting interval, an upper-contrast branch is a constant plus nonnegative multiples of $1/t_a(h)$ or $1/\{1-t_a(h)\}$. These denominators are positive and affine in mixed cells, so each upper-contrast branch is convex in $h$. Each lower-contrast branch is correspondingly concave. Its required maximum or minimum is therefore attained at a partition endpoint. This proves the finite candidate rule, including coincidences and zero coefficients. Integration with principal-score weights follows from Theorem~\ref{thm:unified-contrast} with the positive, fixed-coupling normalization $\E(e_u)$.

In Example~\ref{ex:principal-endpoint-failure}, the upper objective is $1/\{2(1-h)\}$ for $h\le1/2$ and $1/(2h)$ for $h\ge1/2$. Its maximum is one at $h=1/2$. Reflection gives the lower bound. The stated deterministic potential outcomes reproduce both observed Bernoulli laws under a latent Bernoulli response of probability one half and independent second treatment of probability one half. Other response cells and first-period arms can be completed using the theorem's construction, so the example embeds in a full longitudinal law.
\end{proof}

\subsection{Proof of Proposition~\ref{prop:finite-effect-budget}}\label{supp:proof-finite-effect-budget}
\begin{proof}
For binary outcomes,
\[
\ell_a(t)=\max\left\{-\frac{\mu_a}{1-t},-\frac{1-\mu_a}{t}\right\},
\qquad
u_a(t)=\min\left\{\frac{1-\mu_a}{1-t},\frac{\mu_a}{t}\right\}.
\]
Partition the feasible $h$ interval at the feasible solutions of
$t_a(h)\in\{\mu_a,1-\mu_a\}$. On each resulting interval every rectangle
edge is a constant divided by a positive affine function of $h$. The at most twelve vertices are those described in Corollary~\ref{cor:effect-heterogeneity-bounds}. For each formula, retain the interval endpoints,
all roots at which one of its feasibility inequalities becomes an equality,
and all stationary points of its scalar target. Evaluate only feasible
candidates. An inequality or derivative that vanishes identically contributes
no roots. Repeated roots are retained once, and a singleton assignment interval
requires only its sole value. A vertex feasible only at an isolated point is retained by its
feasibility equalities; a constant objective on a feasible interval is
attained at its boundary. Thus the list also covers degeneracies.

All required nontrivial equations have degree at most two. A feasibility
comparison subtracts at most two reciprocal affine functions and a constant.
Its denominator is positive, so clearing it gives a quadratic. A corner
objective is a constant plus at most two reciprocal affine functions, whose
derivative has a quadratic numerator. A strip-edge objective is a constant
plus one reciprocal affine function and a linear term; its derivative also
has a quadratic numerator. Every smooth interior extremum is stationary, and
every other extremum lies at one of the retained boundaries. This proves the
finite algorithm for population or principal linear targets. Principal
normalization occurs after integrating the cell contributions. At
$\gamma\ge2$, Corollary~\ref{cor:binary-principal-bounds} supplies the simpler
unrestricted candidate list.
\end{proof}

\subsection{Proof of Lemma~\ref{lem:mixture}}\label{supp:proof-mixture}
\begin{proof}
Consistency and initial ignorability give the first posterior. Total probability and Bayes' rule give the next two identities, and LSI permits averaging the unselected component outcomes to obtain the last.
\end{proof}

\subsection{Proof of Proposition~\ref{prop:bias}}\label{supp:proof-bias}
\begin{proof}
The mixture identity in Lemma~\ref{lem:mixture} gives
\[
 \mu_{\zbar}-\sum_u w_um_{\zbar}
 =\frac{\Cov_w(m_{\zbar},h^{(z_2)})}{q_{z_2}}
 =\frac{(-1)^{1-z_2}w(1-w)\delta_{\zbar}\rho}{q_{z_2}}.
\]
Integrating gives (i). Solving $wh_1+(1-w)h_0=q_1$ and $h_1-h_0=\rho$ gives (ii). Integrability and positivity justify the expectations.
\end{proof}

\subsection{Proof of Proposition~\ref{prop:full-law-compatibility}}\label{supp:proof-full-law-compatibility}
\begin{proof}
The allocation argument of Theorem~\ref{thm:unified-contrast} gives \eqref{eq:outcome-compatibility}. For binary outcomes, $m_1=\mu+(1-t)\delta$ and $m_0=\mu-t\delta$ in $[0,1]$ give \eqref{eq:binary-delta-compatibility}. The construction extends to integrable outcomes since uniform positivity gives $\E|Y(j,a)|\le\epsilon^{-2}\E|Y|$.
\end{proof}

\subsection{Proof of Proposition~\ref{prop:bounds}}\label{supp:proof-support-bounds}
\begin{proof}
The notation $f,g,t^\pm_\epsilon,I_\epsilon$ is that of Section~\ref{supp:sens-est}. At a positive margin, LSI transfers observed support to every positive-score component. Maximizing $(w-t)(m_1-m_0)$ gives the decreasing $f(t)$ for $t\le w$ and increasing $g(t)$ for $t\ge w$, hence the endpoint rule. Theorem~\ref{thm:unified-contrast} makes these support constraints exact for endpoint-supported laws and outer bounds otherwise.

Increasing $\epsilon$ raises $t^-_\epsilon$ and lowers $t^+_\epsilon$. Under nondegeneracy, strict monotonicity of $f,g$ and dominated convergence give $I_\epsilon\to I_0$. Since $t^-<w<t^+$ in mixed cells, each positive margin excludes the limiting extrema, strictly shrinking the integrated interval when such cells have positive probability. The nested union over $0<\epsilon\le\epsilon^*$ gives (ii), since any law with a larger margin also has margin $\epsilon^*$. Nondegeneracy is essential. At $\mu=y_{\min}$, every interior tilt forces both means to $y_{\min}$, whereas zero selected mass can hide a different mean; the analogous issue holds at $y_{\max}$. Equation~\eqref{eq:bias} and construction L5 in Section~\ref{app:relations} prove (iii).
\end{proof}

\subsection{Proof of Corollary~\ref{cor:effect-heterogeneity-bounds}}\label{supp:proof-effect-restrictions}
\begin{proof}
Theorem~\ref{thm:unified-contrast} characterizes all compatible component
means. Intersecting its two mean intervals with the stated restriction is
therefore necessary and sufficient. The intersection is a compact polygon
containing zero, and its linear extrema occur at its listed vertices, also
when it degenerates to a segment or point. Conditional component laws
attaining these means, and their measurable full-law construction, are
provided by that theorem. Scaling an optimizing $\bm d$ towards zero at its
fixed $h$ preserves the polygon and connects its target value to the common
observed-law value. Applying this to the lower and upper optimizers proves
attainment of every scalar intermediate value after averaging. The effect
restriction is preserved by common contractions, so
Theorem~\ref{thm:contraction} gives the population endpoint
rule. The remaining claims follow directly from the rectangle's largest
absolute coordinate difference and $|\delta_a|\le y_+-y_-$.
\end{proof}

\subsection{Proof of the finite search over unspecified couplings}\label{supp:proof-unspecified-coupling}
\begin{proof}
The notation and candidate partition are those of Section~\ref{supp:unspecified-computation}. Each directional mean-deviation envelope, between its Bernoulli trimming knots, is constant or has form
\[
\frac{K u}{1-cu},\qquad K\ge0,\qquad
\frac{d^2}{du^2}\frac{K u}{1-cu}=
\frac{2Kc}{(1-cu)^3}\ge0.
\]
Inside each partition interval the feasible assignment-boundary labels and active envelope branches are fixed. Each boundary's summed deviation is convex in $r$, as are its maximum over boundaries and the response-weighted sum. Its maximum is therefore at a partition endpoint. Pure-cell values follow by continuity and equal the observed contrast.

Choosing the least extremising candidate is measurable. The other initial arm can be completed with identical component outcome laws and observed assignment propensities. Theorem~\ref{thm:unified-contrast} therefore supplies full-law attainment over all measurable couplings. Each fixed-coupling interval contains the same population SPI functional, so their union is an interval.
\end{proof}

\subsection{Proof of Corollary~\ref{thm:union}}\label{supp:proof-population-identification}
\begin{proof}
Under \eqref{eq:union}, every mixed-cell term in the bias formula of Proposition~\ref{prop:bias} is zero, and pure cells also have zero bias. Hence $\psi(j,a)=\psi_g(j,a)$. SPI sets $\delta_{(j,a)}=0$, while LPH sets $\rho=0$. The argument holds for every response coupling compatible with the observed margins, so no coupling specification is needed.
\end{proof}

\subsection{Proof of Corollary~\ref{thm:pce}}\label{supp:proof-principal-identification}
\begin{proof}
Under SPI, the components in \eqref{eq:tilted-mixture} coincide. Hence $m_{\zbar}(u,\X)=\mu_{\zbar}(u_{z_1},\X)$ on positive-score strata, and averaging over the marginal law of $\X$ with weights $e_u(\X)/\E(e_u)$ gives \eqref{eq:pce-id}. Contrasts follow by subtraction.
\end{proof}

\subsection{Proof of Proposition~\ref{prop:relations}}\label{supp:proof-relations}
\begin{proof}
Under LPH the latent propensity is constant in each response cell. Summing the LSI factorization over $U$ then factors the joint law of $Z_2$ and the entire outcome vector, giving SI. Equation~\eqref{eq:mean-si} and Proposition~\ref{prop:bias} prove the mean and bias claims. The constructions in Section~\ref{app:relations} and Remark~\ref{rem:nonnested} verify the remaining claims explicitly.
\end{proof}

\subsection{Proof of Proposition~\ref{prop:residual-selection}}\label{supp:proof-residual-selection}
\begin{proof}
Consistency now gives $\mu_{\zbar}=\sum_u\widetilde w_u^{z_2}(m_{\zbar}+r_{\zbar})$. Subtract the untilted mean and apply the covariance identity in Proposition~\ref{prop:bias}. The residual weights sum to one, giving the bounds. A stratum-specific allowance has sequence radius
\[
 \E\sum_s p_{z_1}^s\sum_u\widetilde w_u^{z_2}\eta_{\zbar}(u,\X).
\]
\end{proof}

\subsection{Proof of Proposition~\ref{prop:residual-compatibility}}\label{supp:proof-residual-compatibility}
\begin{proof}
Let $n_v$ be the potential-outcome mean among those receiving the target second treatment and $a_v^{\rm mis}$ its mean among those receiving the other treatment, conditional on $(\X,Z_1=z_1,U=u^{(v)})$. The identities
\[
\mu=tn_1+(1-t)n_0,\qquad n_1-n_0=\delta_{\zbar}+r_1-r_0,
\qquad m_v=k_vn_v+(1-k_v)a_v^{\rm mis}=n_v-r_v
\]
give \eqref{eq:residual-components}. Assignment compatibility gives selected shares $t$ and $1-t$. The selected component distributions must mix to $F$, and Proposition~\ref{prop:full-law-compatibility} characterizes precisely their attainable means by the trimmed-mean condition. Since $F$ is supported on $[a,b]$, its component distributions have that support. The missing distributions require their means in $[a,b]$, proving necessity.

For sufficiency, choose selected component distributions with means $n_v$ that mix to $F$, using the allocation construction of Proposition~\ref{prop:full-law-compatibility}, and choose missing distributions on $[a,b]$ with means $a_v^{\rm mis}$. Draw $(\X,U)$ with the specified principal scores. Given $(\X,U)$, draw indicators $A_0,A_1$ with $\Prob(A_j=1\mid\X,U)=h_U(j,\X)$. For every $(j,a)$, draw $Y(j,a)$ from its selected component law when $A_j=a$ and from its missing law otherwise. The same $A_j$ is used for both potential outcomes in arm $j$. Finally draw $Z_1$ independently of this vector given $\X$ with its observed propensity, set $Z_2=A_{Z_1}$, $S=U_{Z_1}$ and $Y=Y(Z_1,Z_2)$. This joint construction reproduces every observed conditional outcome law and all specified means and residuals, and preserves first-stage ignorability and positivity. Point masses at the missing means suffice without support restrictions. Uniform positivity, bounded residuals and an integrable observed outcome ensure integrability of that construction.
\end{proof}

\subsection{Proof of the common cross-fitting expansion}\label{supp:common-crossfit}
\begin{proof}
For independent identically distributed observations, use fixed $K$ folds with $n_k/n$ bounded away from zero, fit outside each evaluation fold and average contributions with weights $n_k/n$. This follows the cross-fitting construction of \citet{chernozhukov2018}. Suppose, for every required $G$, its foldwise fitted contribution is $L_2(\Prob)$ consistent and
\[
\E(\widehat{\mathcal C}_G\mid\text{training fold})-\E G=o_{\Prob}(n^{-1/2}).
\]
For folds assigned independently of the observations, conditional Chebyshev gives
$\sqrt{n_k}(\Pn^k-\E)(\widehat{\mathcal C}_G-\mathcal C_G)=o_{\Prob}(1)$.
The same conclusion holds for the stratified scheme in Section~\ref{supp:distribution-scores}. Write $\omega_z=\Prob(Z_1=z)>0$, let $n_{kz}$ count observations with $Z_1=z$ in fold $k$, and put $\Delta_G=\widehat{\mathcal C}_G-\mathcal C_G$. Conditional on all initial-treatment labels and the training observations, uniform splitting within each stratum leaves the evaluation observations independent with laws $\Prob(\cdot\mid Z_1=z)$. Conditional Chebyshev therefore applies after centering within strata. Since $n_{kz}/n_k-\omega_z=O_{\Prob}(n^{-1/2})$, the difference from centering under $\Prob$ is
\[
\sum_z\left(\frac{n_{kz}}{n_k}-\omega_z\right)
\E(\Delta_G\mid Z_1=z,\text{training})
=o_{\Prob}(n^{-1/2}),
\]
where $L_2(\Prob)$ consistency makes each conditional mean $o_{\Prob}(1)$. These expectations integrate a new observation while holding the fitted function fixed.
Summing the fixed folds yields
$\Pn\widehat{\mathcal C}_G-\E G=(\Pn-\E)\mathcal C_G+o_{\Prob}(n^{-1/2})$.
Applying this jointly to numerators and masses and expanding at $D>0$ proves the ratio expansion in \eqref{eq:common-ratio}. The multivariate central limit theorem gives joint normality. Conditional $L_2$ consistency also gives $\Pn(\widehat\phi-\phi)^2=o_{\Prob}(1)$, so empirical influence covariances consistently estimate their population covariance. Estimator covariance is this matrix divided by $n$. No Donsker condition is needed for this route. The inference results in Section~\ref{supp:inference-statements} supply their own drift, score-continuity and domain conditions.
\end{proof}

\subsection{Proof of Theorem~\ref{thm:dml}}\label{supp:proof-eif-pce}\label{supp:proof-dml}
\begin{proof}
With $s^*=u_j$ and $\mu=\mu_{(j,a)}(s^*,\X)$, the choices $A=e_u\mu$ and $m=e_u$ in \eqref{eq:common-contribution} give $\mathcal C_A=\mu W_u+e_uB_{(j,a),s^*}$ and $\mathcal C_m=W_u$, where $W_u$ is defined in Section~\ref{app:eif-pce}. Equation~\eqref{eq:common-ratio} proves the formula. Positivity and bounded score derivatives give square integrability. SPI imposes no additional observed-law restriction on the stipulated support, so bounded observed-law scores generate a dense tangent space and the gradient is efficient.

For the estimation statement, assume (R1)--(R2) and retain the notation of Section~\ref{app:mr}, with $\zbar=(j,a)$ and $\psi_u=\psi_u(j,a)$. The positive mass and consistent nuisance fits keep the estimated denominator away from zero with probability tending to one. The fitted residual $B'$ is $B_{(j,a),s^*}$ evaluated at $\Gamma'$.

Use primes for $\Gamma'=(\pi',p',q',\mu')$, with $\mu'=\mu'_{(j,a)}(s^*,\X)$, and define
\begin{align*}
r'_u=r_u(\X;p')&=e_u(p)-e_u(p')-\sum_z\dot e_{u,z}(p')(p_z-p'_z),& |r'_u|&\le C|p-p'|^2,\\
c'&=\sum_z\dot e_{u,z}(p')(\pi_z/\pi'_z-1)(p_z-p'_z),&
R'&=\frac{\pi_jp_j^{s^*}q_a(j,s^*,\X)}{\pi'_jp_j'^{\,s^*}q'_a(j,s^*,\X)}.
\end{align*}
The first bound is Taylor's theorem on $\calK$. With
$\phi(\calO;\Gamma',t)=(\mu'-t)W'_u+e'_u B'$,
conditional expectations give the exact identities
\begin{equation}\label{eq:phi-affine}
\E\phi(\calO;\Gamma',t)
=b(\Gamma')-\{t-\psi_u(\zbar)\}D(\Gamma'),\qquad
D(\Gamma')=\E(e_u-r'_u+c'),
\end{equation}
\begin{equation}\label{eq:exact-bias}
b(\Gamma')=-\E\{(\mu'-\psi_u)r'_u\}
+\E\{(\mu'-\psi_u)c'\}
+\E\{(\mu'-\mu)(e_u-e'_uR')\}.
\end{equation}
Indeed $\E(A'_z\mid\X)=(\pi_z/\pi'_z)(p_z-p'_z)$ and
$\E(B'\mid\X)=R'(\mu-\mu')$.
Write $r_p=\|p'-p\|_2$, and likewise $r_\pi,r_q,r_\mu$. Since
$|c'|\le C|\pi-\pi'||p-p'|$ and
$|e_u-e'_uR'|\le C(|p-p'|+|\pi-\pi'|+|q-q'|)$,
\begin{equation}\label{eq:bias-bound}
|b(\Gamma')|\le C\{r_p(r_p+r_\pi)+r_\mu(r_\pi+r_p+r_q)\}.
\end{equation}

The exact drift and \eqref{eq:dml-rates} give orthogonality and negligible bias. For the fitted score difference $\Delta=\phi(\calO;\widehat\Gamma,\psi_u)-\phi(\calO;\Gamma,\psi_u)$, bounded nuisance factors and (R2) give
\[
\E(\Delta^2\mid\text{training})\le C\|\widehat\Gamma-\Gamma\|_4^2=o_{\Prob}(1),
\]
using $\|h\|_4^2\le\|h\|_\infty\|h\|_2$ and bounded suprema. Uniform convergence is not required here. The common cross-fitting argument in Section~\ref{supp:common-crossfit} proves the result. All nuisance rates $o_{\Prob}(n^{-1/4})$ suffice, but \eqref{eq:dml-rates} retains the sharper product structure.
\end{proof}

\subsection{Proof of Corollary~\ref{cor:seq-dml}}\label{supp:proof-sequence-inference}
\begin{proof}
For $\zbar=(j,a)$, write $\mu_s=\mu_{(j,a)}(s,\X)$ and use primes for fitted quantities. The fitted centered score below is evaluated at the true target $\psi_g(\zbar)$.
For $R'_s=\pi_jq_a(j,s,\X)/(\pi'_jq'_a(j,s,\X))$, the exact sequence drift is
\begin{equation}\label{eq:sequence-drift}
\E\varphi^{g\prime}_{\zbar}
=\E\sum_s\left\{(p_j'^{\,s}-p_j^s)\mu'_s
\frac{\pi'_j-\pi_j}{\pi'_j}
+p_j^s(\mu'_s-\mu_s)(1-R'_s)\right\}.
\end{equation}
Cauchy--Schwarz bounds it by the products in the rate condition of Corollary~\ref{cor:seq-dml}. Score continuity and the argument in Section~\ref{supp:common-crossfit} apply with unit mass.
\end{proof}

\subsection{Proof of Theorem~\ref{thm:distribution-inference}}\label{supp:proof-distribution-inference}
\begin{proof}
The notation and conditions are those of Section~\ref{supp:distribution-regularity}. For a conditional law put $H^-(c,t)=tc-\E_F(c-Y)_+$ and
$H^+(c,t)=tc+\E_F(Y-c)_+$. Their respective maximum and minimum occur at the lower and upper cutoffs, giving
\begin{equation}\label{eq:tail-hinges}
L_F(t)=H^-(Q^-,t),\qquad U_F(t)=H^+(Q^+,t).
\end{equation}
At a continuous cutoff $\partial_cH^\sigma=0$; a cutoff inside a fixed atom is locally unchanged. Differentiating the law and $t$ therefore gives the two terms of \eqref{eq:tail-gradients}.

For continuous cutoffs the density bounds yield the quantitative estimates
\begin{equation}\label{eq:cutoff-remainders}
\begin{aligned}
|H^\sigma(c,t)-R_F^\sigma(t)|&\le\tfrac12\overline f|c-Q_F^\sigma(t)|^2,\\
|Q_F^\sigma(t')-Q_F^\sigma(t)|&\le |t'-t|/\underline f,\\
|R_F^\sigma(t')-R_F^\sigma(t)-Q_F^\sigma(t)(t'-t)|&\le |t'-t|^2/(2\underline f).
\end{aligned}
\end{equation}
The first inequality integrates the difference between the conditional distribution function and its target probability between $c$ and $Q$; the others integrate the inverse-density bound. The last inequality also holds with $Q_F^\sigma(t')$ in the linear term. For a preserved cutoff atom, $R_F^\sigma$ is affine in $t$ and all these remainders vanish. Uniform consistency retains the specified neighbourhoods or atoms.

For a bounded centred score $\ell$, the path $d\Prob_\eta=(1+\eta\ell)d\Prob$ changes conditional probabilities and bounded moments uniformly by $O(|\eta|)$. Strict observed positivity preserves compatibility at the latent margin. Separation preserves assignment and optimizer branches and, on fixed finite support, the strict trimming gap without moving atom locations. These paths supply every bounded score in the observed-law model on its stipulated support. The derivative of $R_r$ is $Q_r\partial t_r$ plus the conditional-law derivative represented by $W_r\xi_r$. Applying \eqref{eq:score-conv} to $\E K$ gives the complete derivatives \eqref{eq:distribution-total-derivative} and gradient \eqref{eq:distribution-eif}. Bounded derivatives justify differentiation under expectation. Density of bounded scores in $L_0^2(\Prob)$ proves efficiency.

For one fitted fold, work on the event of preserved branches and cutoffs and put
$\Delta\nu=\widehat\nu-\nu$, $\Delta R_r=\widehat R_r-R_r$ and $\Delta Q_r=\widehat Q_r-Q_r$.
Base residuals centred at fitted coordinates but using true denominators have conditional means $-\Delta\nu_l$. Equations~\eqref{eq:tail-hinges}--\eqref{eq:cutoff-remainders} similarly give
\begin{equation}\label{eq:tail-drift}
\E(\widehat\xi_r\mid\X,j_r,s_r,a_r)
=-\Delta R_r+\widehat Q_r(\widehat t_r-t_r)
+O(|\Delta Q_r|^2+|\Delta\nu|^2).
\end{equation}
One obtains this bound by expanding at the true-law cutoff for $\widehat t_r$ and then between the true and fitted tilts. The inverse-density bound controls their cross-products. Backward Taylor expansion of $K$ at its fitted arguments cancels the linear base and tail errors against these residual means. Its remaining tilt term cancels the second term of \eqref{eq:distribution-total-derivative}, since
$\widehat t_r-t_r=\nabla t_r(\widehat\nu)^\top\Delta\nu+O(|\Delta\nu|^2)$.
The integrated remainder with true denominators is consequently at most
$C\E\{|\Delta\nu|^2+\sum_r(|\Delta R_r|^2+|\Delta Q_r|^2)\}$.

Replacing denominators by fitted ones multiplies residual means by bounded smooth probability ratios. Cauchy--Schwarz bounds the added error by $C(r_n^2+s_nr_n)$. In particular, \eqref{eq:tail-drift} introduces no first-order cutoff error. Thus the fitted contribution has drift $O_{\Prob}(r_n^2+s_nr_n)=o_{\Prob}(n^{-1/2})$. Boundedness and Lipschitz hinge functions give $L_2$ score convergence on the preserved event, and bounded contributions control its vanishing-probability complement. The common cross-fitting argument in Section~\ref{supp:common-crossfit} proves the joint expansion and covariance consistency.
\end{proof}

\subsection{Proof of Corollary~\ref{cor:restricted-inference}}\label{supp:proof-restricted-inference}
\begin{proof}
On the preserved charts of Section~\ref{supp:restricted-inference}, the numerator and mass have bounded Lipschitz gradients. Base residuals centered at fitted coordinates but using true denominators have conditional means $\nu-\widehat\nu$. Taylor expansion about $\widehat\nu$ therefore cancels the linear error in each fitted contribution. Replacing the denominators by fitted values introduces bounded smooth probability ratios, so
\[
\left|\E(\widehat{\mathcal C}_G\mid\text{training})-\E G\right|
\le C(r_n^2+s_nr_n),\qquad G\in\{A_k,m_k\}.
\]
Bounded residuals and Lipschitz gradients give $L_2$ score consistency. Uniform nuisance consistency preserves the charts with probability tending to one. The ratio argument in Section~\ref{supp:common-crossfit} and the union bound then prove the claims.
\end{proof}

\end{document}